\theoremstyle{plain}
\newtheorem{thm}{\protect\theoremname}
\theoremstyle{remark}
\theoremstyle{definition}
\newtheorem{defn}{\protect\definitionname}
\theoremstyle{plain}
\newtheorem{lem}{\protect\lemmaname}
\providecommand{\definitionname}{Definition}
\providecommand{\lemmaname}{Lemma}
\providecommand{\remarkname}{Remark}
\providecommand{\theoremname}{Theorem}
\global\long\def\comp#1{#1^{\mathsf{c}}}
\begin{document}

\title{Constructing likelihood functions for interval-valued random variables}

\author{X. Zhang\footnote{School of Mathematics and Statistics, University of New South Wales, Sydney, Australia},\:\,
B. Beranger$^*$\:
and
S. A. Sisson$^*$\footnote{Corresponding Author: Email \tt{Scott.Sisson@unsw.edu.au}}}
\date{}
\maketitle

\baselineskip1.9em

\begin{abstract}
There is a growing need for flexible methods to analyse interval-valued data, which can provide 
efficient data representations for very large datasets.
However, existing descriptive frameworks to achieve this ignore the process by which 
interval-valued data are typically constructed; namely by the aggregation of real-valued 
data generated from some underlying process.
In this article we develop the foundations of likelihood based statistical inference for 
intervals that directly incorporates the underlying data generating procedure into the
analysis. That is, it permits the direct fitting of models for the underlying real-valued 
data given only the interval-valued summaries.
This generative approach overcomes several problems associated with existing methods, 
including the rarely satisfied assumption of within-interval uniformity. The new methods
are illustrated by simulated and real data analyses.
\end{abstract}

\noindent Keyords:  Aggregate data; Interval-valued data; Likelihood theory; Symbolic data analysis.

%%%%%%%%%%%%%%%%%%%%%%%%%%%%%%%%%%%%%%%%%%%%%%%%%
%%%%%%%%%%%%%%%%%%%%%%%%%%%%%%%%%%%%%%%%%%%%%%%%%
\section{Introduction}
%%%%%%%%%%%%%%%%%%%%%%%%%%%%%%%%%%%%%%%%%%%%%%%%%
%%%%%%%%%%%%%%%%%%%%%%%%%%%%%%%%%%%%%%%%%%%%%%%%%

As we move inevitably towards a more data-centric society, there is a growing need for 
the ability to analyse data that are constructed in non-standard forms, rather than 
represented as continuous points in $\mathbb{R}^p$ \cite{Billard2003}. The simplest and
most popular of these is interval-valued data.

Interval-valued observations can arise naturally through the data recording process, and 
essentially result as a way to characterise measurement error or uncertainty of an 
observation. Examples include blood pressure, which is typically reported as an interval 
due to the inherent continual changes within an individual \cite{Billard2006}; data 
quantisation, such as rounding or truncation, which results in observations being known 
to lie within some interval  \cite{mclachlan+j88,vardeman+l05}; and the expression of 
expert-elicited intervals that contain some quantity of interest \shortcite{lin+cs17,fisher+olmkbc15}, 
among others.

The use of intervals as a summary representation of a collection of classical real-valued 
data is also rapidly gaining traction.
Here the aggregation of a large and complex dataset into a smaller collection of 
suitably constructed intervals can enable a statistical analysis that would otherwise be 
computationally unviable \cite{Billard2003}. Where interest in the outcome of an analysis 
exists at the level of a group, rather than at an individual level, interval-valued data 
provide a convenient group-level aggregation device 
\cite{limaneto+c10,Noirhomme-Fraiture2011}. 
Similarly, aggregation of individual observations within an interval structure allows for 
some preservation of privacy for the individual \shortcite{domingues+dc10}.

The earliest systematic study of interval-valued data is in numerical analysis, where
\citeN{Moore1966} used intervals as a description for imprecise data.
Random intervals are also special cases of random sets \cite{Molchanov2005},
the theory of which brings together elements of topology, convex geometry, and 
probability theory to develop a coherent mathematical framework for their analysis. 
\citeN{Matheron1975} gave the first self-contained development of statistical models for 
random sets, including central limit theorems and a law of large numbers, 
and \citeN{Beresteanu2008} 
derived these limit theorems specifically for random intervals.
In this framework, interval-valued random variables 
$[X]=\left[\underline{X},\overline{X}\right]\subset\mathbb{R}$ 
are modelled as a bivariate real-valued random vector 
$\left(\underline{X},\overline{X}\right)$, 
where $\underline{X}\leq\overline{X}$, using standard inferential techniques.
This approach is also used for partially identified models, 
where the object of economic and statistical interest is a set rather than a point 
\shortcite{Beresteanu2012,Molchanov2014}. 
In probabilistic modelling, \citeN{Lyashenko1983} introduced normal random compact convex
sets in Euclidean space,
and showed that a normal random interval is simply a Gaussian displacement of a fixed
closed bounded interval. \citeN{Sun2014} subsequently extended this idea to normal 
hierarchical models for random intervals.

A more popular framework for the analysis of interval-valued data, 
and one which we focus on here, is symbolic data analysis \cite{Billard2006}.
Symbols can be considered as distributions of real-valued data points in $\mathbb{R}^{p}$,
such as intervals and histograms, or more general structures including lists.
They are typically constructed as the aggregation into summary form of real-valued
data within some group, and so the symbol is interpreted as taking values as described by 
the summary distribution.
As a result, symbols have internal variations and structures which do not exist in 
real-valued data, and methods for analysing them must account for within-symbol variation 
in addition to between symbol variation.
In practice, the most common form of symbol is the interval or its $p$-dimensional 
extension, the $p$-hyper-rectangle.
See \citeN{Billard2003}, \citeN{Billard2006} and \citeN{Noirhomme-Fraiture2011} for a review of recent results.

While many exploratory and descriptive data analysis techniques for symbolic data have 
been developed (see e.g.~\citeNP{Billard2006} for an overview), there is a paucity of 
results for developing a robust statistical inferential framework for these data. The most 
significant of these \cite{Le-Rademacher2011} maps the parameterisation of the symbol 
into a real-valued random vector, and then uses the standard likelihood framework to 
specify a suitable model. In the random interval setting, this is equivalent to the 
random set theory approach, which models the interval-valued random variables
$[X]=\left[\underline{X},\overline{X}\right]\subset\mathbb{R}$ 
by the constrained real-valued random vector $\left(\underline{X},\overline{X}\right)\in\mathbb{R}^2$ or, 
more commonly, a reparameterisation  to the unconstrained interval centre and half-range 
$(X_c,X_r)=\left((\underline{X}+\overline{X})/2,(\overline{X}-\underline{X})/2\right)$, 
which is then more easily modelled, e.g.~$(X_c,\log X_r)\sim N_2(\mu,\Sigma)$. This 
likelihood framework has been used for the analysis of variance \cite{Brito2012}, time 
series forecasting \shortcite{Arroyo2010} and interval-based regression models \cite{xu10} 
among others.

While sensible, by nature the above methods for modelling 
real-valued random variables only permit 
descriptive modelling at the level of the real-valued random 
vector $(\underline{X},\overline{X})$ 
(or its equivalent for $p$-hyper-rectangles).  However this descriptive approach completely 
ignores the data generating procedure commonly assumed and implemented for the construction of 
observed intervals; namely that the underlying real-valued data are produced from some 
data generating model $X_1,\ldots,X_m\sim f(x_1,\ldots,x_m|\alpha)$, and the interval is then 
constructed via some aggregation process, e.g.~$\underline{X}=\min\{X_k\}$ and 
$\overline{X}=\max\{X_k\}$.  If interest is then in fitting the underlying data generating model 
$f(x_1,\ldots,x_m|\alpha)$ for inferential or predictive purposes, while only observing 
interval-valued data rather than the underlying real-valued dataset, or in having the 
interpretation of model parameters be independent of the form of the interval 
construction process, then the above descriptive models will be inadequate.
Further, the existing descriptive models for random intervals typically assume that the 
distribution of latent data points within the interval is uniform. Under the above 
data generating procedure, except in specific cases this will almost always be untrue. This assumption 
is generally accepted as a false in practice, but is typically ignored.

In this article we develop the 
% theoretical and 
methodological foundations of statistical 
models for interval-valued data that are directly constructed from an assumed underlying data generating 
model $f(x_1,\ldots,x_m|\alpha)$ and a data aggregation function $\varphi(\cdot)$ that maps 
the space of real-valued data to the space of intervals.

To the best of our knowledge, this represents the first attempt to move beyond the 
restrictive descriptive models which are prevalent in the literature, and provide an 
inferential framework that aligns with the generative interval-construction process that 
is  typical in practice. In addition to providing more directly interpretable parameters, 
it also provides a natural mechanism for departure from the uncomfortable 
uniformity-within-intervals assumption of descriptive models.

% After establishing a measurable space for a random interval $[X]$ in 
% Section~\ref{sec:distr} (of which, Section~\ref{sec:measure} may be omitted on a first 
% reading), 

In Section~\ref{sec:distr}, after establishing the containment distribution function, $F_{[X]}(\cdot)$, 
for random intervals $[X]$ based on the idea of containment functionals \cite{Molchanov2005}, 
we demonstrate the one-to-one mapping between $F_{[X]}(\cdot)$ and $f_{[X]}(\cdot)$, which is 
the density function of the bivariate real-valued random vector $(\underline{X},\overline{X})$, thereby lending some support to the current best practice for modelling random intervals. All proofs are provided in the Supporting Information.
In Section~\ref{sec:generative}, these results naturally lead to the construction of likelihood 
functions for generative models that are directly constructed from likelihood functions 
for the underlying real-valued data. We demonstrate the recovery of 
existing results on the distribution of the order statistics of a random sample under 
certain conditions. We are also 
able to show that a limiting case of the derived generative models results in a valid 
descriptive model in the sense of \citeN{Le-Rademacher2011}, implying that existing descriptive models in fact have a direct 
interpretation in terms of an underlying generative model.

All results are naturally extended from intervals to $p$-hyper-rectangles in 
Section~\ref{sec:hyper}.
In Section~\ref{sec:examples}, we contrast the performance of generative and 
descriptive models for interval-valued random variables on both simulated data, and for 
a reanalysis of a credit card dataset previously examined by \citeN{Brito2012}. Here we 
establish that the use of existing descriptive models to analyse interval-valued data 
constructed under a data generating process (which is typical in practice), can result in 
misinterpretable and biased parameter estimates,
and poorer overall fits to the observed interval-valued data than 
those obtained under generative models.
We also examine the robustness of the generative model to model and aggregation function mis-specification.
Finally, we conclude with a discussion.

%%%%%%%%%%%%%%%%%%%%%%%%%%%%%%%%%%%%%%%%%%%%%%%%%
%%%%%%%%%%%%%%%%%%%%%%%%%%%%%%%%%%%%%%%%%%%%%%%%%
\section{Distributions of Random Intervals\label{sec:distr}}
%%%%%%%%%%%%%%%%%%%%%%%%%%%%%%%%%%%%%%%%%%%%%%%%%
%%%%%%%%%%%%%%%%%%%%%%%%%%%%%%%%%%%%%%%%%%%%%%%%%

We first investigate the distribution for a random (closed) interval 
$[X]=[\underline{X},\overline{X}]$ defined on the space of 
$\mathbb{I}=\{[x,y] \colon -\infty < x \leq y < +\infty\}$. The current practice of 
constructing models for $[X]$ is by constructing models for 
the two real-valued random variables $\underline{X}$ and $\overline{X}$ with 
$\underline{X}\leq\overline{X}$ \cite{Le-Rademacher2011}. We term this approach 
the \emph{descriptive model}.

Throughout this article,
we only consider closed intervals (hyper-rectangles). Results for other types of intervals 
(hyper-rectangles) can be obtained in a similar way. We denote a vector of $m<\infty$ 
real-valued random variables by 
$X_{1:m}=\left(X_{1},\ldots,X_{m}\right)'$, 
where $X_{k}\in\mathbb{R}$ for $k=1,\ldots,m$, 
and $x_k$ is a realisation of $X_k$.
We can then 
define a data aggregation function $\varphi\colon\mathbb{R}^{m}\mapsto\mathbb{I}$ that 
maps a vector $X_{1:m}$ to the space of intervals $\mathbb{I}$
via $\left[X\right]=\varphi(X_{1:m})$, so that $[X]$ is a random (closed) interval.
For example, a useful specification for random intervals might construct the bivariate 
real-valued random variable $(\underline{X},\overline{X})$ from the minimum ($\underline{X}$) 
and maximum ($\overline{X}$) of the components of $X_{1:m}$. 

%%%%%%%%%%%%%%%%%%%%%%%%%%%%%%%%%%%%%%%%%%%%%%%%%
\subsection{Descriptive models}
%%%%%%%%%%%%%%%%%%%%%%%%%%%%%%%%%%%%%%%%%%%%%%%%%
\label{sec:descriptive}

A descriptive model treats $[X]=[\underline{X},\overline{X}]$ as a bivariate real-valued
random variable $(\underline{X},\overline{X})$ with $\underline{X}\leq\overline{X}$. 
We write $f_{[X]}(\underline{x},\overline{x}|\alpha) = f(\underline{x},\overline{x}|\alpha)$
as the likelihood function of $\left(\underline{X},\overline{X}\right)$,
where $f(\underline{x},\overline{x}|\alpha)$ is a valid density function and $\alpha$ denotes
the parameter vector of interest. 
Rather than construct models directly on $(\underline{X},\overline{X})$ with the awkward 
constraint $\underline{X}\leq\overline{X}$, a simpler approach is to remove this 
constraint through reparameterisation. For example, defining the interval 
centre $X_{c}=\frac{\underline{X}+\overline{X}}{2}$ and half-range 
$X_{r}=\frac{\overline{X}-\underline{X}}{2}$, we obtain
$
f_{[X]}(\underline{x},\overline{x}|\alpha)=\frac{1}{2}
g(\frac{\underline{x}+\overline{x}}{2},\frac{\overline{x}-\underline{x}}{2}|\alpha)
$, 
where $g(x_{c},x_{r}|\alpha)$ is a density function for $X_{c}$ and $X_{r}$.

Most existing methods to model random intervals 
(e.g.~\shortciteNP{Arroyo2010,Le-Rademacher2011,Brito2012})
can be classified as descriptive models. 
Their interpretation is simple and they are convenient to use. 
However, by construction they are only models for interval endpoints, and as a 
consequence have limitations in providing information about the distribution of the 
latent data points $X_{1:m}$.

In both symbolic data analysis \cite{Billard2006} and 
theory of random sets \cite{Molchanov2005}, the distribution of $[X]$
can be uniquely identified by a density function for a 
bivariate real-valued random variables, i.e.~$f(\underline{x},\overline{x})$ with
$\underline{x}\leq\overline{x}$.

%%%%%%%%%%%%%%%%%%%%%%%%%%%%%%%%%%%%%%%%%%%%%%%%%
\subsection{Containment distribution functions}
%%%%%%%%%%%%%%%%%%%%%%%%%%%%%%%%%%%%%%%%%%%%%%%%%
\label{sec:cdf}

In the theory of random sets, two types of 
functionals, the capacity functional and the containment functional,
are commonly used to identify a unique distribution for random sets. For 
random intervals, the capacity functional and the containment functional 
are $T_{[X]}([x])=P([X]\cap[x])$ and $C_{[X]}^{\star}([x])=P([X]\subset[x])$,
respectively. 

In the present setting, we consider a variant of the containment functional,
$C_{[X]}([x])=P([X]\subseteq[x])$,
which is more convenient for model construction. 
Due to its similarity to $C^\star_{[X]}(\cdot)$ in both functionality and interpretation, 
we still refer to $C_{[X]}(\cdot)$ as the 
containment functional throughout this article.

Similar to $C_{[X]}^{\star}(\cdot)$, a \emph{containment functional}
of a random interval $[X]$ is a functional $C_{[X]}\colon\mathbb{I}\mapsto[0,1]$
having the following properties:
\begin{enumerate}
\item[i)] $C_{[X]}([\underline{x},\overline{x}]) \to 1$,
when $\underline{x} \to -\infty$ and $\overline{x} \to +\infty$;
\item[ii)] If $[x_{1}] \supseteq [x_{2}] \supseteq \cdots \supseteq [x_{n}] \supseteq \cdots$
and $\cap_{n=1}^{\infty}[x_{n}]\in\mathbb{I}$, 
then 
\[
\lim_{n \to \infty}C_{[X]}([x_{n}])=C_{[X]}(\cap_{n=1}^{\infty}[x_{n}]);
\]
\item[iii)] For any $[x] \subseteq [y]$, $C_{[X]}([x]) \leq C_{[X]}([y])$ and
\[
C_{[X]}([y]) - C_{[X]}([\underline{y},\overline{x}]) - 
C_{[X]}([\underline{x},\overline{y}]) + C_{[X]}([x]) \geq 0.
\]
\end{enumerate}
However, it is more convenient to work
with functions defined on the real plane, 
so we equivalently define the \emph{containment distribution function} as 
$F_{[X]}(\underline{x},\overline{x})=C_{[X]}([x])$.
\begin{defn}
\label{def:cdf-intvl}The containment distribution function 
$F_{\mathrm{[X]}}\colon\mathbb{R}^{2}\mapsto[0,1]$ of the random interval $[X]$ 
has the following properties: 
\begin{enumerate}
\item[i)] $F_{[X]}(-\infty,+\infty)=1$ and $F_{[X]}(\underline{x},\overline{x})=0$
when $\underline{x}>\overline{x}$;
\item[ii)] $F_{[X]}(\underline{x},\overline{x})$ is left-continuous
in $\underline{x}$ and right-continuous in $\overline{x}$;
\item[iii)] $F_{[X]}(\underline{x},\overline{x})$ is non-increasing in $\underline{x}$, 
and non-decreasing in $\overline{x}$;
\item[iv)] For $\underline{y} \leq \underline{x} \leq \overline{x} \leq \overline{y}$,
$
F_{[X]}(\underline{y},\overline{y}) - F_{[X]}(\underline{y},\overline{x}) - 
F_{[X]}(\underline{x},\overline{y}) + F_{[X]}(\underline{x},\overline{x}) \geq 0.
$
\end{enumerate}
\end{defn}

The containment distribution function of $[X]$ can
be obtained by integration of a valid density function for random intervals.
\begin{thm}
\label{thm:pdf-intvl}Provided that   
$f_{[X]} \colon \mathbb{R}^{2} \mapsto \mathbb{R}$
is the density function of a random interval $[X]$, the containment distribution function
of $[X]$ can be derived as 
$F_{[X]}(\underline{x},\overline{x})=
\int_{\underline{x}}^{\overline{x}}\int_{\underline{x}}^{\overline{x}}\!
f_{[X]}(\underline{x}^\prime,\overline{x}^\prime)\,
\mathrm{d}\underline{x}^\prime\mathrm{d}\overline{x}^\prime$. 
\end{thm}

Conversely, the density function of $[X]$ can
be obtained by differentiation of a containment distribution function.  
\begin{thm}
\label{thm:pdf-derivative}Let $F_{[X]} \colon \mathrm{\mathbb{R}^{2} \mapsto [0,1]}$
be the containment distribution function of  a random interval $[X]$. If $F_{[X]}(\cdot)$
is twice differentiable, then the density function of $[X]$ is 
\begin{equation}
    f_{[X]}(\underline{x},\overline{x}) = 
        -\frac{\partial^{2}}{\partial\underline{x}\partial\overline{x}}
        F_{[X]}(\underline{x},\overline{x}).
\label{eq:pdf-intvl}
\end{equation}
\end{thm}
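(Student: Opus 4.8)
The plan is to verify directly that the candidate function $g(\underline{x},\overline{x}) = -\partial^{2}F_{[X]}/\partial\underline{x}\partial\overline{x}$ is the Radon--Nikodym derivative $\mathrm{d}P_{[X]}/\mathrm{d}\mu_{\mathbb{I}}$ of the law $P_{[X]}$ of $[X]$ on $(\mathbb{I},\mathscr{B}_{\mathbb{I}})$, i.e.\ that $P_{[X]}(E)=\int_{E}g\,\mathrm{d}\mu_{\mathbb{I}}$ for every $E\in\mathscr{B}_{\mathbb{I}}$. Two structural facts carry the argument. First, the isometry $h$ turns $\mu_{\mathbb{I}}$-integration into ordinary Lebesgue integration on the half-plane $\{(\underline{x},\overline{x}):\underline{x}\le\overline{x}\}$, via $\mu_{\mathbb{I}}(E)=\iint_{h(E)}\mathrm{d}\underline{x}\,\mathrm{d}\overline{x}$ (established just before Theorem~\ref{thm:nbhd}); moreover $g\ge 0$ by property~(iv) of Definition~\ref{def:cdf-intvl}, so $E\mapsto\int_{E}g\,\mathrm{d}\mu_{\mathbb{I}}$ is a genuine measure. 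Second, by Theorem~\ref{thm:sigma-algebra} the collection $\mathcal{F}=\{\{[x']\subseteq[x]\}\}$ generates $\mathscr{B}_{\mathbb{I}}$, and it is closed under intersection since $\{[x']\subseteq[x_{1}]\}\cap\{[x']\subseteq[x_{2}]\}=\{[x']\subseteq[x_{1}]\cap[x_{2}]\}$, hence is a $\pi$-system. By the uniqueness theorem for measures agreeing on a generating $\pi$-system it therefore suffices to check the identity on $E\in\mathcal{F}$ and to confirm that both sides have total mass one.

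Next I would carry out the verification on a single $E=\{[x']\subseteq[x]\}$. The left-hand side is immediate: $P_{[X]}(\{[x']\subseteq[x]\})=C_{[X]}([x])=F_{[X]}(\underline{x},\overline{x})$ by definition of the containment distribution function. For the right-hand side, $h(E)$ is the triangle $T=\{(\underline{x}',\overline{x}'):\underline{x}\le\underline{x}'\le\overline{x}'\le\overline{x}\}$, so I must show $\iint_{T}\bigl(-\partial^{2}F_{[X]}/\partial\underline{x}'\partial\overline{x}'\bigr)\,\mathrm{d}\underline{x}'\,\mathrm{d}\overline{x}'=F_{[X]}(\underline{x},\overline{x})$. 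Writing the iterated integral with $\overline{x}'$ running from $\underline{x}'$ to $\overline{x}$ and then $\underline{x}'$ from $\underline{x}$ to $\overline{x}$, the inner integration by the fundamental theorem of calculus yields $\partial_{\underline{x}'}F_{[X]}(\underline{x}',\underline{x}')-\partial_{\underline{x}'}F_{[X]}(\underline{x}',\overline{x})$, and the outer integration then produces $F_{[X]}(\underline{x},\overline{x})$ together with the two boundary contributions $\int_{\underline{x}}^{\overline{x}}\partial_{\underline{x}'}F_{[X]}(t,t)\,\mathrm{d}t$ and $-F_{[X]}(\overline{x},\overline{x})$, both supported on the diagonal $\underline{x}'=\overline{x}'$.

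The main obstacle is precisely these diagonal boundary terms, and the observation that disposes of them is property~(i) of Definition~\ref{def:cdf-intvl}: $F_{[X]}(\underline{x}',\overline{x}')=0$ whenever $\underline{x}'>\overline{x}'$, since the event $\{\underline{X}\ge\underline{x}',\overline{X}\le\overline{x}'\}$ is then empty. The continuity implied by twice differentiability forces $F_{[X]}(t,t)=0$ for all $t$, so $F_{[X]}(\overline{x},\overline{x})=0$; and because the right-hand difference quotient of $F_{[X]}$ in its first argument vanishes identically along the diagonal, the existing partial must satisfy $\partial_{\underline{x}'}F_{[X]}(t,t)=0$, killing the remaining integral. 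Both diagonal contributions therefore drop out, leaving exactly $F_{[X]}(\underline{x},\overline{x})$, as required. Letting $[x]\to(-\infty,+\infty)$ gives total mass $F_{[X]}(-\infty,+\infty)=1$ on both sides, so the uniqueness argument closes and $g$ is the Radon--Nikodym derivative.

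Finally, I would cross-check the formula against the infinitesimal picture of Theorem~\ref{thm:nbhd}. The probability assigned to the neighbourhood $\mathrm{d}[x]$ is, by inclusion--exclusion, the second mixed difference $F_{[X]}(\underline{x}-\mathrm{d}\underline{x},\overline{x}+\mathrm{d}\overline{x})-F_{[X]}(\underline{x}-\mathrm{d}\underline{x},\overline{x})-F_{[X]}(\underline{x},\overline{x}+\mathrm{d}\overline{x})+F_{[X]}(\underline{x},\overline{x})$, while $\mu_{\mathbb{I}}(\mathrm{d}[x])=\mathrm{d}\underline{x}\,\mathrm{d}\overline{x}$ by Theorem~\ref{thm:nbhd}. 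Dividing and passing to the limit, which twice differentiability of $F_{[X]}$ permits via a Taylor expansion, returns $-\partial^{2}F_{[X]}/\partial\underline{x}\partial\overline{x}$, evaluated at a non-degenerate $[x]$ so that no diagonal subtlety intervenes. This confirms the density formula and is consistent with the non-negativity guaranteed by property~(iv) of Definition~\ref{def:cdf-intvl}.
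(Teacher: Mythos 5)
Your proof is correct, but it takes a genuinely different route from the paper's. The paper argues \emph{locally}: it writes the infinitesimal neighbourhood $\mathrm{d}[x]$ as a set difference of $W$-sets, obtains $P_{[X]}(\mathrm{d}[x])$ as the second mixed difference of $F_{[X]}$ by inclusion--exclusion and continuity of measure, Taylor-expands to get $P_{[X]}(\mathrm{d}[x]) = -\partial^{2}_{\underline{x}\,\overline{x}}F_{[X]}\,\mathrm{d}\underline{x}\,\mathrm{d}\overline{x} + o(\mathrm{d}\underline{x}\,\mathrm{d}\overline{x})$, and identifies the Radon--Nikodym derivative as this ratio of infinitesimals (essentially your final ``cross-check'' paragraph, promoted to the whole proof). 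You instead argue \emph{globally}: you take $g = -\partial^{2}_{\underline{x}\,\overline{x}}F_{[X]}$ as a candidate, verify $P_{[X]}(E)=\int_{E}g\,\mathrm{d}\mu_{\mathbb{I}}$ on the generating class $\mathcal{F}$ by an explicit iterated integration over the triangle $h(E)$, kill the diagonal boundary terms using property~(i) of Definition~\ref{def:cdf-intvl} (a subtlety the paper's proof never has to confront, since its neighbourhood is anchored at a non-degenerate interval), and close with the $\pi$-system uniqueness theorem for finite measures. Your approach buys more rigor: it verifies the defining identity of the Radon--Nikodym derivative on \emph{all} Borel sets, whereas the paper's absolute-continuity step (``$P_{[X]}(\mathrm{d}[x])=0$ when $\mu_{\mathbb{I}}(\mathrm{d}[x])=0$'') only inspects infinitesimal rectangles rather than arbitrary null sets, and its identification of the derivative from a pointwise ratio is left informal. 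The paper's approach, in exchange, is shorter, more intuitive, and reuses the machinery already built (Theorem~\ref{thm:nbhd} and the $W$-set identities). Two small points to tighten in yours: $\mathcal{F}$ is a $\pi$-system only after adjoining $\varnothing$ (the case $[x_{1}]\cap[x_{2}]=\varnothing$), and the interchange of the double and iterated integrals should be justified by Tonelli, which your observation $g\geq 0$ (from property~(iv)) makes immediate.
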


Given the data generating process, $F_{[X]}(\underline{x},\overline{x})$ can 
be naturally constructed from the generative framework, where $[X]=\varphi(X_{1:m})$, 
by noting that the two events,
$\{\varphi(X_{1:m}) \subseteq [x]\}$
and $\{[X] \subseteq [x]\} $, are equal. 
If $\varphi$ is measurable, we may compute the probability
of $\{ [X]\subseteq[x]\} $ via $P(\varphi(X_{1:m})\subseteq[x])$,
given the distribution of latent data points $X_{1:m}$. Accordingly, the containment 
distribution function of $[X]$ can be constructed as 
\begin{equation}
F_{[X]}(\underline{x},\overline{x})=P(\varphi(X_{1:m})\subseteq[x]).\label{eq:intvl-gen}
\end{equation}

Note that $[X]$ degenerates to a scalar random variable when it only contains a single
point, i.e.~when $\underline{X}=\overline{X}=X$, and so $P([X]\subseteq[x])=P(X\in[x])$ 
identifies the distribution
of a univariate real-valued random variable. In the generative framework, a 
univariate real-valued random variable is produced
when either $m=1$, or when $X_1=\cdots=X_m=X$ for $m>1$.
Accordingly, this
theory for random intervals is consistent with standard statistical theory.
For the following sections
we assume that the data aggregation function $\varphi(\cdot)$
is always measurable.

%%%%%%%%%%%%%%%%%%%%%%%%%%%%%%%%%%%%%%%%%%%%%%%%%
\subsection{Density functions}
%%%%%%%%%%%%%%%%%%%%%%%%%%%%%%%%%%%%%%%%%%%%%%%%%
\label{sec:density}

We can formally establish the distribution of random intervals by 
constructing a measurable space of $\mathbb{I}$. 

\begin{thm}
\label{thm:cdf-intvl}The containment
distribution function $F_{[X]}$ determines a unique distribution of $[X]$, 
such that $P([X]\subseteq[x])=F_{[X]}(\underline{x},\overline{x})$ 
for all $[x]\in\mathbb{I}$.
\end{thm}
From the above, $1-F_{[X]}(\underline{x},+\infty)$ and $F_{[X]}(-\infty,\overline{x})$
are the marginal cumulative distribution functions of the lower bound $\underline{X}$
and the upper bound $\overline{X}$, respectively.

The density function of $[X]$ is formally defined as the
Radon-Nikodym derivative \cite{Durrett:2010} of a probability measure on $\mathbb{I}$ 
over the uniform measure as the reference measure, as described in Theorem~\ref{thm:pdf-derivative}.

Note that a valid density function of $[X]$ is also a density function for a bivariate 
real-valued random variable. Being able to express the density function 
$f_{[X]}(\underline{x},\overline{x})$ 
of the random interval $[X]$ as the joint density of two real-valued random variables, 
$\underline{X}$ and $\overline{X}$, justifies those
existing (descriptive) methods for modelling random intervals 
(e.g.~\shortciteNP{Arroyo2010,Le-Rademacher2011,Brito2012} -- see Section~\ref{sec:descriptive})
that directly specify a joint distribution for 
$\underline{X},\overline{X} | \underline{X} \leq \overline{X}$, 
or some reparameterisation that circumvents bounding the parameter space.

%%%%%%%%%%%%%%%%%%%%%%%%%%%%%%%%%%%%%%%%%%%%%%%%%
\section{Generative models}
%%%%%%%%%%%%%%%%%%%%%%%%%%%%%%%%%%%%%%%%%%%%%%%%%
\label{sec:generative}

One approach for constructing models for $[X]$
is by constructing models for the two real-valued random variables $\underline{X}$
and $\overline{X}$ with $\underline{X}\leq\overline{X}$, i.e.~descriptive models. 
While it can describe the structure and variation between 
intervals, it is unable to model the distribution of latent data points within an 
interval, as it is simply a model for the interval endpoints. This approach is almost 
universal in the symbolic data analysis literature.
As an alternative we develop the \emph{generative model}, which is constructed directly at 
the level of the latent data points $X_{1:m}$ through the data aggregation 
function $\varphi(\cdot)$. In the following, we use $F_{[X]}(\cdot)$ and $f_{[X]}(\cdot)$ 
for interval-valued random variables, and  $F(\cdot)$ and $f(\cdot)$ for real-valued 
random variables.

A generative model of the random interval may be constructed bottom up from the
distribution of latent data points $X_{1:m}$ and the data aggregation function 
$\varphi(\cdot)$, based on (\ref{eq:intvl-gen}). 
Here, the random interval $[X]$ is constructed from $X_{1:m}$ and $\varphi(\cdot)$
via $[X]=\varphi(X_{1:m})$. If $f(x_{1:m}|\alpha)$ is
the likelihood function of the $m$ data points, then from (\ref{eq:intvl-gen}) we may 
form the containment distribution function of $[X]$  as
\begin{equation}
    F_{[X]}(\underline{x},\overline{x} | \alpha,m)=
        \int_A\!f(x_{1:m} | \alpha)\,\mathrm{d}x_{1:m},
        \label{eq:cdf-g}
\end{equation}
where $A=\{\varphi(x_{1:m}) \subseteq [x]\}$
denotes the collection of $x_{1:m}$, for which the corresponding interval
is a subset of or equal to $[x]$. If $\varphi(\cdot)$ is continuous,
the containment distribution function  (\ref{eq:cdf-g}) is twice differentiable, and so from
(\ref{eq:pdf-intvl}) its  contribution to the likelihood function would be
\begin{equation}
    f_{[X]}(\underline{x},\overline{x} | \alpha,m) = -
    \frac{\partial^{2}}{\partial\underline{x}\partial\overline{x}}\int_A\!
    f(x_{1:m} | \alpha)\,\mathrm{d}x_{1:m}.
\label{eq:lik-g}
\end{equation}
Note that containment distribution functions (\ref{eq:cdf-g}) and density 
functions (\ref{eq:lik-g}) of generative models contain a parameter 
$m$ specifying the number of latent data points 
within $[X]$.

When $m$ is large, the evaluation of (\ref{eq:lik-g}) can be challenging as it involves a
high dimensional integration. This integration can be simplified in the case where 
$X_{1:m}$ is a sequence of $\mathrm{i.i.d.}$ random variables with $X_{k}\sim f(x|\theta)$
for $k=1,\ldots,m$. We denote the likelihood function of $[X]$ with the $\mathrm{i.i.d.}$ 
latent data points by 
\begin{equation}
f_{[X]}^\star(\underline{x},\overline{x} | \theta,m)=
-\frac{\partial^{2}}{\partial\underline{x}\partial\overline{x}}
\int_A\prod_{k=1}^{m}f(x_{k}|\theta)\,\mathrm{d}x_{1:m},
\label{eq:lik-g-iid}
\end{equation}
and term it the \emph{i.i.d. generative model}.

In practice, the data aggregation function $\varphi(\cdot)$ will typically depend 
on the order statistics of the latent data points, so that 
$\varphi_{l,u}(x_{1:m})=[x_{(l)},x_{(u)}]$,
where $x_{(l)}$ and $x_{(u)}$ are respectively the $l$-th (lower)
and $u$-th (upper) order statistics of $x_{1:m}$.
The region for integration in (\ref{eq:cdf-g}) and (\ref{eq:lik-g}) then becomes
$A=\{ x_{1:m}\colon\underline{x}\leq x_{(l)}\leq x_{(u)}\leq\overline{x}\} $ --
the collection of $x_{1:m}$ for which the $l$-th order
statistic is no less than $\underline{x}$ and the $u$-th order statistic
is no greater than $\overline{x}$. 
In this case, and for $\mathrm{i.i.d.}$ random variables $X_k\sim f(x|\theta)$ for
$k=1,\ldots,m$, the likelihood
function (\ref{eq:lik-g-iid}) becomes
\begin{multline}
f_{[X]}^\star(\underline{x},\overline{x} | \theta,m,l,u)=
\frac{m!}{(l-1)!(u-l-1)!(m-u)!}[F(\underline{x}|\theta)]^{l-1}\\
\times[F(\overline{x}|\theta)-F(\underline{x}|\theta)]^{u-l-1}
[1-F(\overline{x}|\theta)]^{m-u}f(\underline{x}|\theta)f(\overline{x}|\theta),
\label{eq:lik-g-order-iid}
\end{multline}
where $F(x|\theta)=\int_{-\infty}^{x}\!f(z|\theta)\,\mathrm{d}z$
is the cumulative distribution function of $X_{k}$.
That is, (\ref{eq:lik-g-iid}) reduces to (\ref{eq:lik-g-order-iid}),
which is the joint likelihood
function of the $l$-th and $u$-th order statistics 
of $m$ $\mathrm{i.i.d.}$ samples. 
Consequently, if $l/(m+1) \to \underline{p}$ and $u/(m+1) \to \overline{p}$ as $m\to\infty$, 
the distribution of $[X]$ converges to a point mass at 
$[Q(\underline{p};\theta),Q(\overline{p};\theta)]$,
where $Q(\cdot;\theta)$ is the quantile function of $f(x|\theta)$.

Further simplification is possible when $[X]$ is constructed from the minimum and maximum 
values of $X_{1:m}$ (so that $l=1$ and $u=m$).
Here $A=\{ x_{1:m}\colon\underline{x}\leq x_{k}\leq\overline{x},\,k=1,\ldots,m\} $ is
a hyper-rectangle in $\mathbb{R}^{m}$ with identical length in
each dimension, and the likelihood function (\ref{eq:lik-g-order-iid}) becomes
\begin{equation}
f_{[X]}^{\star\star}(\underline{x},\overline{x} | \theta,m)=
m(m-1)[F(\overline{x}|\theta)-F(\underline{x}|\theta)]^{m-2}
f(\underline{x}|\theta)f(\overline{x}|\theta).
\label{eq:lik-g-mm-iid}
\end{equation}
In this case, if the support of $f(x|\theta)$ is bounded
on $[a,b]$, then as $m\to\infty$, the distribution of $[X]$
converges to a point mass at $[a,b]$.
However, if $f(x|\theta)$ has unbounded support, the distribution of $[X]$ will diverge
to $(-\infty,+\infty)$. 
 
From the above we may conclude that for  $\mathrm{i.i.d.}$ generative models, 
when $m$ is large, all interval-valued observations will be similar. As in practice we 
may expect significant variation in interval-valued observations, even for large $m$, 
this indicates that the usefulness of an $\mathrm{i.i.d.}$ model may be restricted to 
specific settings.

%%%%%%%%%%%%%%%%%%%%%%%%%%%%%%%%%%%%%%%%%%%%%%%%%
\subsection{Hierarchical generative models}
%%%%%%%%%%%%%%%%%%%%%%%%%%%%%%%%%%%%%%%%%%%%%%%%%

Evaluating the likelihood function (\ref{eq:lik-g}) of the generative model for general 
latent distributions $f(x_{1:m}|\alpha)$ of latent data points is challenging, except in 
simplified settings. 
Here we consider a special class of the generative model for which the latent data points 
$X_{1:m}$ are exchangeable. This exchangeability leads to a hierarchical generative 
model, which can capture both inter- and intra- interval structure and variability. 

Suppose that $X_{1:m}$ are exchangeable, i.e.~their joint distribution is 
invariant to any permutations of $X_{1:m}$. From de Finetti's Theorem 
\cite{Aldous:1985}, the distribution of $X_{1:m}$ may be represented as
a mixture, i.e.
\begin{equation}
    P(X_{1:m} \in A)=
    \int\!P_{\star}^{(m)}(X_{1:m} \in A) \, \mu_{P_{\star}}(\mathrm{d}P_{\star}),
\label{eq:definetti}
\end{equation}
where $\mu_{P_{\star}}$ is the distribution on the space of all probability
measures of $\mathbb{R}$, and $P_{\star}^{(m)}=\prod_{m}P_{\star}$
is the product measure on $\mathbb{R}^{m}$. In other words, all $X_{k}$ for $k=1,\ldots,m$, 
are $\mathrm{i.i.d.}$ from $P_{\star}$ with $P_{\star}\sim\mu_{P_{\star}}$.
By recalling from (\ref{eq:cdf-g}) and (\ref{eq:lik-g}) that 
$A = \{\varphi(x_{1:m}) \subseteq [x]\}$, 
then the mixture component $P_{\star}^{(m)}(X_{1:m} \in A)$
equals $P_{\star}^{(m)}([X] \subseteq [x])$, which is the containment distribution 
function for an $\mathrm{i.i.d.}$ generative model of $[X]$, with $X_k \sim P_{\star}$ 
for $k=1,\ldots,m$ and the same data aggregation function $\varphi(\cdot)$. 
This means that $P([X]\subseteq[x])$, which equals $P(X_{1:m}\in A)$,
may be represented as the mixture of $P_{\star}^{(m)}([X]\subseteq[x])$
with $P_{\star} \sim \mu_{P_{\star}}$, i.e.~as a mixture of $\mathrm{i.i.d.}$ 
generative models.

In the following we consider the case when $P_{\star}$ belongs to some parametric family, 
so that $\mathrm{d}P_{\star}=f(x|\theta)\,\mathrm{d}x$.
From (\ref{eq:definetti}), the joint density function of $X_{1:m}$ is then given 
by the mixture representation 
$\int\!\prod_{k=1}^{m}f(x_{k}|\theta)\pi(\theta)\,\mathrm{d}\theta$, 
where the mixing distribution $\pi(\theta)$ may be non-parametric or 
parametric $\pi(\theta|\alpha)$ 
with parameter $\alpha$.
The resulting containment distribution function of $[X]$ is then the mixture of 
$F_{[X]}(\underline{x},\overline{x}|\theta,m)$
given in (\ref{eq:cdf-g}), with $f(x_{1:m} | \theta) = \prod_{k=1}^m f(x_k | \theta)$,
w.r.t. $\pi(\theta|\alpha)$.
If $\varphi(\cdot)$ is continuous, we obtain the likelihood
function of such a generative model as
\begin{equation}
f_{[X]}(\underline{x},\overline{x} | \alpha,m)=
\int\!f_{[X]}^\star(\underline{x},\overline{x} | \theta,m)
\pi(\theta|\alpha)\,\mathrm{d}\theta,
\label{eq:lik-g-mix}
\end{equation}
where $f_{[X]}^\star(\underline{x},\overline{x} | \theta,m)$
is the likelihood function of $\mathrm{i.i.d.}$ generative model (\ref{eq:lik-g-iid}). 

In practice, the latent data points $X_{1:m}$ may not be exchangeable. However the data 
aggregation function $\varphi(\cdot)$ may be symmetric. 
Let $\Gamma$ be the set of all permutations of the indices from $1$ to $m$, 
and $X_\gamma$ be the 
latent data points $X_{1:m}$ permuted according to $\gamma\in\Gamma$ with density 
function $f(x_\gamma)$. As $\varphi(\cdot)$ is symmetric, 
$\varphi(x_{\gamma})=\varphi(x_{1:m})$
and thus, $[X_\gamma]=\varphi(X_\gamma)$ has the same containment distribution function as 
$[X]$. As a result, for 
the exchangeable random variables defined as
$Y_{1:m} \sim \frac{1}{m!} \sum_{\gamma\in\Gamma}f(X_{\gamma})$,
$[Y]=\varphi(Y_{1:m})$ has the same containment distribution function as $[X]$.

The existence of such $Y_{1:m}$ implies that when the latent data points $X_{1:m}$ are 
aggregated into intervals $[X]$ by symmetric data aggregation methods, information on 
the order-related dependence structure will vanish. As a 
result, it is unnecessary to model the distribution of $X_{1:m}$ with a more complex 
dependence structure than exchangeability -- modelling the exchangeable $Y_{1:m}$ will be 
sufficient.

Accordingly, for random intervals $[X_1],\ldots,[X_n]$, 
the generative model (\ref{eq:lik-g-mix}) can be directly 
interpreted as the hierarchical model
\begin{eqnarray*}
[X_{i}] & = & \varphi(X_{i,1:m}),\nonumber \\
X_{i,k} & \sim & f(x|\theta_{i}),\,k=1,\ldots,m_{i},\nonumber \\
\theta_{i} & \sim & \pi(\theta|\alpha),\label{eq:hierarchy}
\end{eqnarray*}
with known $m_i$ for $i=1,\ldots,n$. Thus, we term them \emph{hierarchical generative models}.
The contribution to the integrated likelihood (\ref{eq:lik-g-mix}) for the first two 
terms is given by $f_{[X]}^\star(\underline{x}_{i},\overline{x}_{i}|\theta_i,m_{i})$
-- the likelihood function of the $\mathrm{i.i.d.}$ 
generative model (\ref{eq:lik-g-iid}) for the 
interval-valued observation $[x_i]$, with
the density function
of  each (conditionally) $\mathrm{i.i.d.}$ latent data points $X_{i,1:m_i}$ given by 
$f(x_{i,k}|\theta_{i})$ -- and where $\pi(\theta|\alpha)$
is the mixing distribution for $\theta_i$ given the parameter $\alpha$. Given such 
interpretation, $f(x_{i,k}|\theta_{i})$ (or $\theta_{i}$) is the \emph{local} density function 
(or parameter) for $[X_i]$, while $\pi(\theta | \alpha)$ (or $\alpha$) 
is the \emph{global} density 
function (or parameter) among all intervals. Therefore, the intra-interval structure is 
described by the local density function and $m$, while the inter-interval variability is 
modelled by the global density function.

As a result, inference on this model permits direct analysis of the underlying 
distribution of data points $X_{1:m}$ within each interval $[X_i]$ and its model 
parameter $\theta_i$ -- an advantageous property of the generative model over the 
descriptive model. 
For example if the global density $\pi(\theta|\alpha)$ works as the prior distribution, 
in the Bayesian framework, for the local parameter $\theta_i$,
$\pi(\theta_{i}|\alpha,[x_i])\propto 
f_{[X]}^\star(\underline{x}_{i},\overline{x}_{i}|\theta_i,m_{i})\pi(\theta_i|\alpha)$ 
is the posterior distribution of the parameter of the local density 
$f(x | \theta_i)$ underlying $[x_{i}]$.
Similarly,
the posterior predictive distribution of latent data points underlying $[x_i]$ is 
directly available as 
$\pi(x|\alpha,[x_i]) \propto 
\int\!f(x|\theta_{i})\pi(\theta_{i}|\alpha,[x_{i}])\,\mathrm{d}\theta_i$.

%%%%%%%%%%%%%%%%%%%%%%%%%%%%%%%%%%%%%%%%%%%%%%%
%%%%%%%%%%%%%%%%%%%%%%%%%%%%%%%%%%%%%%%%%%%%%%%%
\subsection{Asymptotic properties}%%%
%%%%%%%%%%%%%%%%%%%%%%%%%%%%%%%%%%%%%%%%%%%%%%%%
%%%%%%%%%%%%%%%%%%%%%%%%%%%%%%%%%%%%%%%%%%%%%%%

Although they are constructed quite distinctly, it is possible to directly relate the 
descriptive and generative models under specific circumstances. In particular for 
standard (descriptive) symbolic analysis techniques,
when there is no prior knowledge on the distribution of data within an interval, 
this distribution is commonly assumed to be uniform $U(a,b)$ with 
$a \leq b$ (e.g.~\citeNP{Le-Rademacher2011}).
Let $I(\underline{x}, \overline{x}\colon a \leq \underline{x} \leq \overline{x} \leq b)$ be
an indicator function of $\underline{x}$ and $\overline{x}$, which equals 1 when 
$a \leq \underline{x} \leq \overline{x} \leq b$, and 0 elsewhere.
Defining $f(x|\theta)$ so that $X_k \sim U(a,b)$ for $k=1,\ldots,m$, and 
constructing $[X]=\varphi_{1,m}(X_{1:m})$ from the minimum and maximum values of these latent 
data points, then the density function of $[X]$ given by (\ref{eq:lik-g-mm-iid}) becomes
\[
f_{[X]}^{\star\star}(\underline{x},\overline{x} | a,b,m)=
m(m-1)(\overline{x} - \underline{x})^{m-2}(b-a)^{-m}
I(\underline{x}, \overline{x} \colon a \leq \underline{x} \leq \overline{x} \leq b),
\]
which converges to a point mass at $[a,b]$ as $m\to\infty$ 
(Section~\ref{sec:generative}). 
Then, by substituting 
$f_{[X]}^{\star\star}(\underline{x},\overline{x}| a,b,m)$ into (\ref{eq:lik-g-mix}),
the hierarchical generative model becomes
\begin{equation}
\label{eq:lik-g-unif}
f_{[X]}(\underline{x},\overline{x}|m)=\iint_{\{a \leq \underline{x}, b \geq \overline{x}\}}\!
m(m-1)\frac{(\overline{x}-\underline{x})^{m-2}}{(b-a)^{m}}\pi(a,b)\,\mathrm{d}a\mathrm{d}b.
\end{equation}
where $\pi(a,b)$ describes the inter-interval parameter variability.
When $m$ is large, the following theorem states that this hierarchical generative  model 
converges to
$\pi(\underline{x},\overline{x})$,
which is a valid descriptive model.
\begin{thm}
\label{thm:g2d-unif}
Suppose that $[X]=\varphi_{1,m}(X_{1:m})$ with $X_{k} \sim U(a,b)$ for $k=1,\ldots,m$, and
the global density function $\pi(a,b)$ is bounded, continuous and equal to 0 when $a > b$.
Then as $m\to\infty$, the density function of $[X]$ (\ref{eq:lik-g-unif}) 
converges to $\pi(\underline{x},\overline{x})$ pointwise, i.e.
\[
\lim_{m\to\infty}f_{[X]}(\underline{x},\overline{x}|m)=\pi(\underline{x},\overline{x}).
\]
\end{thm}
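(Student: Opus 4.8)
The plan is to work directly with the integral in (\ref{eq:lik-g-unif}) and show that, as $m\to\infty$, its kernel behaves like an approximate identity that concentrates onto the corner $(a,b)=(\underline{x},\overline{x})$ of the domain of integration, so that the bounded continuous weight $\pi$ is sampled at that corner. First I would dispose of the degenerate points. If $\underline{x}>\overline{x}$ both sides vanish by the hypothesis that $\pi(a,b)=0$ for $a>b$. If $\underline{x}=\overline{x}$, the factor $(\overline{x}-\underline{x})^{m-2}$ is $0$ for every $m>2$, so the left-hand side is identically $0$; and since $\pi$ is continuous and vanishes on the open set $\{a>b\}$, approaching a diagonal point through that set forces $\pi(\underline{x},\overline{x})=0$, so the identity holds trivially. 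It then remains to treat the generic case $\underline{x}<\overline{x}$, where I write $w=\overline{x}-\underline{x}>0$.

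For that case I would move the corner to the origin via $a=\underline{x}-u$, $b=\overline{x}+v$, turning (\ref{eq:lik-g-unif}) into
\[
f_{[X]}(\underline{x},\overline{x}\,|\,m)=\int_{0}^{\infty}\!\!\int_{0}^{\infty}
m(m-1)\frac{w^{m-2}}{(w+u+v)^{m}}\,\pi(\underline{x}-u,\overline{x}+v)\,\mathrm{d}u\,\mathrm{d}v,
\]
and then rescale by $u=wp/m$, $v=wq/m$. A short computation collapses the algebraic factors and yields
\[
f_{[X]}(\underline{x},\overline{x}\,|\,m)=\int_{0}^{\infty}\!\!\int_{0}^{\infty}
\frac{m-1}{m}\Bigl(1+\tfrac{p+q}{m}\Bigr)^{-m}
\pi\!\left(\underline{x}-\tfrac{wp}{m},\,\overline{x}+\tfrac{wq}{m}\right)\mathrm{d}p\,\mathrm{d}q .
\]
The purpose of this normalisation is that the integrand now converges pointwise, using $(1+t/m)^{-m}\to e^{-t}$, $\tfrac{m-1}{m}\to1$, and the continuity of $\pi$, to $e^{-(p+q)}\pi(\underline{x},\overline{x})$, whose integral over the first quadrant is exactly $\pi(\underline{x},\overline{x})$.

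The remaining work is to upgrade pointwise convergence of the integrand to convergence of the integral, i.e.\ to justify passing the limit inside by dominated convergence, and I expect this domination to be the only genuine obstacle, since everything else is elementary. Boundedness of $\pi$ supplies a constant $M=\sup\pi$ on the $\pi$ factor, so it suffices to dominate $(1+(p+q)/m)^{-m}$ uniformly in $m$ by an integrable function of $(p,q)$. Here I would retain a cubic term of the binomial expansion: for $m\geq3$ and $t\geq0$ one has $(1+t/m)^{m}\geq 1+\tfrac{(m-1)(m-2)}{6m^{2}}t^{3}\geq 1+t^{3}/27$, whence $(1+(p+q)/m)^{-m}\leq 27/(27+(p+q)^{3})$. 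A one-variable reduction, $\iint_{(0,\infty)^{2}}g(p+q)\,\mathrm{d}p\,\mathrm{d}q=\int_{0}^{\infty}s\,g(s)\,\mathrm{d}s$, then shows $\int_{0}^{\infty}27s/(27+s^{3})\,\mathrm{d}s<\infty$, so $M\cdot 27/(27+(p+q)^{3})$ is an integrable dominating function valid for all $m\geq3$. (Keeping only the quadratic term fails, as the resulting bound is not integrable; this is the subtlety to watch.) Dominated convergence then gives $\lim_{m\to\infty}f_{[X]}(\underline{x},\overline{x}\,|\,m)=\pi(\underline{x},\overline{x})$, completing the argument.
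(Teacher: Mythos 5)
Your proof is correct, but it follows a genuinely different route from the paper's. You shift coordinates so that the corner $(a,b)=(\underline{x},\overline{x})$ becomes the origin, rescale both variables linearly by $w/m$, and recognise the kernel $\tfrac{m-1}{m}\bigl(1+\tfrac{p+q}{m}\bigr)^{-m}\to e^{-(p+q)}$ as an approximate identity, concluding by dominated convergence with the explicit dominating function $27M/(27+(p+q)^3)$ extracted from the cubic term of the binomial expansion. The paper instead reparameterises to centre and half-range $(c,r)=\bigl(\tfrac{a+b}{2},\tfrac{b-a}{2}\bigr)$, integrates out $c$ (splitting on whether the marginal $g(r)$ vanishes), substitutes $z=(m-1)\bigl(\log r-\log\tfrac{\overline{x}-\underline{x}}{2}\bigr)$, invokes the mean value theorem to collapse the inner integral over $c$ to a value $g(\xi,\cdot)$ at an intermediate point, and then passes the limit inside a one-dimensional integral after bounding it by $\tfrac{3}{2}M(\xi)$. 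Your version buys several things: it avoids the mean value theorem and the conditional-density case split; the interchange of limit and integral is fully justified by an explicit, $m$-uniform integrable dominant (the paper's swap is argued only from boundedness of the integral, which on its own does not license it, though it can be patched with a domination argument like yours); and you dispose of the degenerate cases $\underline{x}\geq\overline{x}$, which the paper passes over. Your observation that the quadratic binomial term yields a non-integrable bound, so the cubic term is genuinely needed, is exactly the right subtlety to flag. What the paper's route buys in exchange is that its exponential substitution in the scale variable is the same device used computationally in Section~\ref{sec:application} (the reparameterisation $z_{i}=m(\tau_{i}-\log\tfrac{1}{2}(\overline{x}_{i}-\underline{x}_{i}))$ leading to (\ref{eq:lik-g-mix-mm-unif})), so the proof doubles as the derivation of the quadrature-friendly form of the likelihood, and it exhibits the limit as $\tfrac12 g$ evaluated at the centre/half-range of $[x]$, which ties directly to the descriptive-model parameterisation.
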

This result is interesting in that it reveals that descriptive models for 
$[X] \sim f_{[X]}(\underline{x},\overline{x}|\theta)$ described 
in Section~\ref{sec:descriptive} (e.g.~\shortciteNP{Arroyo2010,Le-Rademacher2011,Brito2012})
actually possess an underlying and implicit generative structure. 
Specifically, the sampling process
of the descriptive model 
$[X] \sim f_{[X]}(\underline{x},\overline{x}) = \pi(\underline{x},\overline{x})$
can be expressed via the generative process 
\begin{eqnarray*}
[X] & = & \lim_{m\to\infty}\varphi_{1,m}(X_{1:m}),\\
X_{1},X_{2}\ldots & \sim & U(\underline{X}_{\star},\overline{X}_{\star}),\\
(\underline{X}_{\star},\overline{X}_{\star}) & \sim & \pi(\underline{x},\overline{x}).
\end{eqnarray*}
That is, to obtain a sample realisation of $[X]$, 
values of lower and upper bound parameters,  
$(\underline{X}_{\star},\overline{X}_{\star})$, of local 
uniform distribution are first drawn from the descriptive model 
$\pi(\underline{x},\overline{x})$, 
which in this case is exactly equivalent to the global density for the 
associated underlying hierarchical generative model.
As the resulting infinite collection of latent data points 
$X_k \sim U(\underline{X}_{\star},\overline{X}_{\star})$ fully 
identifies the local density, and
$\min\{X_k\} = \underline{X}_\star$, $\max\{X_k\}=\overline{X}_\star$ 
are sufficient statistics for uniform distributions, the generated interval $[X]$ is then 
determined as $[X]=[\underline{X}_{\star},\overline{X}_{\star}]$ with 
$(\underline{X}_{\star},\overline{X}_{\star}) 
\sim \pi(\underline{x}_{\star},\overline{x}_{\star})$.
As a result, there is no loss of information from the data aggregation procedure and 
the variation of $[X]$ is completely due
to the variation permitted in the distribution of local parameters, which is the global 
distribution. In this manner, the descriptive
model is a special case of and directly interpretable as a particular generative model.

This idea can be extended to 
a more general class of hierarchical generative models in which
the local distribution is only governed by location ($\mu$) and
scale ($\tau>0$) parameters, so that $X_k \sim f(x|\mu,\tau)$ for $k=1,\ldots,m$. 
Suppose $\underline{x}$ and $\overline{x}$
are the $l$-th and $u$-th order statistics, respectively. The associated values of 
$\mu$ and $\tau$ are available by solving 
\begin{equation}
\begin{cases}
Q({l}/{(m+1)};\mu,\tau) & =\underline{x}\\
Q({u}/{(m+1)};\mu,\tau) & =\overline{x},
\end{cases}\label{eq:intvl-identify}
\end{equation}
where $Q(\cdot;\mu,\tau)$ is the quantile 
function of $f(x|\mu,\tau)$.
If a unique solution exists for (\ref{eq:intvl-identify}), then $f(x | \mu,\tau)$
is an \emph{interval-identifiable} local distribution. 

We previously discussed that under the order statistic based data aggregation function, the
$\mathrm{i.i.d.}$ generative model~(\ref{eq:lik-g-order-iid}) will converge to a point
mass as $m\rightarrow\infty$. Similar to Theorem~\ref{thm:g2d-unif}, 
those hierarchical generative models (\ref{eq:lik-g-mix})
with interval-identifiable local density functions $f(x|\mu,\tau)$  will also
converge to descriptive models.
\begin{thm}
\label{thm:g2d}
Suppose that $[X]=\varphi_{l,u}(X_{1:m})$ with $X_{k} \sim f(x | \mu,\tau)$ 
for $k=1,\ldots,m$, 
where the local density function $f(x | \mu,\tau)$ 
is interval-identifiable with location parameter $\mu$ 
and scale parameter $\tau>0$.
Further suppose that $l/(m+1) \to \underline{p} > 0$ and 
$u/(m+1) \to \overline{p} < 1$ as $m\to\infty$, and
\begin{enumerate}
\item[i)] the global density function $\pi(\mu,\tau)$ is twice differentiable;
\item[ii)] $f(x | \mu,\tau)$ is positive and continuous in neighbourhoods of 
$Q(\underline{p};\mu,\tau)$ and $Q(\overline{p};\mu,\tau)$;
\item[iii)] $\iint\!|f_{[X]}^{\star}(\underline{x},\overline{x}|\mu,\tau,m,l,u)|
\pi(\mu,\tau)\,\mathrm{d}\mu\mathrm{d}\tau<\infty$ for any $0 < l < u < m$.
\end{enumerate}
Then as $m\to\infty$, the density function of $[X]$ for 
the hierarchical generative model (\ref{eq:lik-g-mix})
converges pointwise to 
\[
\pi_{\star}(\underline{x},\overline{x})=
\pi\left(\mu(\underline{x},\overline{x};\underline{p},\overline{p}),
\tau(\underline{x},\overline{x};\underline{p},\overline{p})\right)\times
\left|J(\mu(\underline{x},\overline{x};\underline{p},\overline{p}),
\tau(\underline{x},\overline{x};\underline{p},\overline{p});
\underline{p},\overline{p})\right|^{-1},
\]
where $\mu(\underline{x},\overline{x};\underline{p},\overline{p})$
and $\tau(\underline{x},\overline{x};\underline{p},\overline{p})$
are the solution of (\ref{eq:intvl-identify}) and
\[
J(\mu,\tau;\underline{p},\overline{p})=
\begin{pmatrix}
    \frac{\partial}{\partial \mu}Q(\underline{p}|\mu,\tau) 
    & \frac{\partial}{\partial \tau}Q(\underline{p}| \mu,\tau)\\
    \frac{\partial}{\partial \mu}Q(\overline{p}| \mu,\tau) 
    & \frac{\partial}{\partial \tau}Q(\overline{p}| \mu,\tau)
\end{pmatrix}. 
\]
\end{thm}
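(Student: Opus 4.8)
The plan is to evaluate the mixture~(\ref{eq:lik-g-mix}), written with the explicit order-statistic kernel~(\ref{eq:lik-g-order-iid}), by a change of variables that turns the $m$-dependent factor into the joint density of two uniform order statistics, and then to exploit the concentration of those order statistics at their population quantiles as $m\to\infty$. Fixing $(\underline{x},\overline{x})$, I would substitute $s=F(\underline{x}\mid\mu,\tau)$ and $t=F(\overline{x}\mid\mu,\tau)$, so that~(\ref{eq:lik-g-order-iid}) factorises as $f_{[X]}^{\star}(\underline{x},\overline{x}\mid\mu,\tau,m,l,u)=g_{m}(s,t)\,f(\underline{x}\mid\mu,\tau)\,f(\overline{x}\mid\mu,\tau)$, where
\[
g_{m}(s,t)=\frac{m!}{(l-1)!(u-l-1)!(m-u)!}\,s^{l-1}(t-s)^{u-l-1}(1-t)^{m-u}
\]
is exactly the joint density of the $l$th and $u$th order statistics of $m$ i.i.d.\ $U(0,1)$ variates. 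Because $f(\cdot\mid\mu,\tau)$ is interval-identifiable and positive near the relevant quantiles (condition~(ii)), the inverse-function theorem yields a local diffeomorphism between a neighbourhood of $(\mu^{\star},\tau^{\star}):=(\mu(\underline{x},\overline{x};\underline{p},\overline{p}),\tau(\underline{x},\overline{x};\underline{p},\overline{p}))$, the unique solution of~(\ref{eq:intvl-identify}), and a neighbourhood of its image $(\underline{p},\overline{p})$. Transforming~(\ref{eq:lik-g-mix}) from $(\mu,\tau)$ to $(s,t)$ then writes the density as $\iint g_{m}(s,t)\,h(s,t)\,\mathrm{d}s\,\mathrm{d}t$, with
\[
h(s,t)=\pi(\mu,\tau)\,f(\underline{x}\mid\mu,\tau)\,f(\overline{x}\mid\mu,\tau)\,
\left|\frac{\partial(s,t)}{\partial(\mu,\tau)}\right|^{-1},
\]
evaluated at the preimage $(\mu,\tau)$ of $(s,t)$.

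Since $l/(m+1)\to\underline{p}$ and $u/(m+1)\to\overline{p}$ with $0<\underline{p}<\overline{p}<1$, the order statistics $(U_{(l)},U_{(u)})$ converge in probability to the interior point $(\underline{p},\overline{p})$, so $g_{m}$ behaves as an approximate identity there; this is the interior analogue of the corner concentration used in Theorem~\ref{thm:g2d-unif}. I would split the integral into a small ball $B$ about $(\underline{p},\overline{p})$ and its complement. On $B$, the smoothness of $\pi$ (condition~(i)) together with the continuity and positivity of $f$ (condition~(ii)) make $h$ continuous, and the approximate-identity property gives $\iint_{B}g_{m}h\to h(\underline{p},\overline{p})$. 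On $\comp{B}$ a Gibbs/Sanov-type estimate shows that $g_{m}$ is uniformly $O(m\,e^{-cm})$ away from its peak and decays even faster towards the edge of the simplex, so that, paired with the finiteness hypothesis~(iii), $\iint_{\comp{B}}g_{m}h\to 0$.

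It remains to identify $h(\underline{p},\overline{p})$. At this point the preimage is $(\mu^{\star},\tau^{\star})$, for which $\underline{x}=Q(\underline{p};\mu^{\star},\tau^{\star})$ and $\overline{x}=Q(\overline{p};\mu^{\star},\tau^{\star})$ by~(\ref{eq:intvl-identify}). Differentiating the identity $F(Q(p;\mu,\tau)\mid\mu,\tau)=p$ gives $\partial_{\mu}Q=-\partial_{\mu}F/f$ and $\partial_{\tau}Q=-\partial_{\tau}F/f$ at the corresponding quantile, whence
\[
\det\frac{\partial(s,t)}{\partial(\mu,\tau)}=f(\underline{x}\mid\mu^{\star},\tau^{\star})\,f(\overline{x}\mid\mu^{\star},\tau^{\star})\,\det J(\mu^{\star},\tau^{\star};\underline{p},\overline{p}).
\]
The two density factors in $h$ therefore cancel, leaving $h(\underline{p},\overline{p})=\pi(\mu^{\star},\tau^{\star})\,|J(\mu^{\star},\tau^{\star};\underline{p},\overline{p})|^{-1}=\pi_{\star}(\underline{x},\overline{x})$, as required.

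I expect the tail estimate on $\comp{B}$ to be the main obstacle. The factor $h$ need not be bounded there, because $|\partial(s,t)/\partial(\mu,\tau)|^{-1}$ (equivalently the Jacobian of the quantile map) can blow up as $(\mu,\tau)$ escapes to the boundary of the parameter space, where $\tau\to0$, $\tau\to\infty$ or $|\mu|\to\infty$ push $(s,t)$ to the edge of the simplex. Controlling this requires balancing the exponential smallness of $g_{m}$ against the integrability hypothesis~(iii) in a dominated-convergence/uniform-integrability argument, and it requires verifying that $(s,t)$ stays bounded away from $(\underline{p},\overline{p})$ on $\comp{B}$ --- which is precisely where interval-identifiability, i.e.\ uniqueness of the solution of~(\ref{eq:intvl-identify}), is used.
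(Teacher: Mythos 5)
Your proposal is correct and takes a genuinely different route from the paper's proof. The paper stays in $(\mu,\tau)$ coordinates: it first replaces the order-statistic kernel (\ref{eq:lik-g-order-iid}) by its bivariate normal approximation (citing Reiss, 1989), so that the mixture becomes proportional to $(m+1)\iint f_{\mu,\tau}(Q_{\mu,\tau}(\underline{p}))f_{\mu,\tau}(Q_{\mu,\tau}(\overline{p}))\,\pi(\mu,\tau)\exp\{-(m+1)T(\underline{x},\overline{x};\mu,\tau)\}\,\mathrm{d}\mu\,\mathrm{d}\tau$ with $T\geq 0$ a quadratic form vanishing exactly at the solution of (\ref{eq:intvl-identify}), and then applies Laplace's method; the Jacobian $J$ emerges there from the Hessian identity $|\nabla^{2}T^{\star}|=|\Sigma|^{-1}f_{\mu_{\star},\tau_{\star}}(\underline{x})^{2}f_{\mu_{\star},\tau_{\star}}(\overline{x})^{2}\left|J\right|^{2}$, whose square root cancels the density factors. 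You instead pass to the probability scale $(s,t)=(F(\underline{x}\mid\mu,\tau),F(\overline{x}\mid\mu,\tau))$, where the kernel is \emph{exactly} the uniform order-statistic density $g_{m}$ times $f(\underline{x}\mid\mu,\tau)f(\overline{x}\mid\mu,\tau)$, and you use concentration of $(U_{(l)},U_{(u)})$ at $(\underline{p},\overline{p})$ as an approximate identity; the Jacobian emerges instead from the change-of-variables identity $\det\partial(s,t)/\partial(\mu,\tau)=f(\underline{x}\mid\mu^{\star},\tau^{\star})f(\overline{x}\mid\mu^{\star},\tau^{\star})\det J$, which is the exact finite-$m$ counterpart of the paper's Hessian computation, and your cancellation is verified correctly. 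What your route buys: no asymptotic-normality input, and no need to justify substituting a pointwise asymptotic equivalence \emph{under} the mixture integral (a uniformity step the paper glosses over); it also makes the role of interval-identifiability transparent, as global injectivity of $(\mu,\tau)\mapsto(s,t)$. What the paper's route buys: Laplace's method packages the local expansion and its error control into a single standard tool, giving a shorter write-up. Both arguments carry the same residual burden, namely controlling the contribution away from the peak using only hypothesis \emph{(iii)}; the paper silently absorbs this into the asserted validity of the Laplace approximation, while you flag it explicitly, and your suggested dominated-convergence scheme (fix $m_{0}$ so that \emph{(iii)} applies, then show $\sup_{\comp{B}}g_{m}/g_{m_{0}}\to 0$ using log-concavity of $s^{a}(t-s)^{b}(1-t)^{c}$ and the fact that its maximiser tends to the centre of $B$) does close it, so the gap is one of detail rather than of idea.
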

In the specific case where $f(x| a,b)$ is a $U[a,b]$ 
local density function, with quantile function $Q(p| a,b)=(1-p)a + pb$,
the hierarchical generative model (\ref{eq:lik-g-mix}) converges to the distribution of 
\[
[(1-\underline{p})\underline{X}_{\star}+\underline{p}\overline{X}_{\star},
(1-\overline{p})\underline{X}_{\star}+\overline{p}\overline{X}_{\star}],
\]
where $(\underline{X}_{\star},\overline{X}_{\star})\sim\pi(\underline{x},\overline{x})$.

%%%%%%%%%%%%%%%%%%%%%%%%%%%%%%%%%%%%%%%%%%%%%%%%%
%%%%%%%%%%%%%%%%%%%%%%%%%%%%%%%%%%%%%%%%%%%%%%%%%
\section{Multivariate Models for hyper-rectangles\label{sec:hyper}}
%%%%%%%%%%%%%%%%%%%%%%%%%%%%%%%%%%%%%%%%%%%%%%%%%
%%%%%%%%%%%%%%%%%%%%%%%%%%%%%%%%%%%%%%%%%%%%%%%%%

The $p$-dimensional equivalent of the univariate interval-valued random variable $[X]$ 
is the random $p$-hyper-rectangle, which corresponds to a $p$-tuple of random intervals. 
In specific, we denote $[\boldsymbol{x}]=([x_{1}],\ldots,[x_{p}])\in\mathbb{I}^p$
as a hyper-rectangle in the space of $p$-hyper-rectangles, 
and $\boldsymbol{x}=(x_1,\ldots,x_p)\in\mathbb{R}^p$
as one $p$-dimensional latent data point.
It is straightforward to extend
the previous theory on containment distribution functions and
likelihood functions for random intervals (Sections~\ref{sec:distr}~and~\ref{sec:generative})
to random hyper-rectangles.

%%%%%%%%%%%%%%%%%%%%%%%%%%%%%%%%%%%%%%%%%%%%%%%%%
\subsection{Containment distribution functions}
%%%%%%%%%%%%%%%%%%%%%%%%%%%%%%%%%%%%%%%%%%%%%%%%%

Similar to Section~\ref{sec:descriptive}, descriptive models for random $p$-hyper-rectangles may be 
constructed through direct specification of the $2p$-dimensional density function 
$f_{[\boldsymbol{X}]}(\underline{x}_{1},\overline{x}_{1},\ldots,
\underline{x}_{p},\overline{x}_{p})$.
These models are easily constructed and simple to use, 
but have the same limitations as the descriptive models for random intervals
discussed in Section \ref{sec:descriptive}. 

The containment distribution function of $[\boldsymbol{X}]$, denoted 
$F_{\mathrm{[\boldsymbol{X}]}}\colon\mathbb{R}^{2p}\mapsto[0,1]$, is
a function on the real hyperplane, having similar properties to those described in
Definition~\ref{def:cdf-intvl} (not stated here for brevity). 
The following theorems show the connection 
between the containment distribution function and the density function for $[\boldsymbol{X}]$.
\begin{thm}
\label{thm:pdf-hyper}
Provided that $f_{[\boldsymbol{X}]}\colon\mathbb{R}^{2p}\mapsto\mathbb{R}$
is the density function of a random $p$-hyper-rectangle $[\boldsymbol{X}]$, 
the containment distribution function can be derived as follows,
\[
F_{[\boldsymbol{X}]}(\underline{x}_{1},\overline{x}_{1},
\ldots,\underline{x}_{p},\overline{x}_{p})=
\int_{\underline{x}_p}^{\overline{x}_p}\cdots\int_{\underline{x}_1}^{\overline{x}_1}\!
f_{[\boldsymbol{X}]}(\underline{x}_1^\prime,\overline{x}_1^\prime,\ldots,
\underline{x}_p^\prime,\overline{x}_p^\prime)\,
\mathrm{d}\underline{x}_1^\prime\mathrm{d}\overline{x}_1^\prime\ldots
\mathrm{d}\underline{x}_p^\prime\mathrm{d}\overline{x}_p^\prime.
\]
\end{thm}
\begin{thm}
\label{thm:pdf-derivative-hyper}
Let $F_{[\boldsymbol{X}]}\colon\mathrm{\mathbb{R}^{2p}\mapsto[0,1]}$ be the containment 
distribution function of a random hyper-rectangle $[\boldsymbol{X}]$. If $F_{[\boldsymbol{X}]}$ 
is $2p$-times differentiable, then the density function of $[\boldsymbol{X}]$ is
\begin{equation}
f_{[\boldsymbol{X}]}(\underline{x}_{1},\overline{x}_{1},\ldots,
\underline{x}_{p},\overline{x}_{p})=
(-1)^{p}\frac{\partial^{2p}}{\partial\underline{x}_{1}\partial\overline{x}_{1}\ldots
\partial\underline{x}_{p}\partial\overline{x}_{p}}
F_{[\boldsymbol{X}]}(\underline{x}_{1},\overline{x}_{1},
\ldots,\underline{x}_{p},\overline{x}_{p}).
\label{eq:pdf-hyper}
\end{equation}
\end{thm}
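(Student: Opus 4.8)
The plan is to establish (\ref{eq:pdf-hyper}) by reducing it to the univariate result of Theorem~\ref{thm:pdf-derivative} via the product structure of the measurable space $(\mathbb{I}^{p},\mathscr{B}_{\mathbb{I}^{p}})$. By definition $f_{[\boldsymbol{X}]}$ is the Radon--Nikodym derivative of the probability measure of $[\boldsymbol{X}]$ with respect to the uniform product measure $\mu_{\mathbb{I}^{p}}$, so I would compute it as the limiting ratio
\[
f_{[\boldsymbol{X}]}([\boldsymbol{x}])=\lim\frac{P([\boldsymbol{X}]\in\mathrm{d}[\boldsymbol{x}])}{\mu_{\mathbb{I}^{p}}(\mathrm{d}[\boldsymbol{x}])},
\]
where $\mathrm{d}[\boldsymbol{x}]=(\mathrm{d}[x_{1}],\ldots,\mathrm{d}[x_{p}])$ and, by the hypercube analogue of Theorem~\ref{thm:nbhd}, $\mu_{\mathbb{I}^{p}}(\mathrm{d}[\boldsymbol{x}])=\prod_{j=1}^{p}\mathrm{d}\underline{x}_{j}\,\mathrm{d}\overline{x}_{j}$.

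First I would translate the numerator into endpoint coordinates. Since $[\boldsymbol{X}]\subseteq[\boldsymbol{x}]$ holds exactly when $\underline{X}_{j}\geq\underline{x}_{j}$ and $\overline{X}_{j}\leq\overline{x}_{j}$ for every $j$, the containment distribution function is a joint survival function in the lower endpoints $\underline{x}_{1},\ldots,\underline{x}_{p}$ and a joint cumulative distribution function in the upper endpoints $\overline{x}_{1},\ldots,\overline{x}_{p}$. The event $\{[\boldsymbol{X}]\in\mathrm{d}[\boldsymbol{x}]\}$ is the Cartesian product of the $p$ endpoint rectangles $\{\underline{x}_{j}-\mathrm{d}\underline{x}_{j}<\underline{X}_{j}\leq\underline{x}_{j},\ \overline{x}_{j}\leq\overline{X}_{j}<\overline{x}_{j}+\mathrm{d}\overline{x}_{j}\}$, so its probability is the $2p$-fold mixed finite difference of $F_{[\boldsymbol{X}]}$ over the $2^{2p}$ corners of the infinitesimal box, exactly as in the inclusion--exclusion identity already used for $p=1$ (Definition~\ref{def:cdf-intvl}(iv) and the proof of Theorem~\ref{thm:pdf-derivative}).

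Because the increments in distinct coordinate pairs $(\underline{x}_{j},\overline{x}_{j})$ act on distinct variables and the product measure factorizes, I would then peel off one coordinate pair at a time and argue by induction on $p$, with the base case $p=1$ supplied by Theorem~\ref{thm:pdf-derivative}. Dividing the finite difference in the $j$-th coordinate pair by $\mathrm{d}\underline{x}_{j}\,\mathrm{d}\overline{x}_{j}$ and letting the increments tend to zero converts that pair into the operator $-\partial^{2}/\partial\underline{x}_{j}\partial\overline{x}_{j}$, each lower endpoint contributing a single sign flip exactly as in the univariate computation; iterating over all $p$ pairs produces the factor $(-1)^{p}$ and the full mixed partial derivative in (\ref{eq:pdf-hyper}).

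The main obstacle is justifying that the normalized $2p$-fold finite difference converges to the mixed partial derivative, that is, interchanging the $p$ limiting operations with the differentiations. I would handle this through the iterated fundamental theorem of calculus: writing each coordinate difference as an integral of the corresponding first partial derivative reduces the whole expression to a $2p$-fold integral of $\partial^{2p}F_{[\boldsymbol{X}]}/\partial\underline{x}_{1}\partial\overline{x}_{1}\cdots\partial\underline{x}_{p}\partial\overline{x}_{p}$ over the shrinking box, whereupon the $2p$-times differentiability hypothesis---which I read as ensuring continuity of the top-order mixed partial, and which simultaneously legitimizes reordering the derivatives via Clairaut's theorem---lets me divide by the box volume and pass to the limit. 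Boundary discrepancies arising from the strict versus non-strict inequalities in $\mathrm{d}[\boldsymbol{x}]$ are $\mu_{\mathbb{I}^{p}}$-null and vanish in the limit, while Fubini's theorem, already invoked to construct $\mu_{\mathbb{I}^{p}}$, guarantees that the coordinate-by-coordinate treatment agrees with the joint one.
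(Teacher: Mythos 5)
Your proposal is correct and takes essentially the same approach as the paper, whose proof of this theorem simply mirrors the univariate argument of Theorem~\ref{thm:pdf-derivative}: use the hypercube decomposition of Lemma~\ref{lem:w2b-hyper} to express $P_{[\boldsymbol{X}]}(\mathrm{d}[\boldsymbol{x}])$ as the inclusion--exclusion (mixed finite difference) of $F_{[\boldsymbol{X}]}$ over the corners of the infinitesimal neighbourhood, and then pass to the limit against $\mu_{\mathbb{I}^{p}}(\mathrm{d}[\boldsymbol{x}])=\prod_{j=1}^{p}\mathrm{d}\underline{x}_{j}\,\mathrm{d}\overline{x}_{j}$. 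Your only deviation is the technical device for that limit --- induction over coordinate pairs with the iterated fundamental theorem of calculus instead of the paper's $2p$-th order Taylor expansions --- which is an equivalent way of showing the normalized mixed difference converges to $(-1)^{p}\,\partial^{2p}F_{[\boldsymbol{X}]}/\partial\underline{x}_{1}\partial\overline{x}_{1}\cdots\partial\underline{x}_{p}\partial\overline{x}_{p}$.
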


%%%%%%%%%%%%%%%%%%%%%%%%%%%%%%%%%%%%%%%
\subsection{Generative models}
%%%%%%%%%%%%%%%%%%%%%%%%%%%%%%%%%%%%%%%
\label{sec:pdim-lik}
 
Containment distribution functions and likelihood
functions of generative models may be formulated using the same ideas as in (\ref{eq:cdf-g})
and (\ref{eq:lik-g}). However, due to the necessity of calculating $2p$-th order mixed 
derivatives in (\ref{eq:pdf-hyper}), although intuitive,
the structure of the resulting likelihood functions would be highly complex,
even for $\mathrm{i.i.d.}$ generative models of random rectangles. The full form 
of the likelihood function for an $\mathrm{i.i.d.}$ generative model in 
the bivariate case $[\boldsymbol{X}]=[X_1]\times[X_2]$ 
is presented in the Supplementary Information.

The complex form of the likelihood function of an $\mathrm{i.i.d.}$ 
generative model accordingly 
induces a similarly complex hierarchical generative model.
One option to produce more tractable models is to impose a conditional independent 
structure within each $p$-dimensional latent data point, so that 
$\boldsymbol{x}_k \sim f(\boldsymbol{x}|\theta_{1:p})=\prod_{j=1}^{p}f(x_{j}|\theta_{j})$.
Consequently, each random interval marginal distribution of the 
$p$-hyper-rectangle is conditionally 
independent of the others, i.e. 
\[f^\star_{[\boldsymbol{X}]}(\underline{x}_{1},\overline{x}_{1},\ldots,
\underline{x}_{p},\overline{x}_{p}|
\theta_{1:p})=\prod_{j=1}^{p}f^\star_{[X_j]}(\underline{x}_{j},\overline{x}_{j}|\theta_{j}),\]
where $f^\star_{[X_j]}(\underline{x}_{j},\overline{x}_{j}|\theta_{j})$ 
is the likelihood function 
of the $\mathrm{i.i.d.}$ generative model (\ref{eq:lik-g-iid}) for $[X_j]$. 
Although this choice will result in clear modelling consequences,
the resulting likelihood function for the hierarchical generative model
\begin{equation}
f_{[\boldsymbol{X}]}(\underline{x}_{1},\overline{x}_{1},\ldots,
\underline{x}_{p},\overline{x}_{p}|m,\alpha)=
\int\!\prod_{j=1}^{p}f^\star_{[X_j]}(\underline{x}_{j},\overline{x}_{j}|\theta_{j})
\pi(\theta_{1:p}|\alpha)\,\mathrm{d}\theta_{1:p}
\label{eq:lik-g-mix-hyper}
\end{equation}
will only then require $p$ second-order mixed derivatives. 

In this scenario, dependencies between the random interval marginal distributions of 
$[\boldsymbol{X}]$, such as temporal or spatial dependencies, are controlled only by
the dependence among local parameters $\theta_{1:p}$ as introduced by the 
global distribution $\pi(\theta_{1:p}|\alpha)$.
As a result, beyond any \emph{a priori} information on the joint distribution of the 
$p$-dimensional latent data points underlying construction of the random interval 
$[\boldsymbol{X}]$ 
being incorporated within $\pi(\theta_{1:p}|\alpha)$,
it will be impossible to identify any further dependence based on the observed 
$p$-hyper-rectangles. If this is inadequate for a given analysis, the full multivariate 
likelihood will need to be derived (see e.g.~the Supplementary Information).

%%%%%%%%%%%%%%%%%%%%%%%%%%%%%%%%%%%%%%%%%%%%%%%%%
%%%%%%%%%%%%%%%%%%%%%%%%%%%%%%%%%%%%%%%%%%%%%%%%%
\section{Applications\label{sec:application}}
%%%%%%%%%%%%%%%%%%%%%%%%%%%%%%%%%%%%%%%%%%%%%%%%%
%%%%%%%%%%%%%%%%%%%%%%%%%%%%%%%%%%%%%%%%%%%%%%%%%
\label{sec:examples}

We illustrate our new models by firstly comparing the performance of 
the generative models to the existing descriptive models for simulated 
univariate (random interval) data.
We then provide a generative model reanalysis of a real dataset of 5,000 credit 
card customers, as previously analysed by \cite{Brito2012} using a descriptive 
model.
The size of this dataset does not merit the use of symbolic data methods for its analysis, however it does serve as a useful illustration of the benefits of generative models.
We conclude with an examination of the robustness of the generative method to model mis-specification.

%%%%%%%%%%%%%%%%%%%%%%%%%%%%%%%%%%%%%%%%%%%%%%%%%
\subsection{Simulated data analysis}
%%%%%%%%%%%%%%%%%%%%%%%%%%%%%%%%%%%%%%%%%%%%%%%%%

\begin{figure}[t]
\centering
\includegraphics[scale=0.28]{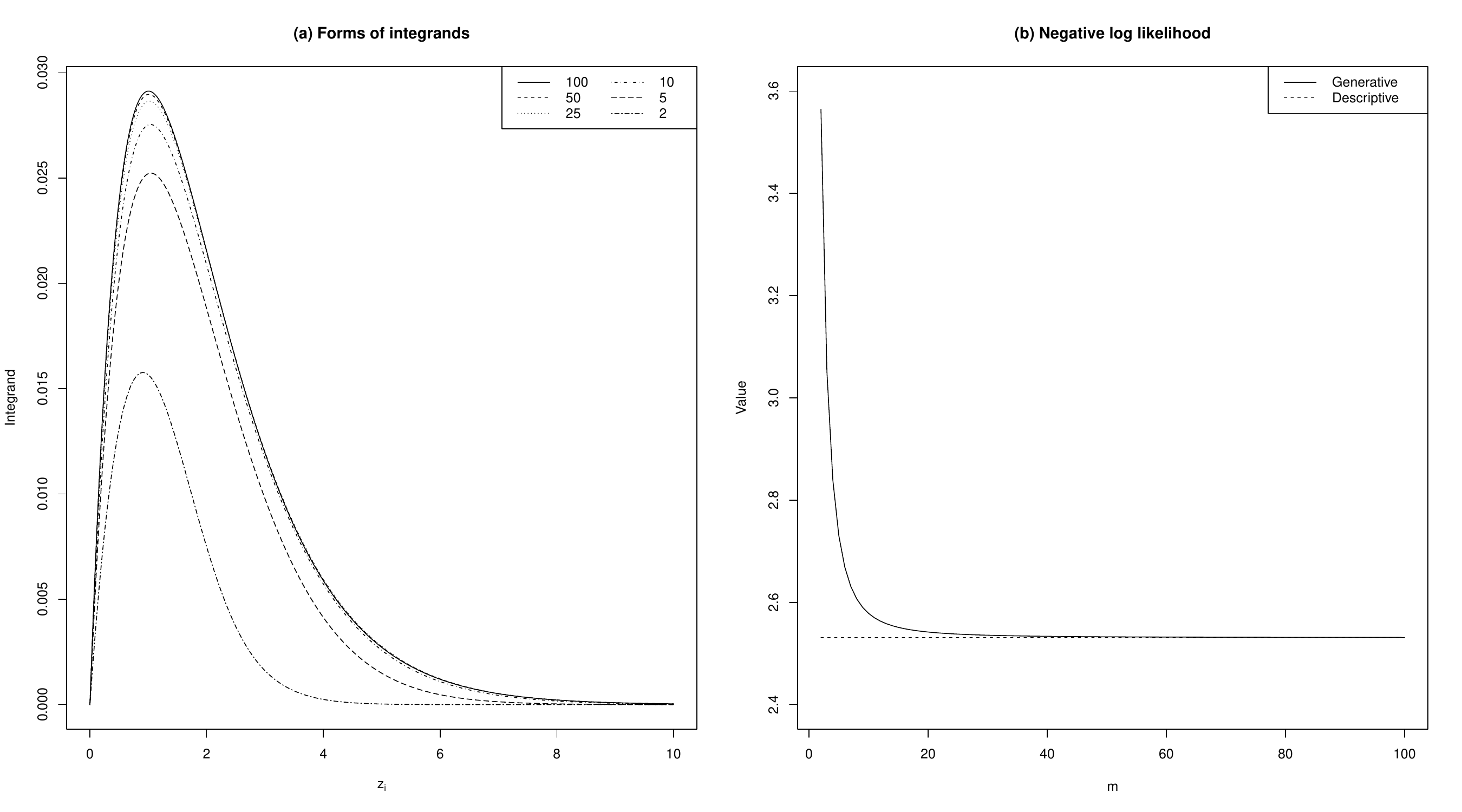}
\protect\caption{\label{fig:sim-lik}\small (a) Forms of the integrand 
in~(\ref{eq:lik-g-mix-mm-unif}), with $m=2,5,10,25,50,100$, as a function of $z_i$, and 
(b) the negative log likelihood function as a function of $m$, when $\underline{x}_{i}=-1$,
$\overline{x}_{i}=1$, $\mu_{c}=\mu_{\tau}=0$ and $\sigma_{c}^{2}=\sigma_{\tau}^{2}=1$.}
\afterpage{\FloatBarrier}
\end{figure}

In order to provide a direct comparison between descriptive and generative models, 
we construct our observed random intervals under the generative model as
$[x_{i}]=[\underline{x}_{i},\overline{x}_{i}]$, where $\underline{x}_{i}$ and 
$\overline{x}_{i}$ are respectively the observed minimum and maximum values 
of $x_{i1},\ldots,x_{im_i}$ under the mixture model
\begin{eqnarray}
x_{i1},\ldots,x_{im_i} & \sim & U(c_{i}-\mathrm{e}^{\tau_{i}},c_{i}+\mathrm{e}^{\tau_{i}}),
\label{eqn:sugar}\\
c_{i}\sim N(\mu_{c},\sigma_{c}^{2}) & \text{and} & \tau_{i}\sim 
N(\mu_{\tau},\sigma_{\tau}^{2}),\nonumber
\end{eqnarray}
for $i=1,\ldots,n$.
From Theorem~\ref{thm:g2d-unif}, this hierarchical model is asymptotically 
equivalent (as $m_i\to\infty$ for each $i$) to a descriptive model with 
$[x_{i}^{\star}]=
[c_{i}^{\star}-\mathrm{e}^{\tau_{i}^{\star}},c_{i}^{\star}+\mathrm{e}^{\tau_{i}^{\star}}]$,
where $(c_{i}^{\star},\tau_{i}^{\star})$ follows the same joint
distribution as $(c_{i},\tau_{i})$. While in practice random intervals will generally be 
constructed from different numbers of random samples, $x_{i1},\ldots,x_{im_i}$ (e.g.~see 
Section~\ref{sec:realDataAnalysis}), here we specify $m_{i}=m$ for all $i=1,\ldots,n$. 
In this analysis we will compare the maximum likelihood estimators (MLEs) of parameters 
for both generative and descriptive models obtained using  data simulated from each model.

\begin{figure}[t]
\centering
\includegraphics[scale=0.52]{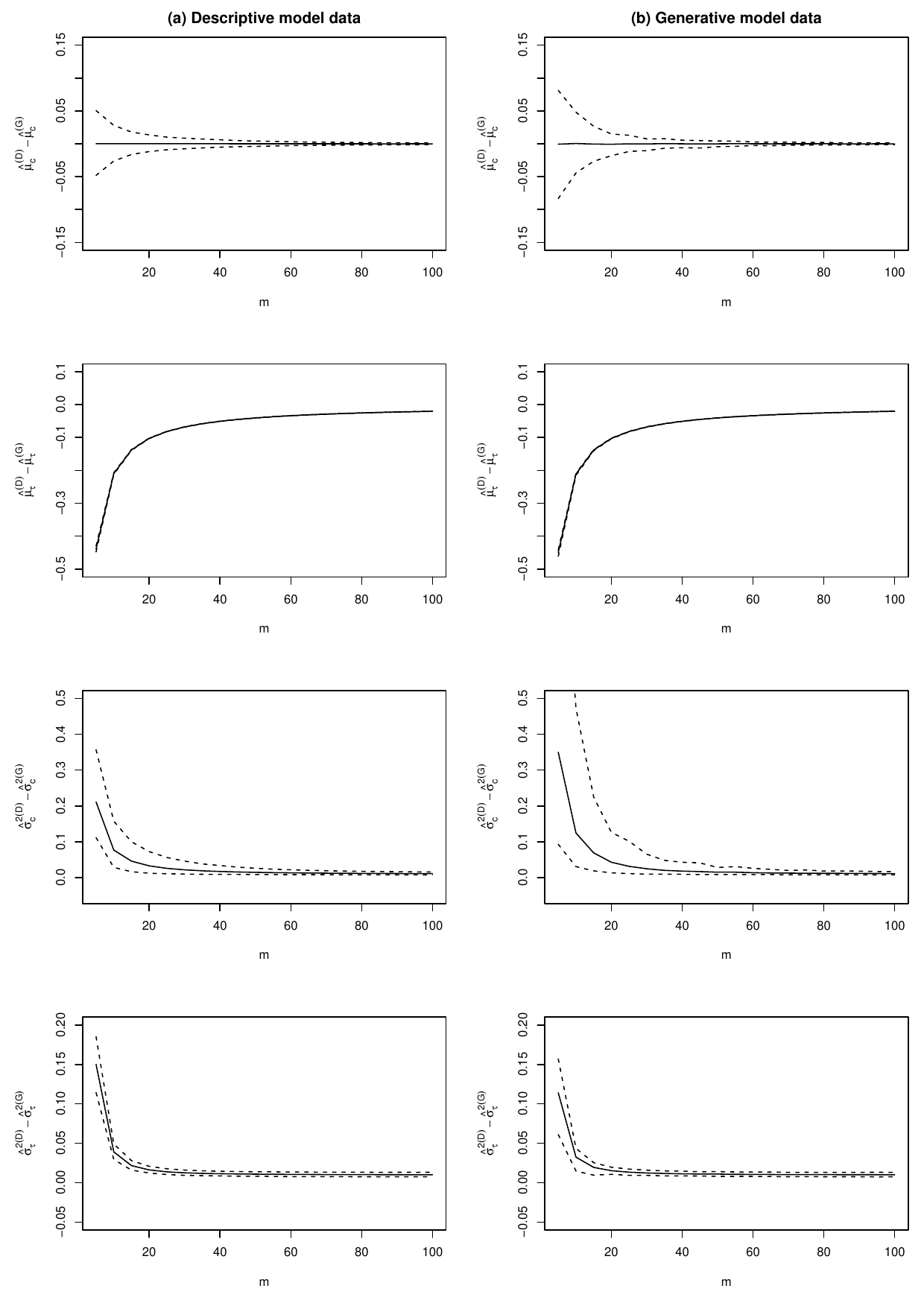}
\protect\caption{\small\label{fig:sim-compare} Differences between MLEs of the 
descriptive model and the generative hierarchical model, based on data generated from each 
model (left column = descriptive model data, right column = generative model data), as a 
function of $m=5, \ldots, 100$, the number of latent data points $x_{i1},\ldots,x_{im}$ in 
the generative model. Lines indicate the MLE means (solid lines) and 2.5\% and 97.5\% 
quantiles (dashed lines) based on 1,000 replicate datasets.}
\afterpage{\FloatBarrier}
\end{figure}

For each random interval $[x_i]$ under the mixture model, the two-dimensional 
integration~(\ref{eq:lik-g-mix}), with $\theta=(c_i,\tau_i)$, can be reduced to a 
one-dimensional integration by first integrating out $c_i$, and then reparameterising to
$z_{i}=m(\tau_{i}-\log\frac{1}{2}(\overline{x}_{i}-\underline{x}_{i}))$. 
This leads to
the likelihood function of a single interval observation $[x_{i}]$ given by
\begin{multline}
\int_{0}^{\infty}\!(\overline{x}_i-\underline{x}_i)^{-2}(m-1)\mathrm{e}^{-z_i}
\phi(m^{-1}z_i+\log\frac{\overline{x}_i-\underline{x}_i}{2};\mu_{\tau},\sigma_{\tau}^2)
\times\\
\{\Phi(\underline{x}_i+\frac{\overline{x}_i-\underline{x}_i}{2}
\mathrm{e}^{m^{-1}z_i};\mu_c,\sigma_c^2)-
\Phi(\overline{x}_i-\frac{\overline{x}_i-\underline{x}_i}{2}
\mathrm{e}^{m^{-1}z_i};\mu_c,\sigma_c^2)\}
\,\mathrm{d}z_{i},\label{eq:lik-g-mix-mm-unif}
\end{multline}
where $\phi$ and $\Phi$ respectively denote the Gaussian density and distribution
function. This form may be quickly and accurately approximated by
Gauss-Laguerre quadrature methods (e.g.~\citeNP{Evans2000}).
The form  of the integrand in (\ref{eq:lik-g-mix-mm-unif}) for varying $m$ and the 
resulting negative log-likelihood function is shown in Figure~\ref{fig:sim-lik} for 
$\underline{x}_{i}=-1$, $\overline{x}_{i}=1$, $\mu_{c}=\mu_{\tau}=0$
and $\sigma_{c}^{2}=\sigma_{\tau}^{2}=1$. 
These plots illustrate the convergence of the generative model to the descriptive model 
as $m$ gets large (Theorem~\ref{thm:g2d-unif}), with only very minor differences observed 
for $m>30$, and also suggest (panel (a)) that quadrature integration methods will be 
accurate with around 20 nodes.

We simulate 1,000 replicate datasets, each comprising $n=100$ intervals, from the 
descriptive model with  $c_{i}^{\star}, \tau_{i}^{\star}\sim N(0,1)$ for $i=1,\ldots,n$ 
(i.e.~$\mu_c=\mu_\tau=0$ and $\sigma^2_c=\sigma^2_\tau=1$).
MLEs of the model parameters ($\mu_c,\mu_\tau,\sigma^2_c,\sigma^2_\tau$) are obtained 
from fitting both descriptive and generative models, with the latter assuming a specified 
number of latent variables, $m$. Note that in practice, the number of latent variables, 
$m$, will typically be known (and finite).
The first column of Figure~\ref{fig:sim-compare} illustrates the differences between the 
resulting descriptive and generative model parameter MLEs 
(e.g.~$\hat{\mu}_c^{(D)}-\hat{\mu}_c^{(G)}$, 
where the superscripts indicate parameters of the descriptive ($D$) and generative ($G$) 
models), with the solid line indicating the mean and the dotted lines the central 95\% 
interval, computed over the 1,000 replicates.

Firstly, we notice that the difference between the estimates is large for small $m$, and 
becomes gradually smaller as $m$ increases. This is not surprising as in this model 
specification, the generative model approaches the descriptive mode as $m\to\infty$.
However, as both models are identically centred, the mean difference between the location 
parameter estimates $\hat{\mu}_c^{(D)}$ and $\hat{\mu}_c^{(G)}$  is zero, regardless of 
the number of latent variables.

An obvious area of difference is that the point estimates of the interval half-range 
(modelled by $\mu_\tau$) are much smaller for the (correct) descriptive model than for 
the generative model. This occurs as, the expected range of $x_{i1},\ldots,x_{im}$ under 
a generative model is lower for small $m$ than it is for large $m$. As a result, the 
generative model will determine that $\mu_\tau$ should be sufficiently larger for small 
$m$ than it would be for large $m$, given the same observed 
$[\underline{x}_i,\overline{x}_i]$. 
That is, if the data are truly generated from the descriptive model, parameters estimated 
from the generative model are effectively biased for any finite $m$, and overestimate the 
true model parameters, with the magnitude of the bias determined by the assumed value of 
$m$. Of course, this bias can be reduced by setting $m$ to be large in this case.

The second area of difference is that the estimated variability of the point estimates
of interval location and scale ($\hat{\sigma}^2_c$ and $\hat{\sigma}^2_\tau$) is 
higher under the descriptive model than under the generative model. This occurs as the 
generative model assumes that the variability of e.g.~$\frac{\underline{x}_i+\overline{x}_i}{2}$ comprises both the 
variability of the latent data $x_{i1},\ldots,x_{m}$ within interval $i$, in addition 
to the variability of interval locations $c_i$ between intervals. Under the descriptive 
model, this first source of variability is zero, and therefore $\hat{\sigma}^{2(D)}_c$ 
will always be greater than $\hat{\sigma}^{2(G)}_c$ for finite $m$. Similar reasoning  
explains why $\hat{\sigma}^{2(D)}_\tau$ is always greater than $\hat{\sigma}^{2(G)}_\tau$.

The second column of Figure~\ref{fig:sim-compare} shows the same output as the first 
column, but based on data simulated from the generative model with the same parameter 
settings as before, and for varying (true) numbers of latent data points 
$m=5,\ldots, 100$. The 
results are similar to before, except critically with the interpretation that the 
generative model with fixed $m$ is now correct. This means that, for example, if 
intervals are constructed using the generative process (which is the most likely 
scenario in practice) but are then analysed with a descriptive model, the point 
estimates of interval range ($\mu_\tau$) can be substantially underestimated by assuming
$m\to\infty$ under the descriptive model, when in fact $m$ is small and finite. 
Similarly, the estimates of $\sigma^2_c$ and $\sigma^2_\tau$ will always be overestimated 
when assuming an incorrect descriptive model.
These scenarios will obviously be problematic for data analysis in practice.

The takeaway message of  this analysis is that it is important to fit the model 
(descriptive or generative) that matches the interval (or $p$-hyper-rectangle) construction 
process. Failure to do so can result in misinterpretation of model parameters, resulting 
in severe biases in parameter estimates, which can then detrimentally impact on an 
analysis. In practice, intervals tend to be constructed from underlying classical data 
(e.g.~see Section \ref{sec:realDataAnalysis}), using a known process and where $m$ is 
also known. This implies that the generative model is a more natural construction than 
the descriptive model, and with parameters that more directly relate to the observed data.

While this analysis has assumed uniformity of the generative process (\ref{eqn:sugar}) in 
order that the descriptive model is obtained as  $m\to\infty$, and hence that the parameter 
estimates between the two models can be directly compared, the same principles of 
interpretation and bias occur regardless of the generative model. The parameters are 
simply less directly comparable with each other.

%%%%%%%%%%%%%%%%%%%%%%%%%%%%%%%%%%%%%%%%%%%%%%%%%
\subsection{Analysis of credit card data}
%%%%%%%%%%%%%%%%%%%%%%%%%%%%%%%%%%%%%%%%%%%%%%%%%
\label{sec:realDataAnalysis}

\begin{figure}[t]
\centering
\includegraphics[scale=0.41]{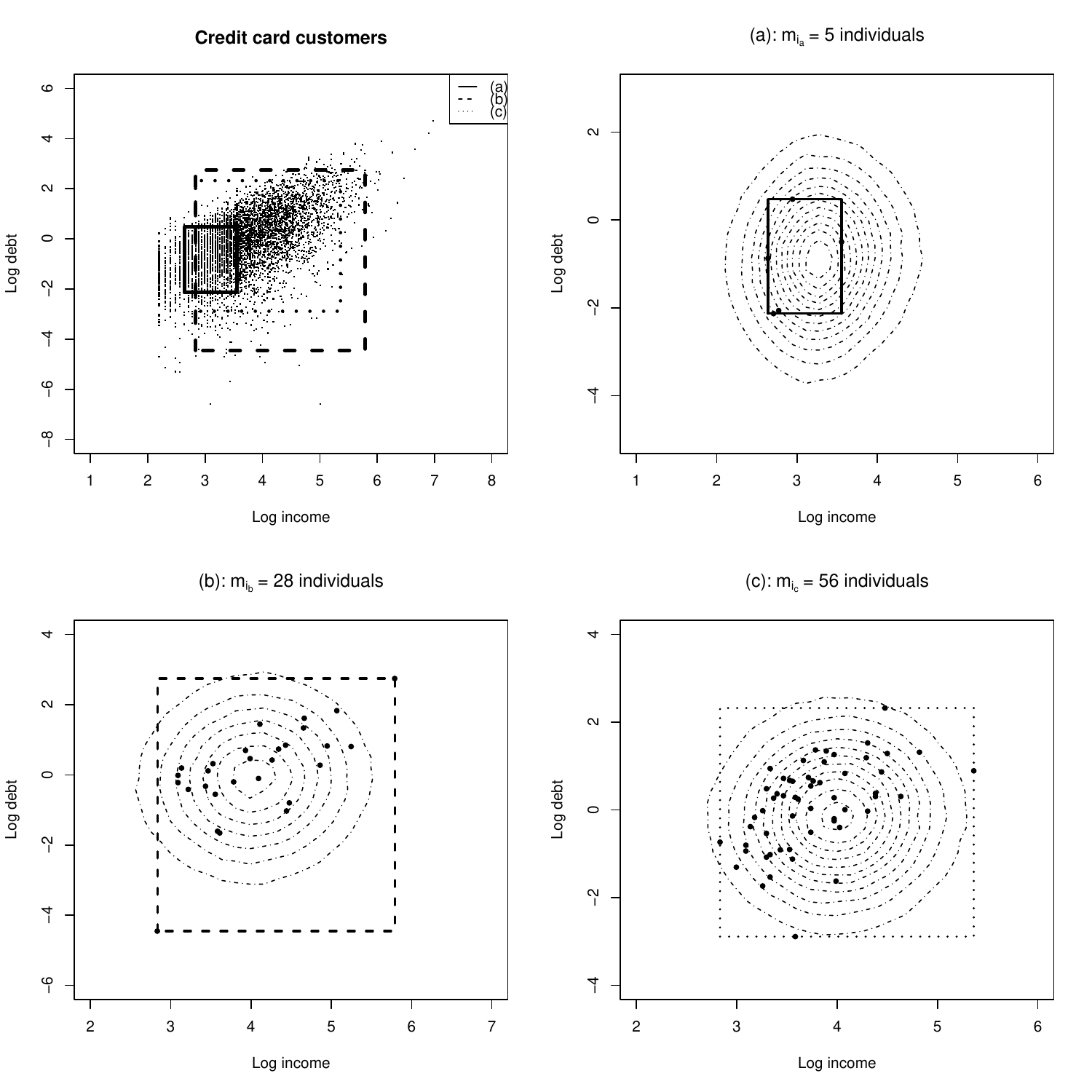}
\protect\caption{\label{fig:customer-data}
\small Log income and log credit card
debt in thousands of US\$ for $5,000$ customers. Panels illustrate 
rectangle-valued observations constructed from three groups of customers comprising (a) 
$m_{i_a}=5$, (b) $m_{i_b}=28$ and (c) $m_{i_c} = 56$ individuals.
The contours in the last three panels
indicate the predictive distributions of individuals for each
group conditional on the corresponding rectangle-valued observations, based on the generative model.}
\afterpage{\FloatBarrier}
\end{figure}

The data (available in the SPSS package \textit{customer.dbase}) comprise log income 
and log credit card debt in thousands of US\$ of 5,000 credit card customers.
In a previous analysis using descriptive models by \citeN{Brito2012}, 
these data were aggregated into random bivariate rectangles by  stratifying individuals 
according to gender, age category (18-24, 25-34, 35-49, 50-64, 65+ years old), 
level of education (did not complete high
school, high-school degree, some college, college degree, undergraduate
degree+), and designation of primary credit card (none, gold, platinum,
other). This leads to 192 non-empty groups, each producing a random rectangle 
$[x_{i1}]\times[x_{i2}]$ constructed by the intervals bounded by
the minimum and maximum observed values on log income and log credit card
debt. 

The data are illustrated in Figure~\ref{fig:customer-data}, along with the underlying 
data and constructed random rectangles for three of the 192 groups, containing (a) 
$m_{i_a}=5$, (b) $m_{i_b}=28$ and (c) $m_{i_c}$=56 individuals. The number of individuals in all 
groups varies greatly (from 5 to 56), and it is noticeable that the distribution of 
individuals within each group comes from a non-uniform distribution. As a result, the 
usual uniformity assumption of descriptive models for random rectangles is clearly 
inappropriate. The generative model is more suited to dealing with these heterogeneous 
rectangle-valued data containing complex intra-rectangle structures.

Given the clear non-uniformity within each group $i$, we assume that the underlying data 
are Gaussian with group-specific means and covariances. That is
\[
    (x_{i1},x_{i2})\sim N_2(\mu_i,\Sigma_i)
\]
for $i=1,\ldots,n=192$, where $\mu_i=(\mu_{i1},\mu_{i2})$ and 
$\Sigma_i=\mbox{diag}(\sigma^2_{i1},\sigma^2_{i2})$. Note that we choose to model log 
income and log credit card debt as uncorrelated, despite there being some visual evidence 
of positive correlation in the data underlying each random rectangle. It is worth briefly 
explaining this decision in detail. For a small number of latent data points $m_i$, it is 
possible for a single point to determine both upper (or lower) ranges of the random 
rectangle, and the probability of this occurring increases as the correlation of the 
underlying data increases. So in principle, there is some information about the 
correlation structure of the underlying data available through the associated random 
rectangle. However, for groups with larger $m_i$, the upper and lower ranges of the 
random rectangles are more likely to be determined by four individual data points, in 
which case it is not  then possible to discern the underlying correlation structure. 
Although we have several groups with small numbers of latent data points (e.g.~$m_{i_a}=5$), 
in principle allowing their correlation to be estimated, note that the same random 
rectangles will arise whether the latent data are positively or negatively correlated. That 
is, the correlation parameter is non-identifiable from the observed rectangle data.
As such, we proceed without attempting to estimate this parameter, despite information on 
the magnitude of the correlation being available in principle for some groups.

We model the group-specific (local) parameters as
\begin{eqnarray}
    (\mu_{i1},\mu_{i2}) &\sim& N_2(\theta_1,\theta_2,\lambda^2_1,\lambda^2_2,\rho_\mu),\nonumber \\
    \log\sigma^2_{ij} &\sim& N(\eta_j,\epsilon^2_j),\label{eq:customer_prior}
\end{eqnarray}
for $j=1,2$ and $i=1,\ldots,192$. The integration in the generative 
model~(\ref{eq:lik-g-mix-hyper}) is achieved using Gauss-Hermite quadratures with 
$20^4$ nodes to integrate over the four parameters.

\begin{table}[t]
\centering
\begin{tabular}{cc|ccccccccc}
& & $\theta_1$ & $\lambda_1^2$ & $\theta_2$ & $\lambda_2^2$ & $\rho_\mu$ 
& $\eta_1$ & $\epsilon^2_1$ & $\eta_2$ & $\epsilon^2_2$\\
\hline\multirow{3}{*}{Generative} & MLE & 3.76 & 0.13 & -0.36 & 0.21 & 0.90 & 
-1.20 & 0.48 & 0.41 & 0.09\\
& \multirow{2}{*}{95\% CI} & 3.70 & 0.10 & -0.44 & 0.13 & 0.83 & -1.31 & 0.35 & 0.34 & 0.04\\
& & 3.82 & 0.17 & -0.26 & 0.29 & 0.98 & -1.09 & 0.61 & 0.48 & 0.13\\
\hline\multirow{3}{*}{Descriptive} & MLE & 3.79 & 0.17 & -0.42 & 0.52 & 
0.57 & 0.02 & 0.20 & 0.82 & 0.09\\
& \multirow{2}{*}{95\% CI} & 3.74 & 0.13 & -0.53 & 0.41 & 0.47 & -0.04 & 0.16 & 0.78 & 0.07\\
& & 3.85 & 0.20 & -0.32 & 0.62 & 0.67 & 0.08 & 0.24 & 0.87 & 0.11\\ 
\end{tabular}
\protect\caption{\label{tab:customer-est}
\small Maximum likelihood estimates and 95\% asymptotic confidence intervals for the 
parameters of the generative and descriptive models for the credit card dataset.}
\end{table}

Maximum likelihood estimates and 95\% confidence intervals for each model parameter are 
illustrated in Table~\ref{tab:customer-est} for both generative and descriptive models.
Similar to the results for the
simulated examples, the point estimates of location ($\theta_{1}$ and $\theta_{2}$)
are broadly insensitive to the choice of model, however the estimated values
for many of other parameters differ between the two models. 
Most importantly, the estimated values of $\rho_\mu$ are considerably larger for the 
generative model ($\hat{\rho}_\mu=0.9040$) compared to the descriptive model 
($\hat{\rho}_\mu=0.5695$). While both of these indicate a positive relationship between 
income and credit card debt, which is evident in the underlying data in 
Figure~\ref{fig:customer-data}, there is a clear difference in the strength of that 
relationship. The descriptive model results in a weaker estimated value in the correlation
because it does not take the noisy data generating process into account.
While we suspect that the generative model may be the more accurate of the two given the 
data generating procedure used to construct the random rectangles, in terms of drawing inferential 
conclusions about the underlying data, it is critical that we are certain in this regard.

\begin{figure}[t]
\centering
\includegraphics[scale=0.46]{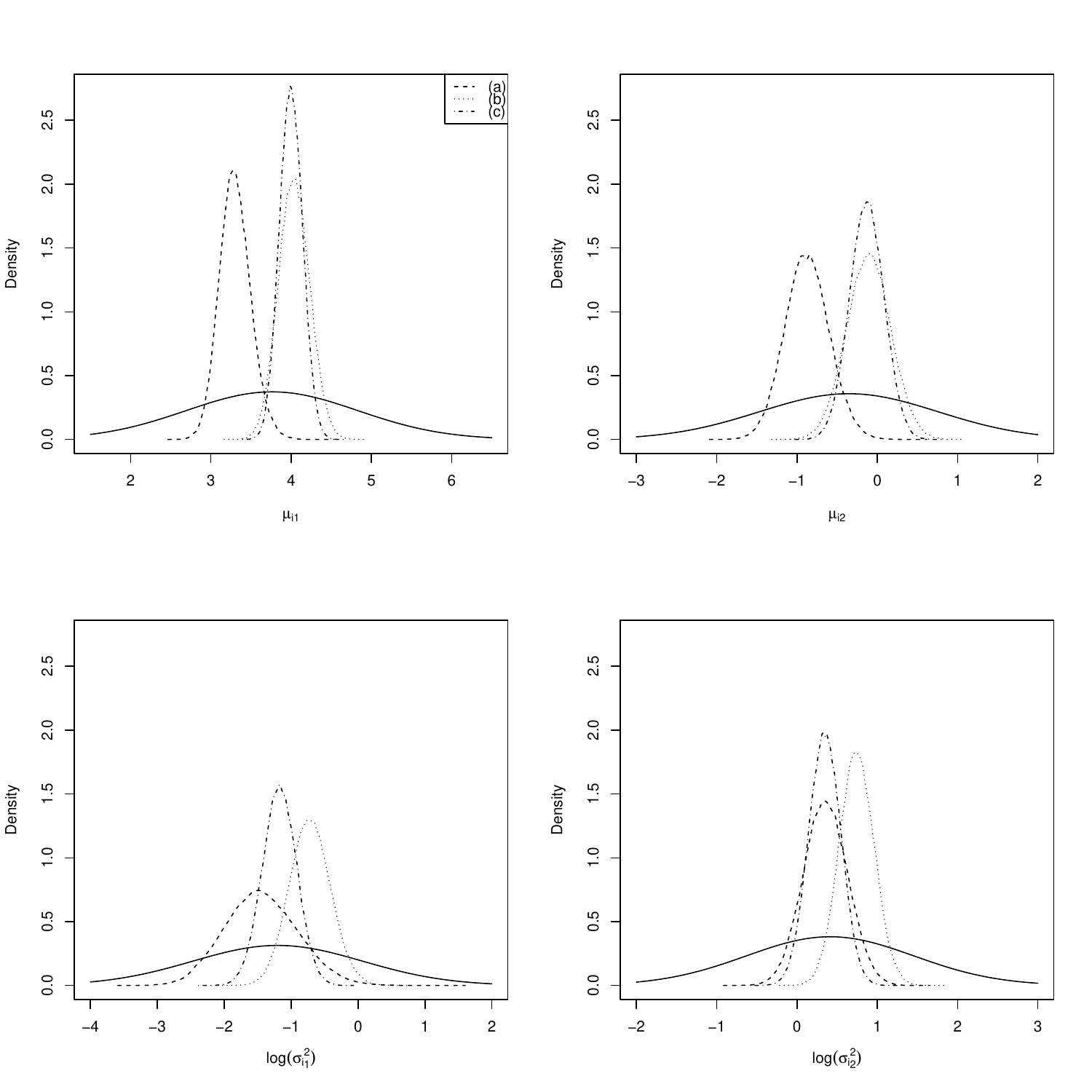}
\protect\caption{\label{fig:customer-posterior}
\small Estimated marginal posterior distributions of the local parameters 
$\mu_{i1}$, $\mu_{i2}$, $\sigma^2_{i1}$ and $\sigma^2_{i2}$ 
associated with the three groups (a)--(c) shown in Figure~\ref{fig:customer-data}. 
Solid lines correspond to the prior distributions for local parameters}
\afterpage{\FloatBarrier}
\end{figure}

\begin{figure}[t]
\centering
\includegraphics[scale=0.43]{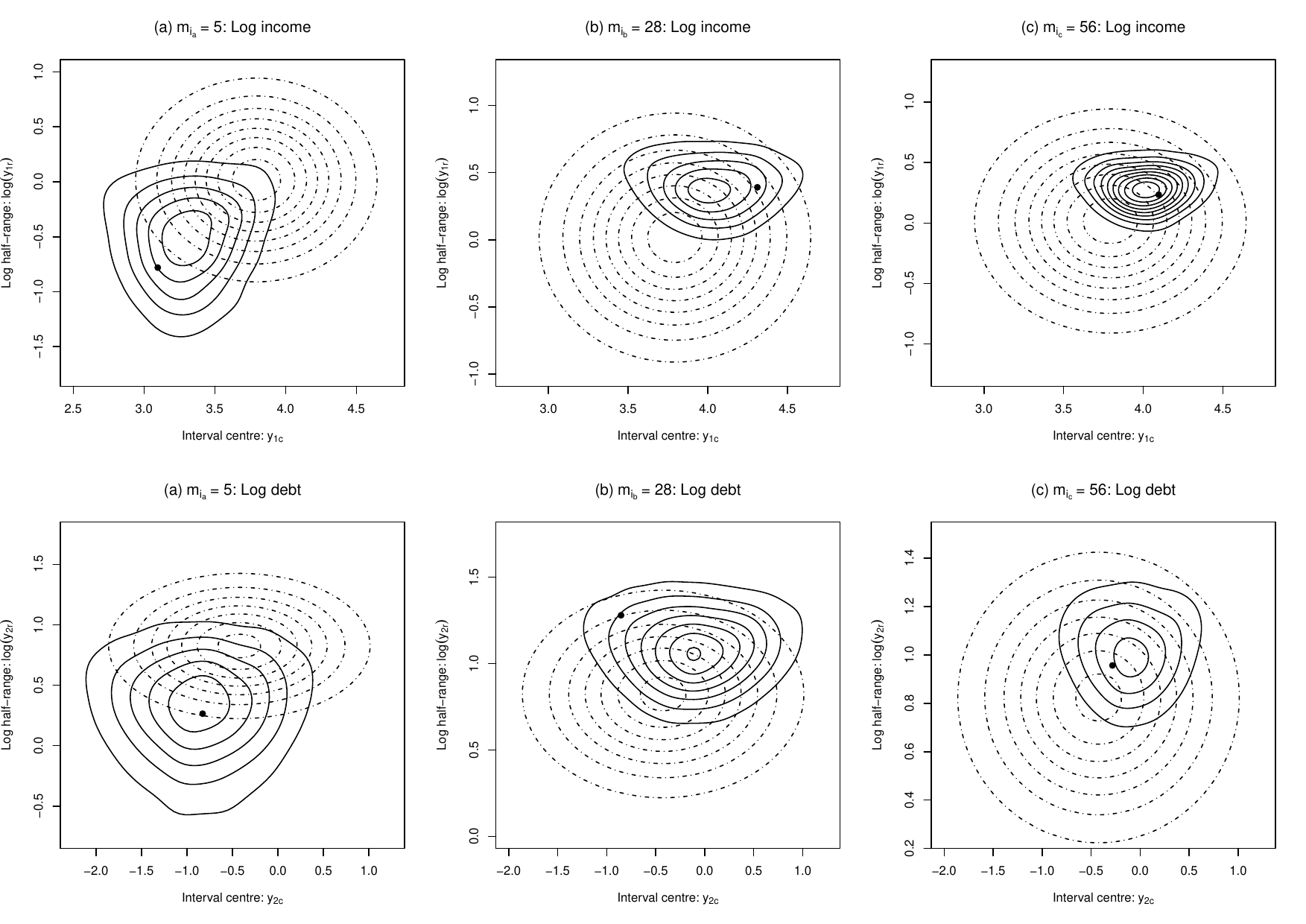}
\protect\caption{\label{fig:customer-modelcheck}\small Posterior predictive distribution of a random 
rectangle $[y_1]\times[y_2]$ for each of the groups (a)--(c) (left column to right) in 
Figure~\ref{fig:customer-data}. Columns illustrate the marginal random intervals of log income 
($[y_1]$, top row) and log debt ($[y_2]$, bottom row) with each interval $[y_j]=[y_{j1},y_{j2}]$ 
expressed in interval centre and half-range form $(y_{jc}, y_{jr})=((y_{j1}+y_{j2})/2, (y_{j2}-y_{j1})/2)$ 
for $j=1, 2$. Solid and dashed lines indicate predictive distributions of generative and descriptive 
models, respectively. The dot indicates the observed interval $[x_{i1}]\times[x_{i2}]$ used for model fitting.}
\afterpage{\FloatBarrier}
\end{figure}

For the generative models, 
the distribution of the local 
parameters $(\mu_{ij},\sigma_{ij}^2)$ for each rectangle-valued observation can be computed by 
empirical Bayesian methods (previously these parameters were integrated out for the optimisation 
in Table \ref{tab:customer-est}). The prior distribution for the local parameters is the global 
distribution~(\ref{eq:customer_prior}) with its parameter values given by the estimates
in Table~\ref{tab:customer-est}, and the likelihood function is the local density function 
of one observed rectangle.
The resulting marginal posterior distributions for the parameters of the observed rectangles (a)--(c) 
(Figure~\ref{fig:customer-data})
are shown in Figure~\ref{fig:customer-posterior}. Compared to the prior (solid line)
the parameters are well informed, even for rectangle (a) with $m_{i_a}=5$ observations, 
with the level of precision increasing with the number of individuals within each rectangle.

Goodness-of-fit for both descriptive and generative models can be evaluated through model 
predictive distributions of random rectangles, in addition to predictive distributions for 
individual data points for the generative model.
In the latter case, based on the posterior distributions of the local parameters in 
Figure~\ref{fig:customer-posterior}, the predictive distributions of individual data points 
within the random intervals (a)--(c), conditional on observing the associated random 
interval,  are shown in Figure~\ref{fig:customer-data}.
While the predictive distributions are marginally independent due to the model specification, 
their coverage describes the observed data well. For group (a) the predictive distribution 
covers a wider region than the observed rectangle, as this rectangle is constructed from only 5 
individuals. As the number of individuals increases in groups (b) and (c), the predictive 
regions more closely represent the region of the observed rectangle, indicating that the 
generative model has the ability to correctly account for the different numbers of individuals 
used to construct each rectangle. 
The predictive distribution for group (b) individuals  also indicates some robustness to the 
two outliers that completely define the observed rectangle. This occurs as the model correctly 
accounts for the fact that rectangle (b) is constructed from half the number of observations 
used to construct the rectangle of group (c), even though both rectangles are roughly the same 
size.

The predictive distributions of random rectangles  for groups (a)--(c) are illustrated in 
Figure~\ref{fig:customer-modelcheck} for both descriptive (dashed lines) and generative (solid lines) 
models. Shown are the bivariate predictive distributions of interval centre and log half-range, for both  
log income (top row) and log debt (bottom row). The dot indicates the observed interval. Under the
generative model, these distributions are obtained directly from the predictive distributions for 
individuals (Figure~\ref{fig:customer-data}).

In all cases, the predictive distributions of the generative model more accurately, and more precisely 
identify the location of the observed data. This is particularly the case in group (a) in which the 
descriptive model is clearly indicating a lack of model fit. The predicted interval for log debt in 
group (b) is not fully centred on the observed interval, as the model attempts to 
account for the unlikely (under the model) construction of the observed interval by outliers 
(Figure~\ref{fig:customer-data}). However, the observed data are still well predicted under the 
generative model.
The overall fit to the observed data is better under the generative model than the descriptive model, 
indicating that it more accurately describes the complexities of the observed data.

%%%%%%%%%%%%%%%%%%%%%%%%%%%%%%%%%%%%%%%%%%%%%%%%%
\subsection{Robustness to model mis-specification}
%%%%%%%%%%%%%%%%%%%%%%%%%%%%%%%%%%%%%%%%%%%%%%%%%
\label{sec:robustness}

Until now we have
focused on the setting where both the underlying model $f(x|\theta)$ and the data aggregation function $\varphi(\cdot)$ are known. When the true $f(x|\theta)$ is not known, this is the standard setting of statistical model mis-specification.
There are two possible mis-specification scenarios in which $\varphi(\cdot)$ may not be known.
Firstly, $\varphi(\cdot)$ may have been mis-reported, so that e.g.~different quantiles were used to  construct intervals from data than were modelled in $\varphi(\cdot)$. 
Secondly, $\varphi(\cdot)$ may simply be unknown, so that the task is to analyse data which has quantiles $\underline{X}$ and $\overline{X}$, but where it is not known what quantiles these are.
In this second scenario, at best the generative likelihood could be integrated over all possible $\varphi(\cdot)$ with respect to some prior measure.
 It is possible that with  informative prior information this could yield {\em some} viable inference, but this would likely be circumstantial and not ideal.

The following analysis aims to examine the effect of mis-specifying the fitted model and $\varphi(\cdot)$.
We consider data $x_{1:m}$, with $m=1000$, generated independently from either normal or uniform distributions, both with mean $\mu = 0$ and standard deviation $\sigma = 2$. 
To evaluate the effect of outliers, we create additional datasets which replace $5\%$ of each original dataset by observations drawn from the (normal or uniform) generating distribution with $\mu=0$ and $\sigma=5$.
For each dataset, observed intervals are constructed through the aggregation function $\varphi_i := \varphi_{i,m-i+1}(x_{1:m})= [x_{(i)},x_{(m-i+1)}]$, with $i=1$ and $i=250$ corresponding to constructing intervals based on sample minimum/maximum and the 1st/3rd quartiles.
For each of these interval datasets, we fit both normal and uniform models, and assess
the impact of knowing the aggregation function $\varphi(\cdot)$ by supposing the observed intervals are obtained from $\varphi_i$ with $i=1,50, 100, \ldots, 450$.

Figure~\ref{fig:robustness} shows boxplots of 500 replicate maximum likelihood estimates of $\mu$ (top row) and $\log(\sigma)$ (middle row), when the true underlying data distribution is normal, as a function of the aggregation function $\varphi_i$ used to fit the model. The true interval aggregation function is $\varphi_1$ ($i=1$; left two columns) and $\varphi_{250}$ ($i=250$; right two columns), and use of this is indicated by the shaded boxplots. In each panel
the horizontal line denotes the true parameter value and the rightmost boxplot shows the impact of using the true aggregation function with the outlier datasets.

The mean ($\mu$; top row)  is consistently well estimated, regardless of the model being fitted or the aggregation function. This is not surprising, as changing $\varphi_i$ affects the scale of the intervals and not their location.
However for $\log\sigma$ (middle row), when the model being fitted is correct (columns 1 and 3), using generative model aggregation functions that use narrower (wider) quantiles than actually used to construct the empirical interval, leads to larger (smaller) estimates of $\sigma$.
This observation also holds when fitting the uniform model, although the picture is distorted due to the model mis-specification (fitting a uniform model to normal data).
That is, when the model is correctly specified under the true data aggregation process, the maximum likelihood estimates are accurate.

A goodness-of-fit check between predicted and observed intervals would not reveal problems in any of the above analyses:
both models are in the location-and-scale family, and so each can describe all observed interval datasets well. However, differences can easily be seen by comparing to the original underlying data.
The bottom row of Figure~\ref{fig:robustness} denotes qq-plots of the fitted model ($y$-axis) against the original sample $x_{1:m}$ (in practice, this would be constructed from a sub-sample of the data when dealing with very large datasets).
In all cases, only when the model and aggregation function are correct does the qq-plot align on the $y=x$ axis. Deviation away from this indicates that  either model or $\varphi(\cdot)$, or both, are incorrect. As the data aggregation function will typically be known, this would usually suggest that it is the fitted model that needs further requirement. However, when the data aggregation function is mis-specified then it may be difficult to identify a fitted model that, in combination with the mis-specified $\varphi(\cdot)$, will fit the data well. A failure to improve on a model's goodness-of-fit when modifying the model, could therefore indicate that the data aggregation function is mis-specified.

In the presence of outliers in the original dataset (rightmost boxplots in each panel), as might be expected, constructing intervals that are robust to these (e.g.~using the 1st/3rd quartiles) produce more sensible results than less robust intervals (e.g.~using min/max). 
Qualitatively similar conclusions to the above can be drawn when the true data generating process is uniform rather than normal (see Supplementary Information).

\begin{landscape}
\begin{figure}
\vspace{-7mm}
\center
$
\begin{array}{cccc}
\:\:\:\:\:\:\textrm{ Normal model, } \varphi_1&
\:\:\:\:\:\:\textrm{ Uniform model, } \varphi_1&
\:\:\:\:\:\:\textrm{ Normal model, } \varphi_{250}&
\:\:\:\:\:\:\textrm{ Uniform model, } \varphi_{250}\\
\includegraphics[width=0.33\textwidth]{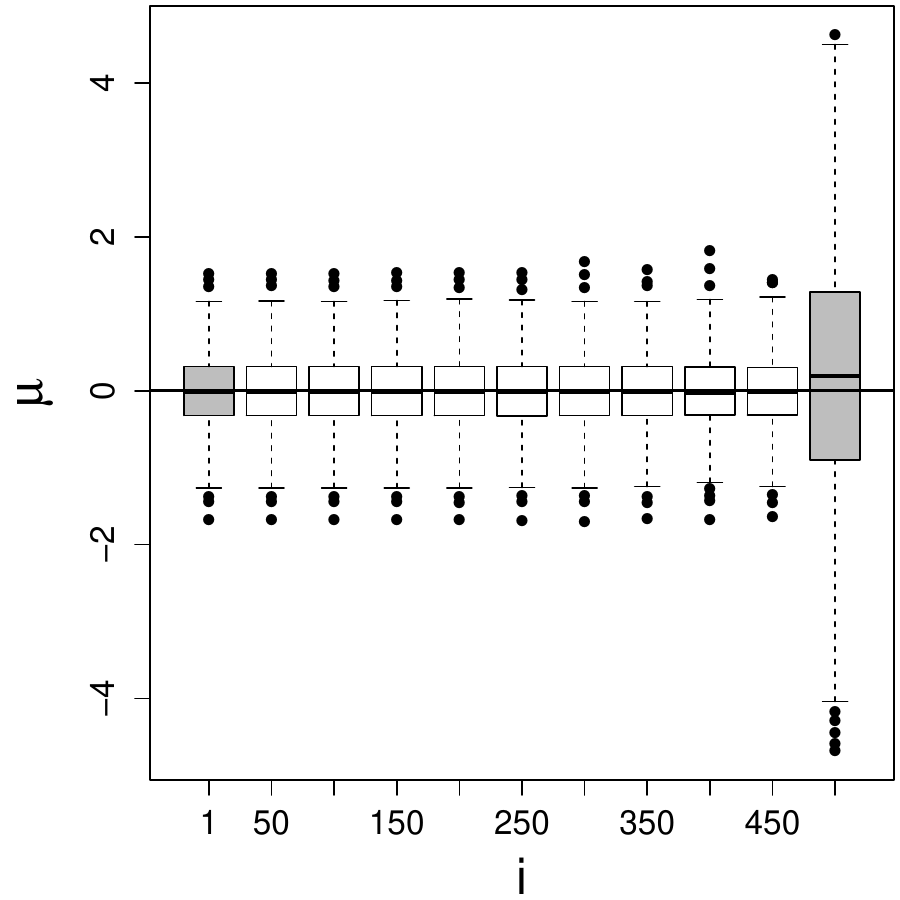} &
\includegraphics[width=0.33\textwidth]{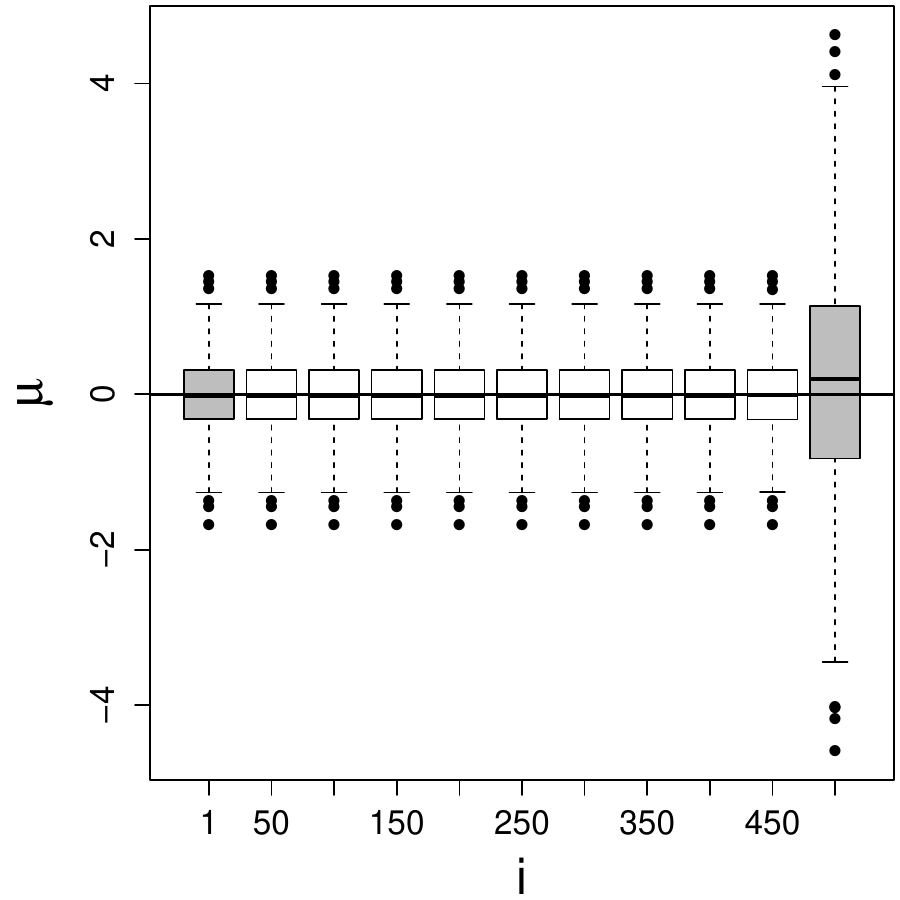} &
\includegraphics[width=0.33\textwidth]{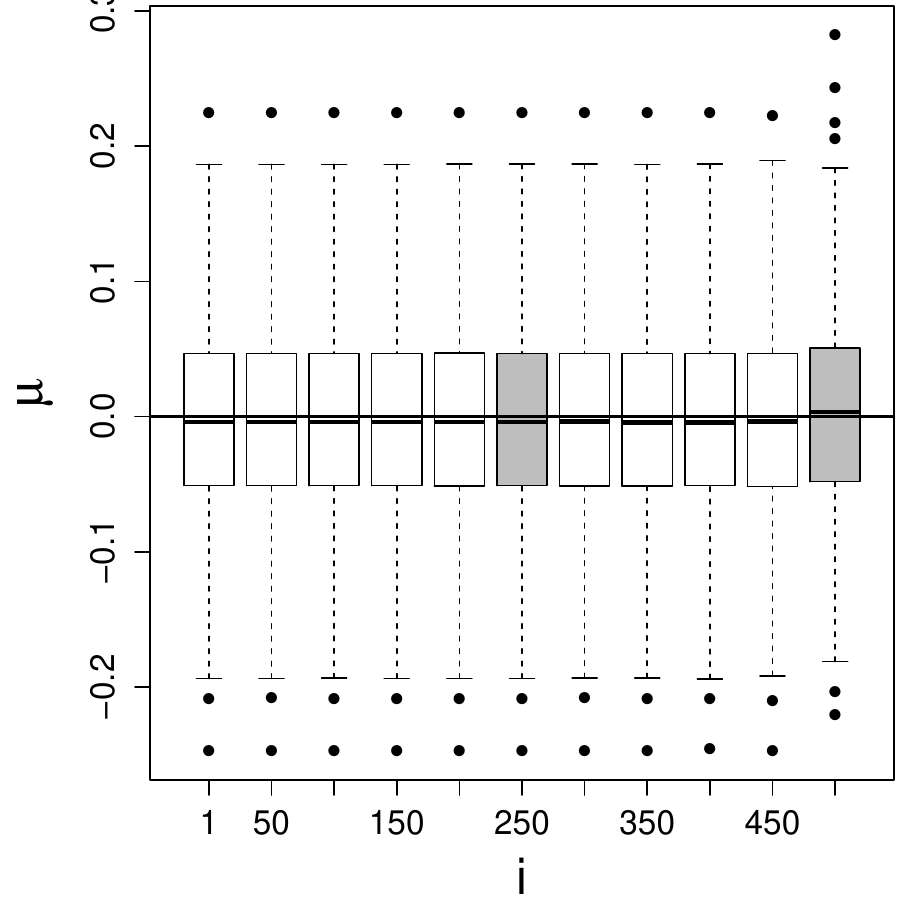} &
\includegraphics[width=0.33\textwidth]{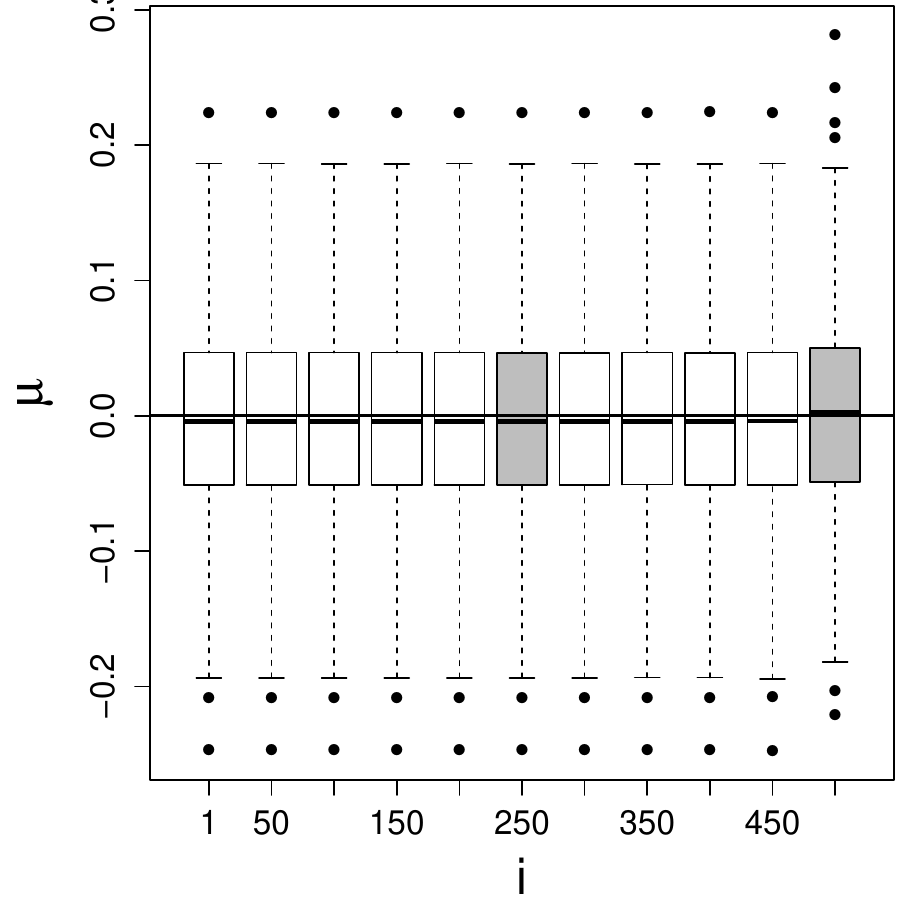} 
\\
\includegraphics[width=0.33\textwidth]{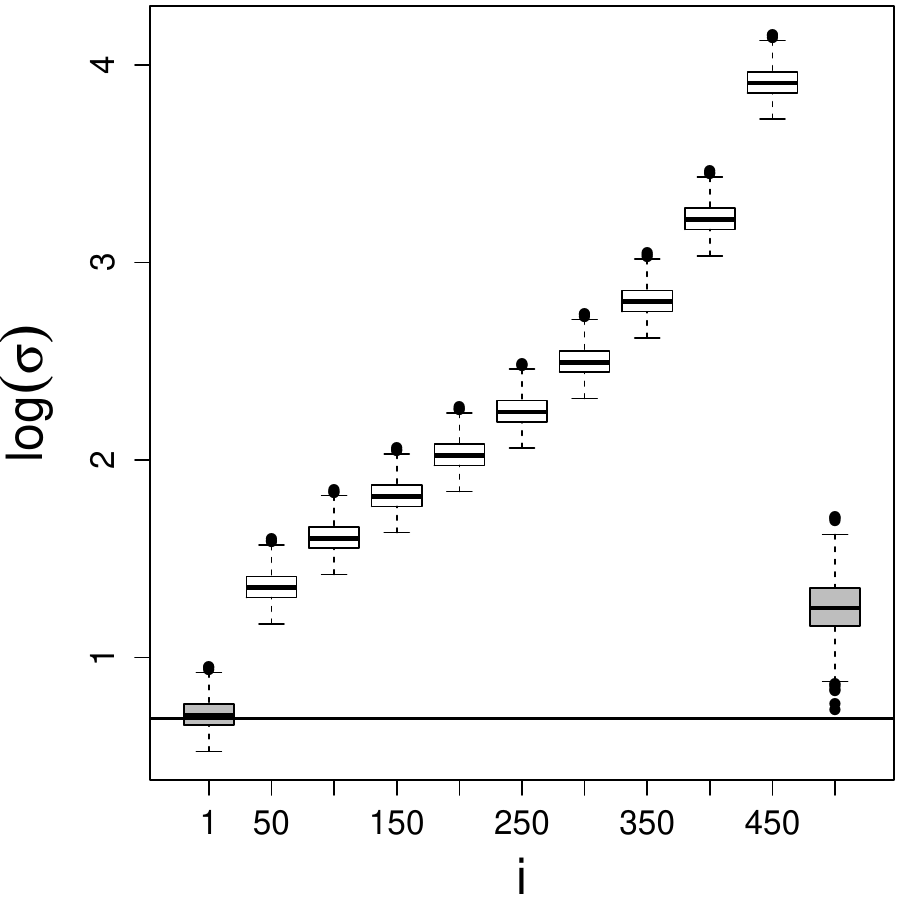} &
\includegraphics[width=0.33\textwidth]{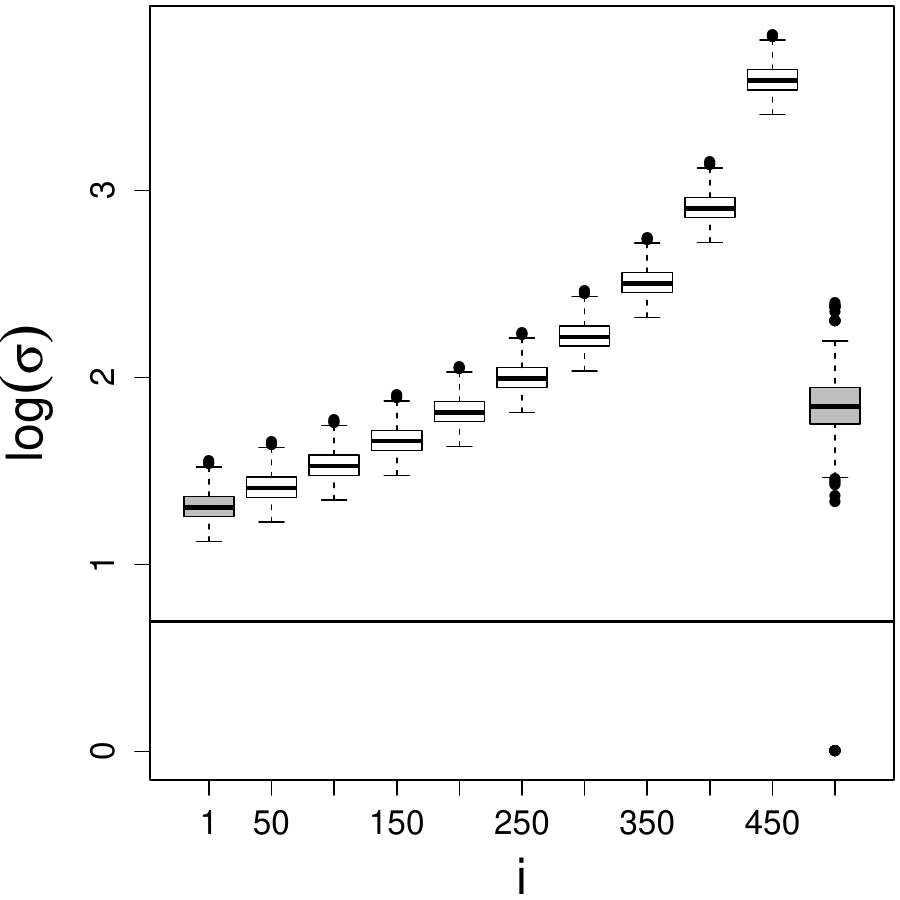} &
\includegraphics[width=0.33\textwidth]{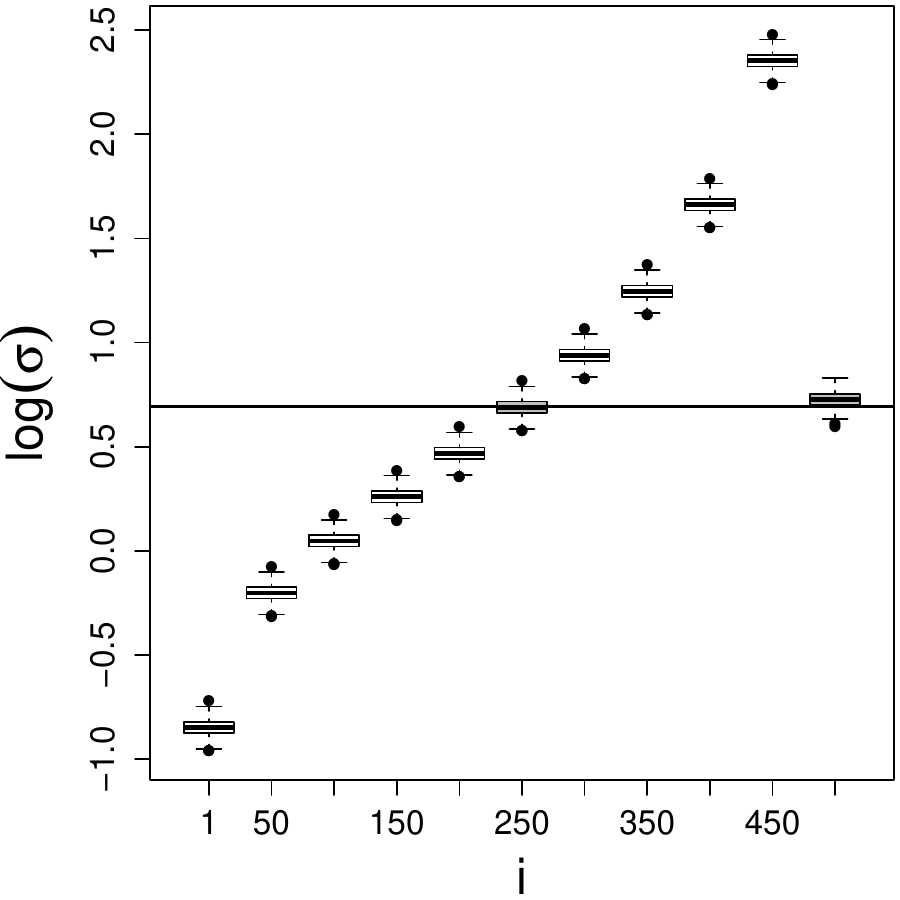} &
\includegraphics[width=0.33\textwidth]{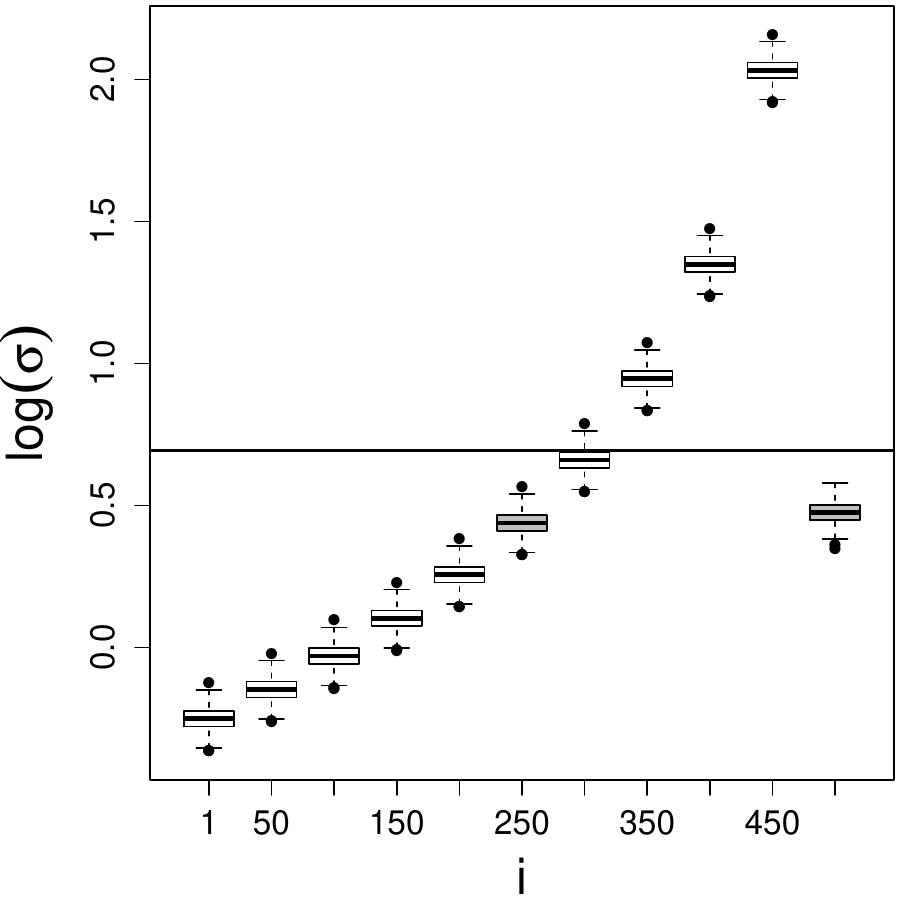} 
\\
\includegraphics[width=0.33\textwidth]{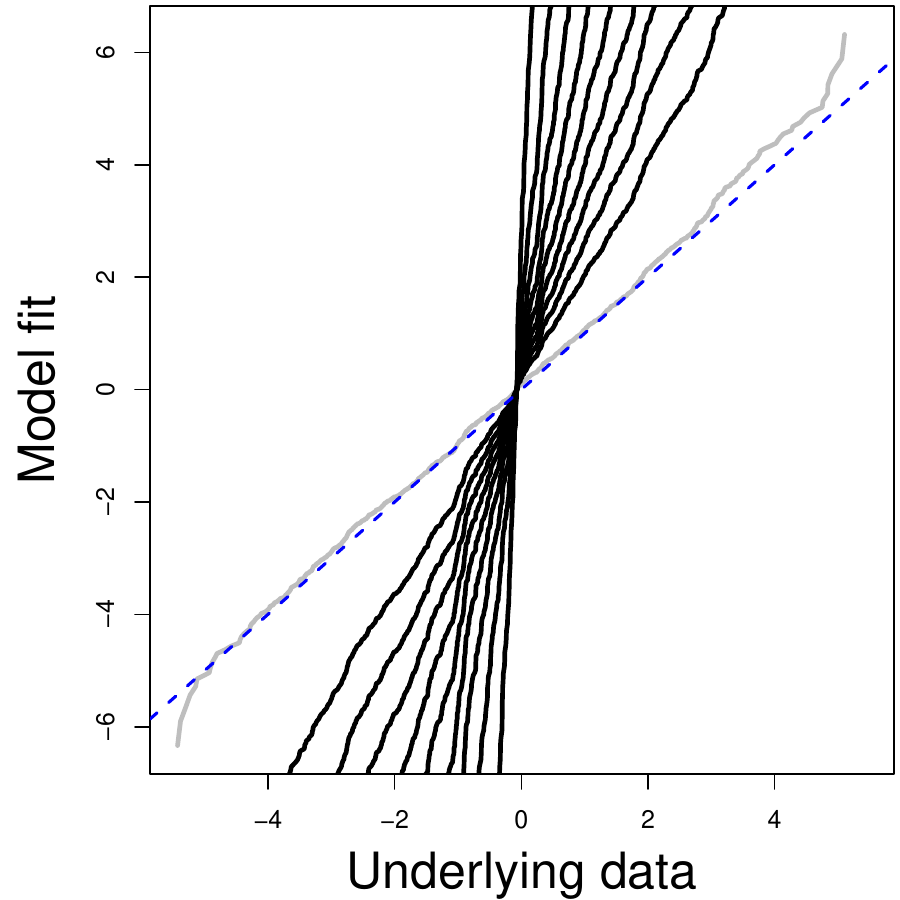} &
\includegraphics[width=0.33\textwidth]{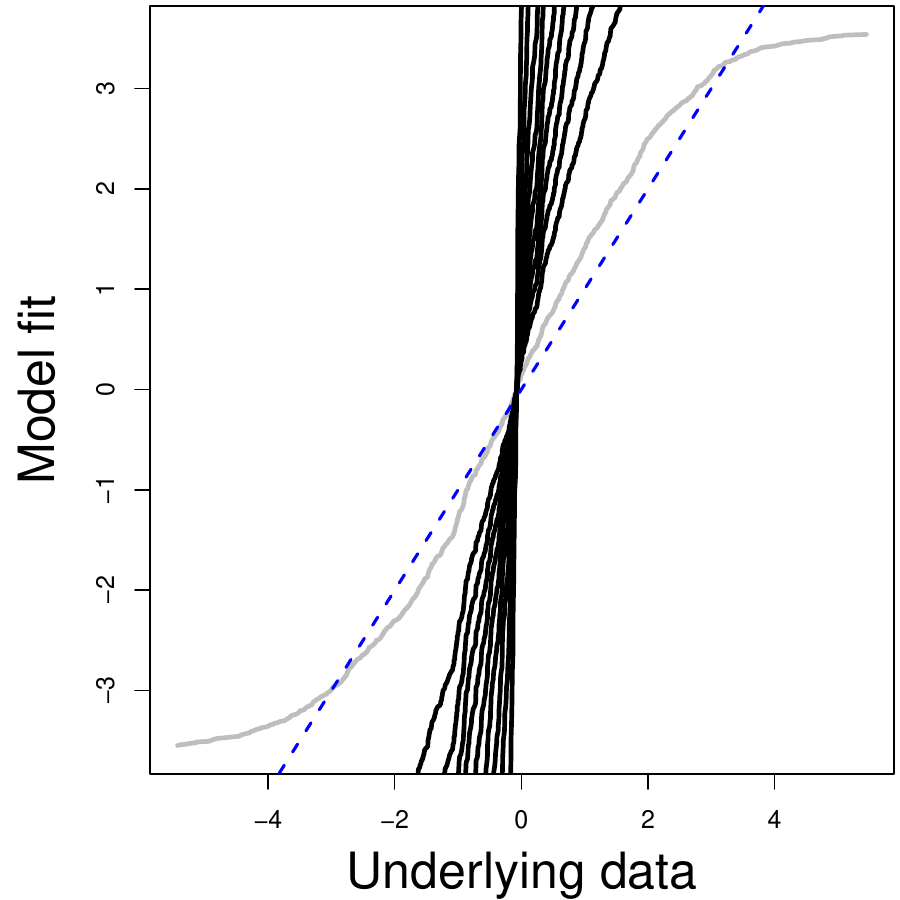} &
\includegraphics[width=0.33\textwidth]{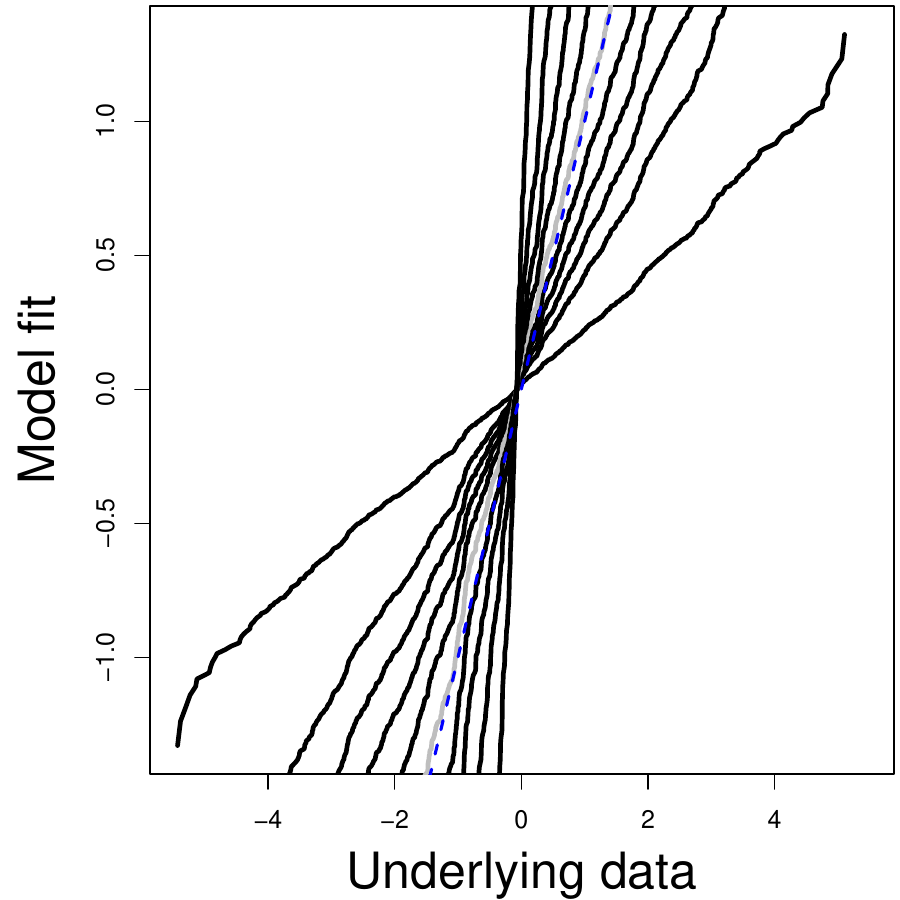} &
\includegraphics[width=0.33\textwidth]{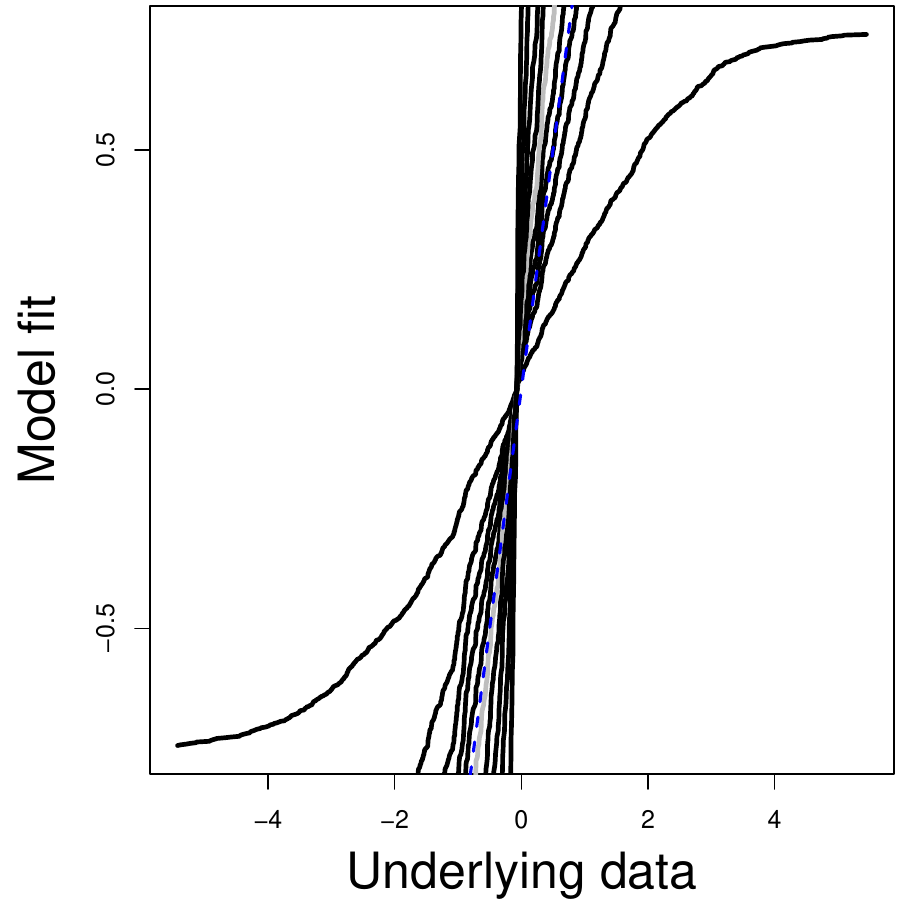} 
\end{array}
$
\caption{\small 
Boxplots of 500 replicate maximum likelihood estimates of $\mu$ and $\log\sigma$ under a $N(0,2^2)$ true data generating process with $m=1000$, and assuming data aggregation function $\phi_i$, $i=1,50, 100, \ldots, 450$. The true aggregation functions are $\phi_1$ (left two columns) and $\phi_{250}$ (right two  columns). The models fitted are the normal (columns 1 \& 3) and uniform (columns 2 \& 4) distributions. 
In each panel, the rightmost boxplot  indicates the outcome using the dataset with  $5\%$ outliers.  
Bottom row shows quantile-quantile curves of the fitted model ($y$-axis) versus the empirical underlying data quantiles ($x$-axis). 
Grey curves indicate use of the correct $\varphi(\cdot)$ function. Dashed line denotes $y=x$.
}
\label{fig:robustness}
\end{figure}
\end{landscape}

%%%%%%%%%%%%%%%%%%%%%%%%%%%%%%%%%%%%%%%%%%%%%%%%%
%%%%%%%%%%%%%%%%%%%%%%%%%%%%%%%%%%%%%%%%%%%%%%%%%
\section{Discussion\label{sec:discuss}}
%%%%%%%%%%%%%%%%%%%%%%%%%%%%%%%%%%%%%%%%%%%%%%%%%
%%%%%%%%%%%%%%%%%%%%%%%%%%%%%%%%%%%%%%%%%%%%%%%%%

Current techniques for modelling random intervals ($p$-\textcolor{red}{hyper-rectangle}s) are based 
on constructing models directly at the level of the interval-valued data 
(e.g.~\shortciteNP{Arroyo2010,Le-Rademacher2011,Brito2012}). These approaches are additionally 
based on the assumption that the unobserved individual data points from which the interval is 
constructed are uniformly distributed within the interval. As we have demonstrated in 
Section~\ref{sec:examples}, using these descriptive methods when the data are constructed 
from underlying individual data points, which is typical in practical applications, can 
result in misleading and biased parameter estimates and therefore unreliable inferences.

In this article we have established
the distribution theory for interval-valued
random variables which are constructed bottom-up from distributions of latent real-valued 
data and aggregation functions used to construct the random intervals. These generative 
models explicitly permit the fitting of standard statistical models for latent data 
points through likelihood-based techniques, while accounting for the manner in which the 
observed interval-valued data are constructed. This approach directly accounts for the 
non-uniformity of latent individual data points within intervals, and provides a natural way to 
handle the differing number of latent data points $m_i$ within each random interval, 
which is again typical in practice. 
The method as presented is fully parametric, although extending these ideas to the non-parametric framework would be of some interest (e.g.~\shortciteNP{jeon+ap15}).

By deriving a descriptive model as the limiting case of a generative model (i.e.~as 
$m_i\to\infty$ for each $i$), we have demonstrated that these descriptive models 
have an explicit underlying generative model interpretation. In turn this indicates why 
inferences from descriptive models may be potentially misleading in practice.

In order to evaluate the integrated generative likelihood function (\ref{eq:lik-g-mix-hyper}) 
for the unimodal distributions considered in Section~\ref{sec:examples}, we have used 
Gaussian quadrature methods. This technique will be less useful when integrating over 
more than 6 parameters \cite{Evans1995}, or when there are strong dependencies between 
local parameters. In these cases, approximate MLE's can be obtained using e.g.~Monte 
Carlo maximum likelihood estimation \cite{Geyer1992} or Monte Carlo expectation 
maximization techniques \cite{Wei1990}, or in the Bayesian framework, Gibbs sampling 
\cite{Geman1984} or pseudo-marginal and other likelihood-free Monte Carlo methods \shortcite{Andrieu2009,sisson+fb18}.

In order to construct the likelihood function (\ref{eq:lik-g-mix-hyper}) for 
$p$-\textcolor{red}{hyper-rectangle}s we assumed independence among all margins in local distributions to avoid 
the $2p$-th order mixed-differentiation of 
$F_{[\boldsymbol{X}]}(\underline{x}_{1},\overline{x}_{1},\ldots,\underline{x}_{p},\overline{x}_{p})$.
Although this differentiation may be achieved using symbolic computation software, the
resulting likelihood functions are complex even when $p=2$ 
(see the Supplementary Information), 
and the alternative of numerical differentiation would be highly computational.
However, this independence assumption does not hold if there is priori information on the 
dependence structure within each latent data point $\boldsymbol{x}$. As pointed out by 
\citeN{Billard2006}, this is often the case because the structure of symbolic data might
determine inherent dependencies such as logical, taxonomic and hierarchical dependencies, 
but not statistical dependencies. In the generative model, those dependencies as well as 
statistical dependencies 
can be addressed simultaneously through the local distribution function 
$f(\boldsymbol{x}|\boldsymbol{\theta})$. However, without the marginal independence 
assumption, inference for these models can be challenging.

While our examples have primarily focused on minimum and maximum based data aggregation 
functions $\varphi(x_{1:m})$, there is clear interest in parameter estimation and 
inference for more robust order-based functions $\varphi_{l,u}(x_{1:m})$, as the resulting intervals 
will be less sensitive to outliers, as demonstrated in the study in Section \ref{sec:robustness}. The procedures for constructing the associated 
likelihood functions are analogous to those presented here, and Theorem \ref{thm:g2d} 
provides their limiting descriptive model counterpart. An additional practical question 
for inference using order-based aggregation functions is which order-based statistics 
to use. As this choice will impact on the efficiency of the resulting inference, it is an 
open question to understand what method of random interval construction would be optimal 
for any given analysis (e.g.~\shortciteNP{beranger+ls18}).

Finally, we have derived an approximation $\hat{L}$ of the likelihood function of the underlying data, $L(x_{1:m}|\boldsymbol{\theta})$, based on constructing random intervals or $p$-hyper-rectangles through the data aggregation function $\varphi(\cdot)$, so that $\hat{L}(\varphi(x_{1:m})|\boldsymbol{\theta})\approx L(x_{1:m}|\boldsymbol{\theta})$. Clearly there can be some information loss when moving from $x_{1:m}$ to $\varphi(x_{1:m})$. Understanding the quality of this approximation is important both for quantifying inferential accuracy, as well as guiding the design of the aggregation function (where possible) to increase the performance of an analysis.
This is the focus of current research.

\section*{Acknowledgements}

XZ is supported by the China Scholarship Council.
BB and SAS are supported by the Australian Research Council through the Discovery Project scheme (FT170100079) and the Australian Centre of Excellence for Mathematical and Statistical Frontiers in Big Data, Big Models and New Insights (CE140100049).

%%%%%%%%%%%%%%%%%%%%%%%%%%%%%%%%%%%%%%%%%%%%%%%%%
%%%%%%%%%%%%%%%%%%%%%%%%%%%%%%%%%%%%%%%%%%%%%%%%%
\section*{Appendix}
%%%%%%%%%%%%%%%%%%%%%%%%%%%%%%%%%%%%%%%%%%%%%%%%%
%%%%%%%%%%%%%%%%%%%%%%%%%%%%%%%%%%%%%%%%%%%%%%%%%

Detailed proofs to all lemmas and theorems are provided in the Supplementary Information. For {\tt arXiv.org} this is provided below.

%%%%%%%%%%%%%%%%%%%%%%%%%%%%%%%%%%%%%%%%%%%%%%%%%
\section{Constructing a measurable space}
%%%%%%%%%%%%%%%%%%%%%%%%%%%%%%%%%%%%%%%%%%%%%%%%%
\label{sec:intvl-measure}

We denote $\Omega$ as a sample space equipped with a $\sigma$-algebra
$\mathscr{F}$ and a probability measure $P(\cdot)$. In order to construct a measurable space of $\mathbb{I}$,
we identify those subsets of $\mathbb{I}$, which are equivalent to 
particular subsets of $\mathbb{R}^m$. 
A subset of interest is $\{[x^{\prime}]\subseteq[x]\} = \{[x'] :[x^{\prime}]\subseteq[x]\}$, 
which corresponds to the
collection of all intervals that are a subinterval of or equal to
$[x]$. This subset is the image of the event 
$
    \{[X]\subseteq[x]\}=\{\omega\in\Omega:[X](\omega)\subseteq[x]\}
$
on $\mathbb{I}$.
The subset $\{[X]\subseteq[x]\}$ may also be written as 
$
    \{\varphi(X_{1:m}) \subseteq [x]\}=
    \{\omega \in \Omega: \varphi(X_{1:m}(\omega)) \subseteq [x]\},
$
of which the 
image on $\mathbb{R}^m$ is 
$\{\varphi(x'_{1:m})\subseteq [x]\} = \{x'_{1:m} : \varphi(x'_{1:m})\subseteq [x]\}$, 
i.e. the subset of 
$\mathbb{R}^{m}$ containing those $x'_{1:m}$ that can generate an interval which is a 
subinterval of or equal to $[x]$.
The two subsets $\{[x^\prime]\subseteq[x]\}$ and $\{\varphi(x'_{1:m})\subseteq [x]\}$ 
are equivalent as their preimages on $\Omega$ are identical. As a result,
given a probability measure on $\mathbb{R}^{m}$, the probability
of $\{\varphi(x'_{1:m})\subseteq [x]\}$, and hence of $\{[x^{\prime}]\subseteq[x]\}$, 
can be calculated if only if it is measurable. 
This implies that in a measurable space of $\mathbb{I}$, $\{[x^{\prime}] \subseteq [x]\}$ 
should be measurable.

We construct the metric topology on $\mathbb{I}$, denoted by $\mathscr{T}_{\mathbb{I}}$,
induced by the Hausdorff metric, which specifies the distance between
elements $[a]$ and $[b]$ as 
\[
    d_{H}\left([a],[b]\right)=
    \max\left\{ |\underline{a}-\underline{b}|,|\overline{a}-\overline{b}|\right\},
\]
where $|\cdot|$ denotes absolute value.
If we consider the mapping $h([x])=(\underline{x},\overline{x})$ from 
$\mathbb{I}$ to $\mathbb{R}^{2}$,
then we have 
$d_{2}\left((\underline{a},\overline{a}),(\underline{b},\overline{b})\right)=
d_{H}\left([a],[b]\right)$ 
for any $[a],[b]\in\mathbb{I}$, where $d_{2}(\cdot)$ is the square metric on 
$\mathbb{R}^{2}$. 
That is, $h$ is a distance preserving map, or isometry, and hence 
$(\mathbb{I},\mathcal{\mathscr{T}_{\mathbb{I}}})$
is isometrically embedded into the metric topological space on $\mathbb{R}^{2}$ induced 
by $d_{2}(\cdot)$, which is also known as the standard topology. The standard topology on 
$\mathbb{R}^2$ is generated by the open rectangles \cite{Munkres:2000}.
This implies that $\mathscr{T}_{\mathbb{I}}$ inherits 
properties of the standard topology on $\mathbb{R}^{2}$, such as completeness, local
compactness and separability. See Section~\ref{sec:intvl-topology} for details.

Let $\mathcal{F}=\{\{[x^\prime] \subseteq [x]\} \colon [x] \in \mathbb{I}\}$ 
be the collection of subsets of interest. We can now construct a
measurable space involving $\mathcal{F}$ from the topology $\mathscr{T}_{\mathbb{I}}$.
Let $\mathscr{B}_{\mathbb{I}}$ be the smallest
$\sigma$-algebra containing all open subsets 
$\mathscr{B}_{\mathbb{I}}=\sigma(\mathscr{T}_{\mathbb{I}})$, i.e. 
the Borel $\sigma$-algebra on $\mathbb{I}$. The topology $\mathscr{T}_{\mathbb{I}}$ 
is the collection of all open subsets of $\mathbb{I}$, and the Borel $\sigma$-algebra is the 
smallest $\sigma$-algebra containing all open subsets \cite{Munkres:2000}.
This Borel $\sigma$-algebra
contains $\mathcal{F}$, as all elements of $\mathcal{F}$ are closures of some elements of 
$\mathscr{T}_{\mathbb{I}}$ (Section~\ref{sec:intvl-topology}).
The following lemma provides a stronger
result that $\mathcal{F}$ is sufficient to construct $\mathscr{B}_{\mathbb{I}}$.
\begin{lem}
\label{thm:sigma-algebra}The Borel $\sigma$-algebra on $\mathbb{I}$
is the smallest $\sigma$-algebra generated by $\mathcal{F}$, i.e. 
$\mathscr{B}_{\mathbb{I}}=\sigma(\mathcal{F})$. 
\end{lem}
This property indicates that $\mathscr{B}_{\mathbb{I}}$ is rich enough
to ensure that all elements in $\mathcal{F}$ are measurable.
It also suggests that if we only define a proper non-negative function on $\mathcal{F}$,
we can extend it to a measure on $(\mathbb{I},\mathscr{B}_{\mathbb{I}})$.
In particular, if the induced measure is a probability measure, it
would then be the distribution function of $[X]$.

Based on the isometry $h([x])=(\underline{x},\overline{x})$
between $\mathbb{I}$ and $\mathbb{R}^{2}$,
we now construct a measure on $(\mathbb{I},\mathscr{B}_{\mathbb{I}})$,
representing the uniform measure on $\mathbb{I}$, which gives equal weight to all intervals. 
Let the Borel $\sigma$-algebra on $\mathbb{R}^{2}$ be $\mathscr{B}_{\mathbb{R}^{2}}$, 
and $\mu\colon\mathscr{B}_{\mathbb{R}^{2}}\mapsto[0,+\infty)$
be the Lebesgue measure on $(\mathbb{R}^{2},\mathscr{B}_{\mathbb{R}^{2}})$. 
Due to the isometry $h([x])=(\underline{x},\overline{x})$, 
we then have that $\mu_{\mathbb{I}}=\mu\circ h$ is 
the uniform measure on $\left(\mathbb{I},\mathscr{B}_{\mathbb{I}}\right)$.
Consequently, the uniform measure of every Borel subset of 
$\mathbb{I}$ can be calculated via $\mu(\cdot)$
and $h(\cdot)$. Specifically, for every element of $\mathcal{F}$, we have 
\[
    \mu_{\mathbb{I}}(\{[x^{\prime}] \subseteq [x]\}) = 
        \mu(h(\{[x^{\prime}] \subseteq [x]\})) = \frac{1}{2}(\overline{x}-\underline{x})^{2},
\]
as 
$
    h(\{[x^{\prime}] \subseteq [x]\} ) = 
        \{(\underline{x}^{\prime},\overline{x}^{\prime}) \colon 
        \underline{x} \leq \underline{x}^{\prime} \leq 
        \overline{x}^{\prime} \leq \overline{x}\} 
$
is the region of an isosceles right triangle on the real plane. 
From Lemma~\ref{thm:sigma-algebra},
the uniform measure of all Borel subsets 
$E\in\mathscr{B}_{\mathbb{I}}$ is also available.
\begin{lem}
\label{thm:nbhd}
Define the infinitesimal neighbourhood of $[x]$ as 
\[
    \mathrm{d}[x] = \{[x^{\prime}] \in \mathbb{I} \mid
    \underline{x} - \mathrm{d}\underline{x} < \underline{x}^{\prime} \leq \underline{x} \leq 
    \overline{x} \leq \overline{x}^{\prime} < \overline{x} + \mathrm{d}\overline{x}\},
\]
where $\mathrm{d}\underline{x},\mathrm{d}\overline{x} > 0$. Its uniform
measure is 
$\mu_{\mathbb{I}}(\mathrm{d}\left[x]\right)=
\mathrm{d}\underline{x}\times\mathrm{d}\overline{x}$.
\end{lem}
From the above we note that $\mu_{\mathbb{I}}(\cdot)$ is a non-atomic measure,
i.e. $\mu_{\mathbb{I}}(\{[x]\}) = 0$, where $\{[x]\}$ is a set containing a single 
interval $[x]$.
Further, there is a convenient way to compute the value of $\mu_{\mathbb{I}}(\cdot)$
for any Borel subsets via the Lebesgue integration on $\mathbb{R}^{2}$.
Namely, for any subsets $E\in\mathscr{B}_{\mathbb{I}}$
\[
\mu_{\mathbb{I}}(E)=
\int_{E}\mu_{\mathbb{I}}\left(\mathrm{d}\left[x\right]\right)=
\iint_{h\left(E\right)}\mathrm{d}\underline{x}\mathrm{d}\overline{x}.
\]

Accordingly, through such isometry, the measurable space of intervals 
$(\mathbb{I},\mathscr{B}_{\mathbb{I}})$
inherits the convenient structure and properties of the real plane. These
results permit the construction of distribution and density functions of random
intervals.

\section{Topology\label{sec:intvl-topology}}

The basis of the standard topology on $\mathbb{R}^{2}$ is the collection
of all open rectangles. Its subspace topology induced by $\{(x,y) \colon x\leq y\}$,
as shown in Figure~\ref{fig:basis}, has the basis of which each element
is the remaining part of a open rectangle on the top-left half plane.
Therefore, the collection of their counterparts on $\mathbb{I}$ via the
isometry, $h([x])=(\underline{x},\overline{x})$, is the basis of $\mathscr{T}_{\mathbb{I}}$. 

The open subset of $\mathbb{I}$ corresponding to the rectangle (a) in Figure~\ref{fig:basis} is 
\begin{eqnarray*}
B([a],[b]) & = 
& \{[x]\colon\underline{b}<\underline{x}<\underline{a}\leq\overline{a}<\overline{x}<\overline{b}\} .
\end{eqnarray*}
This is the collection of all intervals for which the lower bounds are
bounded between $\underline{a}$ and $\underline{b}$, while the upper bounds
are bounded between $\overline{a}$ and $\overline{b}$. The open subset of $\mathbb{I}$
corresponding to the triangle (b) is 
\[
W([c])=\{[x]\colon\underline{c}<\underline{x}\leq\overline{x}<\overline{c}\} .
\]
This is the collection of all intervals for which the lower bounds are
greater than $\underline{c}$, while the upper bounds are smaller than
$\overline{c}$. 
\begin{figure}[t]
\centering
\includegraphics[scale=0.6]{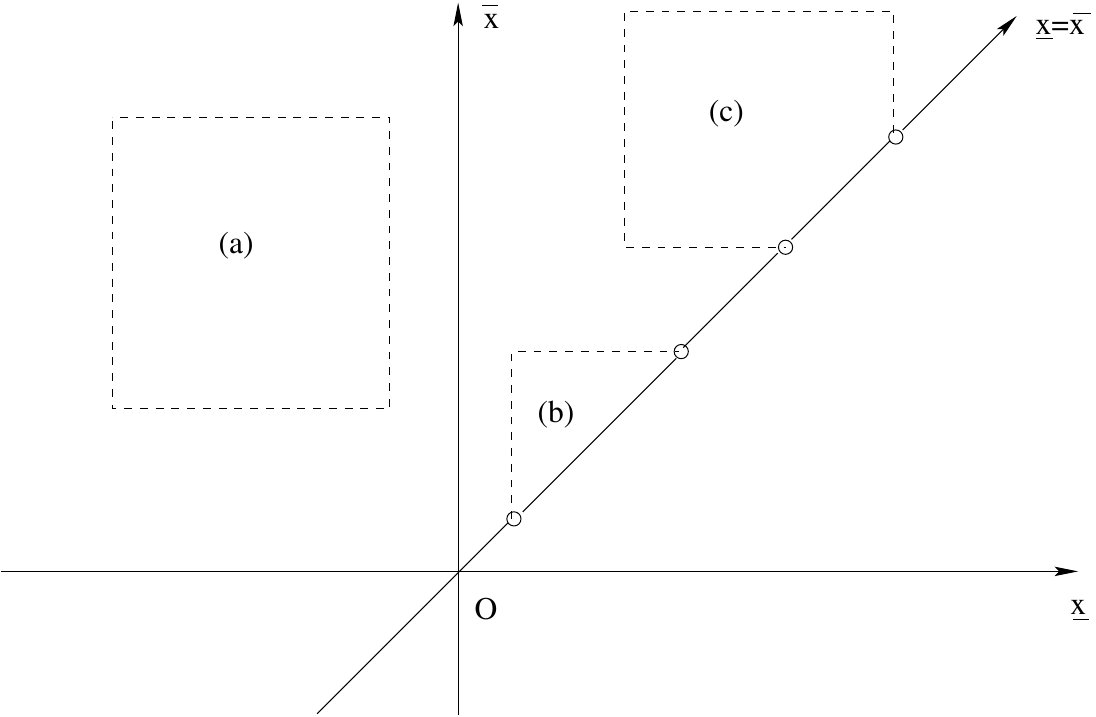}
\protect\caption{\label{fig:basis}\small$B([a],[b])$ and
$W\{ [a,b]\} $ are (a) and (b), respectively.
(a), (b) and (c) constitute the basis of $\mathscr{T}_{\mathbb{I}}$.}
\end{figure}
\begin{lem}
\label{lem:basis} 
Suppose that $\mathcal{E}$ is the collection of all $B([a],[b])$ and $W([c])$.
Then $\mathcal{E}$ is a basis for $\mathscr{T}_{\mathbb{I}}$. 
\end{lem}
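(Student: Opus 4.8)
The plan is to transport everything through the isometry $h([x])=(\underline{x},\overline{x})$ to the closed half-plane $H=h(\mathbb{I})=\{(x,y):x\le y\}\subseteq\mathbb{R}^2$, on which $\mathscr{T}_{\mathbb{I}}$ becomes the subspace topology inherited from the standard topology on $\mathbb{R}^2$ (as established in Section~\ref{sec:measure} and Appendix~\ref{append:topology}). Under $h$, a set $B([a],[b])$ is the open rectangle $(\underline{b},\underline{a})\times(\overline{a},\overline{b})$; because the defining inequalities force $\underline{a}\le\overline{a}$, every point of this rectangle lies strictly above the diagonal, so it is contained in $H$ and is therefore relatively open. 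Likewise $W([c])$ is $(\underline{c},\overline{c})^2\cap H$, the intersection of an open square with $H$, hence also relatively open. This establishes $\mathcal{E}\subseteq\mathscr{T}_{\mathbb{I}}$, the first condition a basis must satisfy.

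For the second requirement I would invoke the standard local criterion for a basis \citep{Munkres:2000}: since each element of $\mathcal{E}$ is open, it suffices to show that for every open $U$ and every $[x_0]\in U$ there is some $E\in\mathcal{E}$ with $[x_0]\in E\subseteq U$. Because the sets $R\cap H$, with $R$ ranging over open rectangles, already form a basis of the subspace topology on $H$, I may fix an open rectangle $R=(p,q)\times(r,s)$ with $(\underline{x}_0,\overline{x}_0)\in R\cap H\subseteq h(U)$ and then produce the required $E$ inside $R\cap H$.

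The argument then splits on the position of the point relative to the diagonal. Case (i): if $\underline{x}_0<\overline{x}_0$, the point sits strictly above the diagonal, so I can choose $\underline{a},\overline{a}$ with $\underline{x}_0<\underline{a}\le\overline{a}<\overline{x}_0$ together with $\underline{b},\overline{b}$ close enough to $\underline{x}_0,\overline{x}_0$ that the resulting rectangle lies in $R$; this yields a type-$B$ element containing $[x_0]$ and contained in $U$. Case (ii): if $\underline{x}_0=\overline{x}_0$, the point lies on the diagonal, and no type-$B$ set can contain it, since the inequalities in $B([a],[b])$ force $\underline{x}<\overline{x}$. Instead I pick $\underline{c},\overline{c}$ with $\max(p,r)\le\underline{c}<\underline{x}_0<\overline{c}\le\min(q,s)$, which is possible because $(\underline{x}_0,\overline{x}_0)\in R$ forces $\max(p,r)<\underline{x}_0<\min(q,s)$; a short check then shows $W([c])=(\underline{c},\overline{c})^2\cap H$ both contains $(\underline{x}_0,\overline{x}_0)$ and is contained in $R\cap H\subseteq h(U)$.

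I expect the main obstacle to be the diagonal (degenerate-interval) case: these points are exactly where the rectangular sets $B([a],[b])$ fail, so the triangular sets $W([c])$ are indispensable, and the one genuine verification is that a $W$-triangle can be simultaneously arranged to contain the chosen diagonal point and to fit inside the given subspace-basic rectangle $R\cap H$. A secondary point worth flagging is that this local-criterion route deliberately avoids having to express the ``truncated rectangle'' sets straddling the diagonal (the remaining shape in Figure~\ref{fig:basis}) as elements of $\mathcal{E}$: since the criterion only asks for a smaller $B$ or $W$ nested inside each open set around each point, it is enough that every neighbourhood contains one of these two shapes, which the two cases above supply.
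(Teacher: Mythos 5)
Your proof is correct, but it follows a genuinely different route from the paper's. The paper argues in two stages: first it verifies from first principles that $\mathcal{E}$ is a basis for \emph{some} topology, checking the covering axiom and the pairwise-intersection axiom through an explicit three-case analysis ($B\cap B$, $W\cap W$ and $B\cap W$, each time constructing a smaller element of $\mathcal{E}$ around the given point); second, it identifies the generated topology with $\mathscr{T}_{\mathbb{I}}$ by comparing $\mathcal{E}$ with the three types of basic open sets pictured in Figure~\ref{fig:basis} --- the (a)- and (b)-types are exactly the $B$'s and $W$'s, giving one inclusion, and every point of a (c)-type set (a rectangle truncated by the diagonal) has a $B$- or $W$-neighbourhood inside it, giving the other. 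You bypass the first stage altogether: since all elements of $\mathcal{E}$ are open (clear from their images under $h$), the basis-recognition criterion of \citet{Munkres:2000} already guarantees that $\mathcal{E}$ is a basis for $\mathscr{T}_{\mathbb{I}}$ itself, and your reduction to subspace-basic rectangles $R\cap H$ (where $H=h(\mathbb{I})$ is the closed half-plane) with the on-diagonal/off-diagonal case split is precisely the paper's (c)-type step, made explicit and applied uniformly. What your route buys is economy and added rigour at exactly the point where the paper leans on the figure; what the paper's route buys is self-containedness (no appeal to an external lemma) and constructions that it reuses for the countable rational family $\mathcal{E}_{\mathrm{Q}}$ in the proof of Theorem~\ref{thm:sigma-algebra}, though your argument would adapt to that setting just as easily. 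One small imprecision to repair: in your off-diagonal case, the requirement that the rectangle $(\underline{b},\underline{a})\times(\overline{a},\overline{b})$ sit inside $R=(p,q)\times(r,s)$ constrains $\underline{a}$ and $\overline{a}$ as well (you need $\underline{a}\leq q$ and $\overline{a}\geq r$), so all four endpoints, not only $\underline{b},\overline{b}$, must be chosen sufficiently close to $\underline{x}_0,\overline{x}_0$; the fix is immediate.
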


\begin{lem}
\label{lem:w2b}
$B([a],[b]) = W([b]) \setminus
\left[\{[x]\subseteq[\underline{a},\overline{b}]\} \cup 
\{[x] \subseteq [\underline{b},\overline{a}]\}\right]$. 
\end{lem}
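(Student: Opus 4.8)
The plan is to establish the identity by a direct comparison of the defining inequalities on each side, using the constraints already imposed by $W([b])$ to collapse the set difference. First I would record the relevant membership conditions, recalling that every element of $\mathbb{I}$ satisfies $\underline{x}\leq\overline{x}$. An interval $[x]$ lies in $W([b])$ exactly when $\underline{b}<\underline{x}\leq\overline{x}<\overline{b}$; it lies in $\{[x]\subseteq[\underline{a},\overline{b}]\}$ exactly when $\underline{a}\leq\underline{x}$ and $\overline{x}\leq\overline{b}$; and it lies in $\{[x]\subseteq[\underline{b},\overline{a}]\}$ exactly when $\underline{b}\leq\underline{x}$ and $\overline{x}\leq\overline{a}$.

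Next I would expand the set difference via De Morgan's law: $[x]$ belongs to the right-hand side precisely when it lies in $W([b])$ but in neither containment set. Here the key simplification is that membership in $W([b])$ already forces $\overline{x}<\overline{b}$, so the negation of the first containment condition (namely $\underline{x}<\underline{a}$ \emph{or} $\overline{x}>\overline{b}$) reduces to $\underline{x}<\underline{a}$, the alternative disjunct being impossible. Symmetrically, membership in $W([b])$ forces $\underline{b}<\underline{x}$, so the negation of the second containment condition reduces to $\overline{x}>\overline{a}$.

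Finally I would combine these reduced inequalities with the $W([b])$ bounds and the ambient constraint $\underline{a}\leq\overline{a}$ (which holds because $[a]\in\mathbb{I}$) to obtain exactly $\underline{b}<\underline{x}<\underline{a}\leq\overline{a}<\overline{x}<\overline{b}$, the defining condition of $B([a],[b])$; the chain $\underline{x}<\underline{a}\leq\overline{a}<\overline{x}$ moreover secures $\underline{x}\leq\overline{x}$ automatically, so no separate check is needed. Since the equivalence runs in both directions, this yields both inclusions simultaneously and hence the claimed equality. There is no substantive obstacle in this argument: the only care required is the bookkeeping of strict versus non-strict inequalities, together with the observation that the two $W([b])$ bounds are precisely what neutralise the "wrong" disjunct in each De Morgan expansion.
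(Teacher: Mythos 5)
Your proof is correct and follows essentially the same route as the paper's: both arguments reduce the set difference to the defining inequalities of $B([a],[b])$ by observing that the bounds $\underline{b}<\underline{x}$ and $\overline{x}<\overline{b}$ from $W([b])$ rule out the ``wrong'' disjunct in each negated containment condition. The only cosmetic difference is that the paper writes this as two separate inclusions while you run a single bidirectional chain of equivalences with the De Morgan step made explicit.
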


\begin{lem}
\label{lem:topology}
$\mathscr{T}_\mathbb{I}$ is the smallest topology
containing all $W([c])$ and $\comp{\{[x] \subseteq [c]\}}$. 
\end{lem}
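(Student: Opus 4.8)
The plan is to prove the two inclusions $\mathscr{T}'\subseteq\mathscr{T}_\mathbb{I}$ and $\mathscr{T}_\mathbb{I}\subseteq\mathscr{T}'$, where I write $\mathscr{T}'$ for the smallest topology containing every $W([c])$ and every $\comp{\{[x]\subseteq[c]\}}$. Comparing two topologies through a common generating/basis family is the natural strategy here, and the two lemmas already in hand supply exactly the pieces needed for each direction.

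For $\mathscr{T}'\subseteq\mathscr{T}_\mathbb{I}$ I would check that each generator of $\mathscr{T}'$ is $\mathscr{T}_\mathbb{I}$-open. Every $W([c])$ is a basis element of $\mathscr{T}_\mathbb{I}$ by Lemma~\ref{lem:basis}, hence open. Every $\{[x]\subseteq[c]\}$ is a member of $\mathcal{F}$, which (as noted when constructing $\mathscr{B}_\mathbb{I}$) is the closure in $\mathscr{T}_\mathbb{I}$ of the open set $W([c])$ and is therefore closed; consequently its complement $\comp{\{[x]\subseteq[c]\}}$ is open. Since $\mathscr{T}_\mathbb{I}$ is a topology containing all generators of $\mathscr{T}'$, minimality of $\mathscr{T}'$ yields $\mathscr{T}'\subseteq\mathscr{T}_\mathbb{I}$.

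For the reverse inclusion it suffices to show that every element of the basis $\mathcal{E}$ from Lemma~\ref{lem:basis} lies in $\mathscr{T}'$, since then each $\mathscr{T}_\mathbb{I}$-open set, being a union of such basis elements, is in $\mathscr{T}'$ as well. The sets $W([c])$ are generators and cause no trouble, so the substantive step is to treat $B([a],[b])$ via Lemma~\ref{lem:w2b}. Rewriting the set difference there through $A\setminus(B\cup C)=A\cap\comp{B}\cap\comp{C}$, I would express
\[
B([a],[b]) = W([b]) \cap \comp{\{[x]\subseteq[\underline{a},\overline{b}]\}} \cap \comp{\{[x]\subseteq[\underline{b},\overline{a}]\}}.
\]
The defining inequalities of $B([a],[b])$ force $\underline{b}<\underline{a}\leq\overline{a}<\overline{b}$, so $[\underline{a},\overline{b}]$ and $[\underline{b},\overline{a}]$ are genuine elements of $\mathbb{I}$ and their containment sets belong to $\mathcal{F}$. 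Hence $B([a],[b])$ is a finite intersection of generators of $\mathscr{T}'$, and so lies in $\mathscr{T}'$. This gives $\mathscr{T}_\mathbb{I}\subseteq\mathscr{T}'$ and, with the first inclusion, the equality.

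The only delicate point is the algebraic rewriting of $B([a],[b])$, and this is already delivered by Lemma~\ref{lem:w2b}; everything else is routine bookkeeping about intersections in a topology. I would nonetheless take care to verify the side condition that both $[\underline{a},\overline{b}]$ and $[\underline{b},\overline{a}]$ are valid intervals, so that the two complements appearing in the display are legitimate generators of $\mathscr{T}'$ rather than complements of empty or ill-defined containment sets.
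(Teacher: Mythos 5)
Your proposal is correct and follows essentially the same route as the paper's own proof: both directions rest on the same two facts, namely that $\{[x]\subseteq[c]\}$ is the closure of $W([c])$ (so its complement is open), and that Lemma~\ref{lem:w2b} rewrites each basis element $B([a],[b])$ as a finite intersection of generators. Your write-up is somewhat more careful than the paper's (explicitly separating the two inclusions and checking that $[\underline{a},\overline{b}]$ and $[\underline{b},\overline{a}]$ are genuine intervals), but the underlying argument is identical.
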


\section{Hypercubes}
\label{sec:intvl-hypercubes}
Similarly, through the property of isometry, 
$h_p([\boldsymbol{x}]) = (\underline{x}_1,\overline{x}_1,\ldots,\underline{x}_p,\overline{x}_p)$, 
it can be shown
that a basis of the topology $\mathscr{T}_{\mathbb{I}^p}$ is the collection of the following
two classes of subsets:
\begin{eqnarray*}
B_p([\boldsymbol{a}],[\boldsymbol{b}]) & = & \{[\boldsymbol{x}] \colon 
\underline{b}_j < \underline{x}_j < \underline{a}_j \leq 
\overline{a}_j < \overline{x}_j < \overline{b}_j,\, j=1,\ldots,p\},\\
W_p([\boldsymbol{c}]) & = & \{[\boldsymbol{x}] \colon 
\underline{c}_j < \underline{x}_j \leq \overline{x}_j < \overline{c}_j,\, j=1,\ldots,p\}.
\end{eqnarray*}
The next lemma shows an analogous result of Lemma~\ref{lem:w2b}. 
\begin{lem}
\label{lem:w2b-hyper}
$B_p([\boldsymbol{a}],[\boldsymbol{b}]) = 
W_p([\boldsymbol{b}]) \setminus {\displaystyle\cup_{j=1}^p}
\left[\{[\boldsymbol{x}] \subseteq [\boldsymbol{a}_{j1}]\} \cup 
\{[\boldsymbol{x}] \subseteq [\boldsymbol{a}_{j2}]\}\right]$,
where 
\begin{eqnarray*}
[\boldsymbol{a}_{j1}] & = & \left([a_1],\ldots,[\underline{a}_j,\overline{b}_j],\ldots,[a_p]\right),\\
\left[\boldsymbol{a}_{j2}\right] & = 
& \left([a_1],\ldots,[\underline{b}_j,\overline{a}_j],\ldots,[a_p]\right).
\end{eqnarray*}
\end{lem}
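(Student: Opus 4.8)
The plan is to prove the identity by double set inclusion, mirroring the one-dimensional argument of Lemma~\ref{lem:w2b}, after translating every membership and containment statement into coordinatewise inequalities via the isometry $h_p([\boldsymbol{x}]) = (\underline{x}_1,\overline{x}_1,\ldots,\underline{x}_p,\overline{x}_p)$. For brevity I would write $R_j = \{[\boldsymbol{x}] \subseteq [\boldsymbol{a}_{j1}]\} \cup \{[\boldsymbol{x}] \subseteq [\boldsymbol{a}_{j2}]\}$ for the pair of removed containment sets attached to coordinate $j$, so the claim is $B_p([\boldsymbol{a}],[\boldsymbol{b}]) = W_p([\boldsymbol{b}]) \setminus \bigcup_{j=1}^p R_j$, and recall from the displayed definitions that $[\boldsymbol{a}_{j1}]$ and $[\boldsymbol{a}_{j2}]$ modify $[\boldsymbol{a}]$ only in the $j$-th interval, enlarging it to $[\underline{a}_j,\overline{b}_j]$ and $[\underline{b}_j,\overline{a}_j]$ respectively.

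For the forward inclusion I would take $[\boldsymbol{x}] \in B_p([\boldsymbol{a}],[\boldsymbol{b}])$, so that $\underline{b}_j < \underline{x}_j < \underline{a}_j \leq \overline{a}_j < \overline{x}_j < \overline{b}_j$ for every $j$. The outer inequalities $\underline{b}_j < \underline{x}_j$ and $\overline{x}_j < \overline{b}_j$ give $[\boldsymbol{x}] \in W_p([\boldsymbol{b}])$ at once. To exclude $[\boldsymbol{x}]$ from each $R_j$, I would use that the strict inequality $\underline{x}_j < \underline{a}_j$ already violates the $j$-th coordinate requirement for $[\boldsymbol{x}] \subseteq [\boldsymbol{a}_{j1}]$, while $\overline{x}_j > \overline{a}_j$ violates the $j$-th requirement for $[\boldsymbol{x}] \subseteq [\boldsymbol{a}_{j2}]$; failure in a single coordinate suffices for non-containment, so $[\boldsymbol{x}] \notin R_j$ for all $j$ and $[\boldsymbol{x}]$ lies in the right-hand side. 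This direction is insensitive to the off-$j$ coordinates of $[\boldsymbol{a}_{j1}],[\boldsymbol{a}_{j2}]$.

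For the reverse inclusion I would take $[\boldsymbol{x}] \in W_p([\boldsymbol{b}])$ with $[\boldsymbol{x}] \notin R_j$ for every $j$. From $W_p([\boldsymbol{b}])$ I already have $\underline{b}_j < \underline{x}_j \leq \overline{x}_j < \overline{b}_j$ for all $j$, so it remains to upgrade these to the inner inequalities $\underline{x}_j < \underline{a}_j$ and $\overline{a}_j < \overline{x}_j$ in each coordinate. The crux is a coordinate-decoupling observation: for $[\boldsymbol{x}] \in W_p([\boldsymbol{b}])$, the statement $[\boldsymbol{x}] \subseteq [\boldsymbol{a}_{j1}]$ should be governed by coordinate $j$ alone. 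The key check is to verify that for every $k \neq j$ the $k$-th interval of $[\boldsymbol{a}_{j1}]$ contains the $k$-th interval of \emph{every} element of $W_p([\boldsymbol{b}])$, so those containments hold automatically and cannot be the source of $[\boldsymbol{x}] \nsubseteq [\boldsymbol{a}_{j1}]$; since additionally the $j$-th upper endpoint of $[\boldsymbol{a}_{j1}]$ is $\overline{b}_j$ and $\overline{x}_j < \overline{b}_j$ in $W_p([\boldsymbol{b}])$, the only possible cause of failure is the scalar condition $\underline{a}_j \leq \underline{x}_j$, forcing $\underline{x}_j < \underline{a}_j$. The symmetric argument applied to $[\boldsymbol{a}_{j2}]$ yields $\overline{x}_j > \overline{a}_j$, and combining over all $j$ reconstructs the full chain of $B_p([\boldsymbol{a}],[\boldsymbol{b}])$.

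The hard part is precisely this decoupling step in the reverse inclusion, which has no analogue in Lemma~\ref{lem:w2b} where a single coordinate is present. The subtlety is that $[\boldsymbol{x}] \nsubseteq [\boldsymbol{a}_{j1}]$ only asserts containment failure in \emph{some} coordinate, and one must rule out the off-$j$ coordinates as the source; this hinges entirely on the off-$j$ intervals of $[\boldsymbol{a}_{j1}]$ (and of $[\boldsymbol{a}_{j2}]$) being wide enough to contain the corresponding $W_p([\boldsymbol{b}])$-intervals. I would therefore verify that width condition carefully before assembling the two inclusions, since it is the single fact on which the $p$-dimensional extension rests.
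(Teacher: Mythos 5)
Your strategy is the right one, and you have correctly isolated the crux --- but the ``key check'' on which your reverse inclusion rests is false, and in fact the identity as printed is false. For $k \neq j$ the $k$-th interval of $[\boldsymbol{a}_{j1}]$ is $[a_k]$, not $[b_k]$, and since $\underline{b}_k < \underline{a}_k \leq \overline{a}_k < \overline{b}_k$, the interval $[a_k]$ is strictly smaller than $[b_k]$; an element of $W_p([\boldsymbol{b}])$ only has its $k$-th interval strictly inside $[b_k]$, which in no way forces it inside $[a_k]$. So containment in $[\boldsymbol{a}_{j1}]$ can fail in an off-$j$ coordinate, your decoupling step breaks down, and the reverse inclusion genuinely fails. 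Concretely, take $p=2$, $[a_1]=[a_2]=[0,1]$, $[b_1]=[b_2]=[-10,10]$, and $[\boldsymbol{x}] = ([-5,5],[0.2,0.8])$. Then $[\boldsymbol{x}] \in W_2([\boldsymbol{b}])$, and none of the four containments $[\boldsymbol{x}] \subseteq [\boldsymbol{a}_{j1}]$, $[\boldsymbol{x}] \subseteq [\boldsymbol{a}_{j2}]$ ($j=1,2$) hold, because each requires $[-5,5]$ to sit inside one of $[0,10]$, $[-10,1]$ or $[0,1]$, which it does not. Hence $[\boldsymbol{x}]$ belongs to the right-hand side; but $[\boldsymbol{x}] \notin B_2([\boldsymbol{a}],[\boldsymbol{b}])$, since $\underline{x}_2 = 0.2 \geq \underline{a}_2 = 0$. (Your forward inclusion is fine --- it is only the reverse direction that collapses, and the paper's own one-line proof, which just defers to Lemma~\ref{lem:w2b}, does not surface this.)

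The repair is exactly the one your own argument dictates: the off-$j$ coordinates of the removed hypercubes must be wide enough to swallow every element of $W_p([\boldsymbol{b}])$ coordinatewise, i.e.\ one should define $[\boldsymbol{a}_{j1}] = ([b_1],\ldots,[\underline{a}_j,\overline{b}_j],\ldots,[b_p])$ and $[\boldsymbol{a}_{j2}] = ([b_1],\ldots,[\underline{b}_j,\overline{a}_j],\ldots,[b_p])$, with $[b_k]$ rather than $[a_k]$ in the slots $k \neq j$. With that definition your decoupling observation becomes true: for $[\boldsymbol{x}] \in W_p([\boldsymbol{b}])$ the off-$j$ containments $[x_k] \subseteq [b_k]$ hold automatically, and $\overline{x}_j < \overline{b}_j$ (resp.\ $\underline{x}_j > \underline{b}_j$) holds as well, so $[\boldsymbol{x}] \subseteq [\boldsymbol{a}_{j1}]$ reduces to the scalar condition $\underline{x}_j \geq \underline{a}_j$ and $[\boldsymbol{x}] \subseteq [\boldsymbol{a}_{j2}]$ reduces to $\overline{x}_j \leq \overline{a}_j$; removing both for every $j$ recovers exactly $B_p([\boldsymbol{a}],[\boldsymbol{b}])$. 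This also restores the analogy with Lemma~\ref{lem:w2b}, whose removed sets $\{[x]\subseteq[\underline{a},\overline{b}]\}$ and $\{[x]\subseteq[\underline{b},\overline{a}]\}$ keep the $b$-endpoint on the unconstrained side --- in one dimension there are no off-coordinates, which is why the issue is invisible there. So: carry out the width verification you proposed to do first; it shows the lemma must be restated with the corrected $[\boldsymbol{a}_{j1}],[\boldsymbol{a}_{j2}]$, after which your two inclusions go through verbatim.
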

Similar to the proof of Lemma~\ref{lem:w2b}.
Based on the above lemma, the hypercube's version of Lemma~\ref{thm:sigma-algebra}
can be proved in a similar way.

\section{Proofs}

\subsection{Proof of Lemma~\ref{thm:sigma-algebra}}

As an isometric embedding to the standard topology of the real plane, 
the topology $\mathscr{T}_\mathbb{I}$ is separable, and thus it has a countable basis.
We define rational intervals $[q]\in\mathbb{I}$ where
$\underline{q},\overline{q}$ are rational numbers. Then, the collection
of all rational intervals, $\mathbb{I}_\mathrm{Q}$, is dense in
$\mathbb{I}$. 

We first show that $\mathcal{E}_\mathrm{Q}$ is a countable basis of $\mathscr{T}_\mathbb{I}$.
Let $\mathcal{E}_\mathrm{Q}$ be the collection of all 
$B([q_1],[q_2])$ and $W([q])$.
As rational numbers are countable,  $\mathcal{E}_\mathrm{Q}$ is countable. 
It can be shown that $\mathcal{E}_\mathrm{Q}$
is a basis of a topology and its generated topology is $\mathscr{T}_\mathbb{I}$ in a similar 
way to Lemma~\ref{lem:basis}.
As a result, $\mathcal{E}_\mathrm{Q}$ is a countable basis of $\mathscr{T}_\mathbb{I}$.

Then we show that $\sigma(\mathcal{F})=\sigma(\mathcal{E}_\mathrm{Q})$.
For any $\{[x^\prime] \subseteq [x]\} \in \mathcal{F}$, 
$\{[x^\prime] \subseteq [x]\} =  \comp{W([x])}$ and $W([x]) \in \mathscr{T}_\mathbb{I}$ can be 
generated by set operations over countable elements from $\mathcal{E}_\mathrm{Q}$, 
as $\mathcal{E}_\mathrm{Q}$ is a countable basis of $\mathscr{T}_\mathbb{I}$. 
So, $\sigma(\mathcal{F}) \subseteq \sigma(\mathcal{E}_\mathrm{Q})$. 
On the other hand, for any $W([q]) \in \mathcal{E}_\mathrm{Q}$, we have  
$W([q]) = \cup_{n=k}^\infty \{[q^\prime] \subseteq [\underline{q}+1/n,\overline{q}-1/n]\}$,
where $\overline{q}-\underline{q} \geq 2/k$, and for any 
$B([q_1],[q_2]) \in \mathcal{E}_\mathrm{Q}$, we have 
$B([q_1],[q_2]) = W([q_2]) \setminus
\left[\{[q] \subseteq [\underline{q_1},\overline{q_2}]\} \cup 
\{[q] \subseteq [\underline{q_2},\overline{q_1}]\}\right]$ (Lemma~\ref{lem:w2b}).
So, $\sigma(\mathcal{E}_\mathrm{Q}) \subseteq \sigma(\mathcal{F})$.

That is, $\sigma(\mathcal{F})=\sigma(\mathcal{E}_\mathrm{Q})=\sigma(\mathscr{T}_\mathbb{I})$.

\subsection{Proof of Lemma~\ref{thm:nbhd}}

We let 
\begin{equation}
B^\star([a],[b])=\{[x] \colon 
\underline{b }< \underline{x} \leq \underline{a} \leq \overline{a} \leq \overline{x} < \overline{b}\}.
\label{eq:thm-nbhd-b}
\end{equation}
In a way analogous to Lemma~\ref{lem:w2b}, we have 
\begin{equation}
B^\star([a],[b])=
W([b]) \setminus \left[W([\underline{a},\overline{b}]) \cup W([\underline{b},\overline{a}])\right].
\label{eq:thm-nbhd-w2b}
\end{equation}
By the continuity of the measure, 
\begin{align*}
\mu_\mathbb{I}(W([x])) & = \mu_\mathbb{I}(\overset{\infty}{\cup}\{[x]^\prime \subseteq 
[\underline{x}+1/n,\overline{x}-1/n]\})\\
& = \lim_{n\to\infty}\mu_\mathbb{I}(\{[x]^\prime \subseteq 
[\underline{x}+1/n,\overline{x}-1/n]\})\\
& = \lim_{n\to\infty}\frac{1}{2}(\overline{x}-\underline{x}-2/n)^{2}
=\frac{1}{2}(\overline{x}-\underline{x})^{2}.
\end{align*}
Note that $W([\underline{a},\overline{b}])\cap W([\underline{b},\overline{a}])=W([a])$.
We have
\begin{align*}
\mu_\mathbb{I}(B^\star([a],[b]))
& = \mu_\mathbb{I}(W([b]))-\mu_\mathbb{I}(W([\underline{a},\overline{b}]))
-\mu_\mathbb{I}(W([\underline{b},\overline{a}]))+\mu_\mathbb{I}(W([a]))\\
& = (\underline{a}-\underline{b})(\overline{b}-\overline{a}).
\end{align*}
Therefore, 
$\mu_\mathbb{I}(\mathrm{d}[x]) = 
\mu_\mathbb{I}\left(B^\star([x],[\underline{x} - \mathrm{d}\underline{x},
\overline{x} + \mathrm{d}\overline{x}])\right)
= \mathrm{d}\underline{x} \times \mathrm{d}\overline{x}$.

\subsection{Proof of Lemma~\ref{lem:basis}}

We first show that $\mathcal{E}$ is a basis for a topology. 
Note that for any $[x] \in \mathbb{I}$, there exists at least one
$E \in \mathcal{E}$ s.t. $[x] \in E$. Then, we show in the following that 
for any $E_1,E_2 \in \mathcal{E}$, 
if $[x] \in E_1 \cap E_2$, then there exists $E_3 \in \mathcal{E}$ s.t. $[x] \in E_3$ and
$E_3 \subset E_1 \cap E_2$. Note that $\vee$ and $\wedge$ take the maximum and the minimum 
of two operands, respectively.
\begin{enumerate}
\item[i)] Consider $[x] \in B([a],[b]) \cap B([a^\prime],[b^\prime]) \neq \varnothing$.
Then $\underline{b} \vee \underline{b}^\prime < \underline{a} \wedge \underline{a}^\prime$
and $\overline{a} \vee \overline{a}^\prime < \overline{b} \wedge \overline{b}^\prime$.
% Otherwise, $B([a],[b])\cap B([a^\prime],[b^\prime])=\varnothing$. 
From $[x] \in B([a],[b])$, we have that
$\underline{b} < \underline{x} < \underline{a} \leq \overline{a} < \overline{x} < \overline{b}$.
From $[x] \in B([a^\prime],[b^\prime])$, we have that
$\underline{b}^\prime < \underline{x} < \underline{a}^\prime
\leq\overline{a}^\prime < \overline{x} < \overline{b}^\prime$.
Therefore, $\underline{b} \vee \underline{b}^\prime < \underline{x} 
< \underline{a} \wedge \underline{a}^\prime$
and $\overline{a} \vee \overline{a}^\prime < \overline{x} < \overline{b} \wedge \overline{b}^\prime$.
There exists $[a^{\prime\prime}],[b^{\prime\prime}] \in \mathbb{I}$ s.t. 
$\underline{b} \vee \underline{b}^\prime < \underline{b}^{\prime\prime} < \underline{x} < 
\underline{a}^{\prime\prime} < \underline{a} \wedge \underline{a}^\prime$ and 
$\overline{a} \vee \overline{a}^\prime < \overline{a}^{\prime\prime} < \overline{x} < 
\overline{b}^{\prime\prime} < \overline{b} \wedge \overline{b}^\prime$.
That is $[x] \in B([a^{\prime\prime}],[b^{\prime\prime}])$
and $B([a^{\prime\prime}],[b^{\prime\prime}])\subset B([a],[b]) \cap B([a^\prime],[b^\prime])$.

\item[ii)] Consider $[x] \in W([c_1]) \cap W([c_2]) \neq \varnothing$.
Then $\underline{c}_1 \vee \underline{c}_2 < \overline{c}_1 \wedge \overline{c}_2$.
From $[x] \in W([c_1])$, we have that 
$\underline{c}_1 < \underline{x} \leq \overline{x} < \overline{c}_1$.
From $[x] \in W([c_2])$, we have that
$\underline{c}_2 < \underline{x} \leq \overline{x} < \overline{c}_2$. 
Therefore, $\underline{c}_1 \vee \underline{c}_2 < \underline{x} 
\leq\overline{x} < \overline{c}_1 \wedge \overline{c}_2$.
There exists $[c] \in \mathbb{I}$ s.t. 
$\underline{c}_1 \vee \underline{c}_2 < \underline{c} < \underline{x} \leq
\overline{x} < \overline{c} < \overline{c}_1 \wedge \overline{c}_2$.
That is $[x] \in W([c])$ and $W([c]) \subset W([c_1]) \cap W([c_2])$.

\item[iii)] Consider $[x] \in B([a],[b]) \cap W([c]) \neq \varnothing$.
Then $\underline{c} < \underline{a}$ and $\overline{c} > \overline{a}$.
From $[x] \in B([a],[b])$, we have that 
$\underline{b} < \underline{x} < \underline{a} \leq \overline{a} < \overline{x} < \overline{b}$.
From $[x] \in W([c])$, we have that 
$\underline{c} < \underline{x} \leq \overline{x} < \overline{c}$.
Therefore, $\underline{c} \vee \underline{b} < \underline{x} < \underline{a} \leq 
\overline{a} < \overline{x} < \overline{c} \wedge \overline{b}$.
There exists $[a^\prime],[b^\prime] \in \mathbb{I}$
s.t. $\underline{c} \vee \underline{b} < \underline{b}^\prime < \underline{x} < 
\underline{a}^\prime < \underline{a}$
and $\overline{a} < \overline{a}^\prime < \overline{x} < \overline{b}^\prime < 
\overline{c} \wedge \overline{b}$.
That is $[x] \in B([a^\prime],[b^\prime])$
and $B([a^\prime],[b^\prime]) \subset B([a],[b]) \cap W([c])$.
\end{enumerate}
That is, $\mathcal{E}$ is a basis for a topology. Next, we show
$\mathcal{E}$ is a basis for $\mathscr{T}_\mathbb{I}$.
Figure~\ref{fig:basis} shows that the basis of $\mathscr{T}_\mathbb{I}$ 
consists of three types of subsets. As $B([a],[b])$
is an (a)-type subset and $W\{[c]\}$ is a (b)-type
subset, the topology generated by
$\mathcal{E}$ is coarser than $\mathscr{T}_\mathbb{I}$. 
On the other hand, for any $[x]$ in a (c)-type subset, we can find at
least one (a)-type subset or (b)-type subset that contains that $[x]$
and subsets of that (c)-type subset. Therefore, the topology generated
by $\mathcal{E}$ is finer than $\mathscr{T}_\mathbb{I}$. 
In conclusion, the topology generated by $\mathcal{E}$ is $\mathscr{T}_\mathbb{I}$.

\subsection{Proof of Lemma~\ref{lem:w2b}}

For any $[x] \in B([a],[b])$, i.e.
$\underline{b}<\underline{x}<\underline{a}\leq\overline{a}<\overline{x}<\overline{b}$,
we have $[x] \in W([b])$. 
Also $[x] \nsubseteq [\underline{a},\overline{b}]$ and 
$[x] \nsubseteq [\underline{b},\overline{a}]$, i.e
$[x] \notin \{[x] \subseteq [\underline{a},\overline{b}]\} 
\cup \{[x] \subseteq [\underline{b},\overline{a}]\}$.
Therefore, $B([a],[b]) \subseteq W([b]) \setminus 
\left[\{[x] \subseteq [\underline{a},\overline{b}]\} \cup 
\{[x] \subseteq [\underline{b},\overline{a}]\}\right]$.

On the other hand, for any $[x] \in W([b]) \setminus 
\left[\{[x] \subseteq [\underline{a},\overline{b}]\} \cup 
\{[x] \subseteq [\underline{b},\overline{a}]\}\right]$, 
we have $[x] \in W([b])$, i.e. $\underline{b} < \underline{x} \leq \overline{x} < \overline{b}$.
Also $[x] \nsubseteq [\underline{a},\overline{b}]$
and $[x] \nsubseteq [\underline{b},\overline{a}]$, i.e. $\underline{x}<\underline{a}$
and $\overline{x}>\overline{a}$.
Hence $\underline{b} < \underline{x} < \underline{a}$ and $\overline{a} < \overline{x} < \overline{b}$,
i.e. $[x] \in B([a],[b])$.
Therefore, $W([b]) \setminus \left[\{[x] \subseteq [\underline{a},\overline{b}]\} \cup 
\{[x] \subseteq [\underline{b},\overline{a}]\}\right] \subseteq B([a],[b])$.

In conclusion, 
$B([a],[b]) = W([b]) \setminus 
\left[\{[x]\subseteq[\underline{a},\overline{b}]\} \cup 
\{[x]\subseteq[\underline{b},\overline{a}]\}\right]$.

\subsection{Proof of Lemma~\ref{lem:topology}}

$\{[x] \subseteq[c] \} $ is a closed subset, as it is the 
closure of $W\{[c]\}$. 
Accordingly its complement $\comp{\{[x] \subseteq [c]\}}$
is open, and thus $\comp{\{[x] \subseteq [c]\} } \in \mathscr{T}_{\mathbb{I}}$. 
From Lemma~\ref{lem:w2b}, 
$B([a],[b]) = W([b]) \cap \comp{\{[x] \subseteq [\underline{a},\overline{b}]\}} \cap 
\comp{\{[x] \subseteq [\underline{b},\overline{a}]\}}$.
So every element in $\mathcal{E}$ can be generated by set operations over finite elements
of $W\{[c]\}$ and $\comp{\{[x] \subseteq [c]\}}$.
As $\mathcal{E}$ is a basis of $\mathscr{T}_\mathbb{I}$,
every element in $\mathscr{T}_{\mathbb{I}}$ can be generated by set operations over finite elements
of $W\{[c]\}$ and $\comp{\{[x] \subseteq [c]\}}$.
Therefore, $\mathscr{T}_\mathbb{I}$ is the smallest topology containing
$W([c])$ and $\comp{\{[x] \subseteq [c]\}}$.

\subsection{Proof of Theorem~1} %\ref{thm:pdf-intvl}}

For any function $f_{[X]}(\underline{x},\overline{x})$
satisfying the conditions in the theorem,
we can construct its containment distribution function 
$F_{[X]}(\underline{x},\overline{x})$ as
\[
F_{[X]}(\underline{x},\overline{x}) = 
\int_{\underline{x}}^{\overline{x}}\int_{\underline{x}}^{b}\!f_{[X]}(a,b)\,\mathrm{d}a\mathrm{d}b
\:\text{ or }\:
F_{[X]}(\underline{x},\overline{x}) = 
\int_{\underline{x}}^{\overline{x}}\int_{a}^{\overline{x}}\!f_{[X]}(a,b)\,\mathrm{d}b\mathrm{d}a.
\]
It is easy to check that $F_{[X]}(\underline{x},\overline{x})$
satisfies the conditions in Definition 1. %\ref{def:cdf-intvl}.

\subsection{Proof of Theorem~2} %\ref{thm:pdf-derivative}}

Let $C_{[X]}([x]) = F_{[X]}(\underline{x},\overline{x})$ be the containment functional.
From Theorem~3 %\ref{thm:cdf-intvl} 
and its proof, it determines 
a unique probability measure $P_{[X]} \colon \mathscr{B}_\mathbb{I} \mapsto [0,1]$ 
on the space of intervals s.t. $P_{[X]}([x]) = F_{[X]}(\underline{x},\overline{x})$.
As $\mathrm{d}[x]=B_\star([x],[\underline{x}-\mathrm{d}\underline{x},\overline{x}+
\mathrm{d}\overline{x}])$,
from (\ref{eq:thm-nbhd-b}) and (\ref{eq:thm-nbhd-w2b}),
\[
B_\star([x],[\underline{x}-\mathrm{d}\underline{x},\overline{x}+\mathrm{d}\overline{x}]) = 
W([\underline{x}-\mathrm{d}\underline{x},\overline{x}+\mathrm{d}\overline{x}]) \setminus 
\left[W([\underline{x},\overline{x}+\mathrm{d}\overline{x}]) \cup 
W([\underline{x}-\mathrm{d}\underline{x},\overline{x}])\right].
\]
Therefore, 
\begin{multline*}
P_{[X]}(\mathrm{d}[x])=
P_{[X]}(W([\underline{x}-\mathrm{d}\underline{x},\overline{x}+\mathrm{d}\overline{x}])) - 
P_{[X]}(W([\underline{x},\overline{x}+\mathrm{d}\overline{x}])) - \\
P_{[X]}(W([\underline{x}-\mathrm{d}\underline{x},\overline{x}])) + 
P_{[X]}(W([\underline{x},\overline{x}])).
\end{multline*}
By the continuity of the measure and 
$W([x]) = \cup_{n=k}^\infty\{[x]^\prime \subseteq [\underline{x} + 
\frac{1}{n},\overline{x} - \frac{1}{n}]\}$,
\[
P_{[X]}(W([x]))=\lim_{\underline{x}^\prime\to\underline{x}+}
\lim_{\overline{x}^\prime\to\overline{x}-}
P_{[X]}([X] \subseteq [x]^{\prime})=
\lim_{\underline{x}^\prime\to\underline{x}+}
\lim_{\overline{x}^\prime\to\overline{x}-}
F_{[X]}(\underline{x}^\prime,\overline{x}^\prime).
\]
As $F_{[X]}$ is twice differentiable (thus continuous),
$P_{[X]}(W([x])) = F_{[X]}(\underline{x},\overline{x})$. Therefore,
\[
P_{[X]}(\mathrm{d}[x])=F_{[X]}(\underline{x}-\mathrm{d}\underline{x},
\overline{x}+\mathrm{d}\overline{x}) - 
F_{[X]}(\underline{x},\overline{x}+\mathrm{d}\overline{x}) - 
F_{[X]}(\underline{x}-\mathrm{d}\underline{x},\overline{x}) + 
F_{[X]}(\underline{x},\overline{x}).
\]
Substituting second order Taylor expansions for the first three
terms in the above equation, we obtain
\[
P_{[X]}(\mathrm{d}[x]) = -\frac{\partial^{2}}{\partial\underline{x}\partial\overline{x}}
F_{[X]}(\underline{x},\overline{x})\mathrm{d}\underline{x}\mathrm{d}\overline{x} + 
o(\mathrm{d}\underline{x}\mathrm{d}\overline{x}).
\]
Note that 
$\mu_\mathbb{I}(\mathrm{d}[x])=\mathrm{d}\underline{x}\mathrm{d}\overline{x}$ 
(Theorem~\ref{thm:nbhd}), and so
\[
P_{[X]}(\mathrm{d}[x]) = -\frac{\partial^{2}}{\partial\underline{x}\partial\overline{x}}
F_{[X]}(\underline{x},\overline{x})\mu_\mathbb{I}(\mathrm{d}[x]) + 
o(\mu_\mathbb{I}(\mathrm{d}[x])).
\]
In addition, $P_{[X]}(\mathrm{d}[x]) = 0$
when $\mu_\mathbb{I}(\mathrm{d}[x]) = 0$, i.e. 
$P_{[X]}(\cdot)$ is absolute continuous w.r.t. $\mu_\mathbb{I}(\cdot)$. 
Therefore the the Radon-Nikodym derivative exists and
\[
\frac{P_{[X]}(\mathrm{d}[x])}{\mu_\mathbb{I}(\mathrm{d}[x])} = 
-\frac{\partial^2}{\partial\underline{x}\partial\overline{x}}F_{[X]}(\underline{x},\overline{x}).
\]

\subsection{Proof of Theorem~3} %\ref{thm:cdf-intvl}}

As $\mathscr{B}_\mathbb{I}=\sigma(\mathcal{F})$ (Lemma~\ref{thm:sigma-algebra}), 
any $E \in \mathscr{B}_{\mathbb{I}}$ can be generated by set operations over at most
countable elements from $\mathcal{F}$. 
So, it's probability measure $P([X] \in E)$ will be available if 
$P([X] \subseteq [x])$ is known for any $[x]$.
Therefore, the uniqueness has been proved.

Next, we prove the existence of a probability measure 
$P_{[X]} \colon \mathscr{B}_\mathbb{I} \mapsto [0,1]$ on the space of intervals s.t. 
$P_{[X]}(\{[x^\prime] \subseteq [x]\}) = C_{[X]}([x])$. 
Let $\mathcal{G}$ be the collection of all 
$B^\prime([x],[y]) = \{[x^\prime] \colon \underline{y} \leq \underline{x}^\prime 
< \underline{x} \leq \overline{x} < \overline{x}^\prime \leq \overline{y}\}$.
Similar to Lemma~\ref{lem:w2b}, we have 
$B^\prime([x],[y]) = \{[x]^\prime \subseteq [y]\} \setminus 
\left[\{[x]^\prime \subseteq [\underline{x},\overline{y}]\} 
\cup \{[x]^\prime \subseteq [\underline{y},\overline{x}]\}\right]$.
Then, define $\mathcal{H} = \mathcal{F} \cup \mathcal{G} \cup \{\varnothing,\mathbb{I}\}$, and
extend $C_{[X]}(\cdot)$ to a function $P_C(\cdot)$ on $\mathcal{H}$
s.t. $P_C(\varnothing) = 0$, $P_C(\mathbb{I}) = 1$, 
$P_C(\{[x^\prime] \subseteq [x]\} ) = C_{[X]}([x])$ and 
\[
P_C(B^\prime([x],[y])) = C_{[X]}([y]) - C_{[X]}([\underline{x},\overline{y}]) - 
C_{[X]}([\underline{y},\overline{x}]) + C_{[X]}([x]) \geq 0,
\]
by condition \emph{iii)} of 
the definition of $C_{[X]}(\cdot)$ in Section~2.2. %\ref{sec:cdf}.
That is $P_{C}(\cdot)$ is non-negative. 

In addition as $\mathbb{I}$ is locally compact, for any $A \subset \mathbb{I}$ and $\delta>0$,
there exists $E_1,\ldots,E_N \in \mathcal{H}$ 
with all $\mu_\mathbb{I}(E_i)\leq\delta$, such that $A\subset\cup_{i=1}^{N}E_{i}$.
Therefore, we can use Carath\'{e}odory construction \cite{Durrett:2010} to define a metric
outer measure. Let $P_{[X]}^\star(A)=\lim_{\delta\to0}P_\delta(A)$, where
\[
P_\delta(A) = \inf\left\{\sum_{i=1}^\infty P_C(E_i) \colon 
E_{i} \in \mathcal{H},\,\text{diam}(E_i) \leq \delta,\,\cup_{i=1}^\infty E_i \supseteq A\right\},
\]
where $\text{diam}(E_i)$ is the diameter of $E_i$.
So, $P_{[X]}^\star(\cdot)$ is a metric outer measure, and thus
the Borel subsets on $\mathbb{I}$ are measurable w.r.t. $P_{[X]}^\star(\cdot)$.
That is, there exists a probability
measure $P_{[X]} \colon \mathscr{B}_\mathbb{I} \mapsto [0,1]$,
such that $P_{[X]}(E)=P_{[X]}^\star(E)$ for any $E \in \mathscr{B}_\mathbb{I}$.

Finally, we can check that $P_{[X]}(\{[x^\prime] \subseteq [x]\})=C_{[X]}([x])$. 
For any $n=1,2,\ldots$, there exits 
$\delta_n\to0$ as $n\to\infty$, s.t. 
$P_{\delta_n}(\{[x^\prime] \subseteq [x]\}) \leq 
C_{[X]}([\underline{x}-\frac{1}{n},\overline{x}+\frac{1}{n}])$.
Also $P_\delta(\{[x^\prime] \subseteq [x]\}) \geq C_{[X]}([x])$ by definition for any $\delta > 0$. 
Therefore
\[
C_{[X]}([x]) \leq \lim_{n\to\infty}P_{\delta_n}(\{[x^\prime] \subseteq [x]\}) \leq 
\lim_{n\to\infty}C_{[X]}([\underline{x}-1/n,\overline{x}+1/n]).
\]
By condition \emph{ii)} of 
the definition of $C_{[X]}(\cdot)$ in Section~2.2, %\ref{sec:cdf}, 
\[
\lim_{n\to\infty}C_{[X]}([\underline{x}-1/n,\overline{x}+1/n])
=C_{[X]}(\cap_{n=1}^\infty[\underline{x}-1/n,\overline{x}+1/n])
=C_{[X]}([x]).
\]
Therefore, 
\[
P_{[X]}(\{[x^\prime] \subseteq [x]\} ) = 
\lim_{n\to\infty}P_{\delta_n}(\{[x^\prime] \subseteq [x]\}) = C_{[X]}([x]).
\]

As a result, given a random interval $[X] \colon \Omega \mapsto \mathbb{I}$, 
we obtain a probability measure $P \colon \sigma([X]) \mapsto [0,1]$, s.t.
$P([X] \subseteq [x]) = P_{[X]}(\{[x^\prime] \subseteq [x]\}) = C_{[X]}([x])$.

\subsection{Proof of Theorem~4} %\ref{thm:g2d-unif}}

Let $c=\frac{a+b}{2}\in(-\infty,+\infty)$ and $r=\frac{b-a}{2} \geq 0$. 
We can rewrite 
\[
f_{[X]}(\underline{x},\overline{x}|m)=\iint_{\{a \leq \underline{x}, b \geq \overline{x}\}}\!
m(m-1)\frac{(\overline{x}-\underline{x})^{m-2}}{(b-a)^{m}}\pi(a,b)\,\mathrm{d}a\mathrm{d}b.
\]
as
\[
f_{[X]}(\underline{x},\overline{x}|m) = 
2^{-m}m(m-1)(\overline{x} - \underline{x})^{m-2}\iint_A\!r^{-m}g(c,r)\,\mathrm{d}c\mathrm{d}r,
\]
where $g(c,r)=2\pi(c-r,c+r)$ is the density function of $(c,r)$ and 
$A=\{(c,r) \colon \overline{x} - r \leq c \leq \underline{x} + r,r 
\geq \frac{\overline{x}-\underline{x}}{2}\}$.
As $\pi(\cdot)$ is bounded continuously, $\int_{-\infty}^{+\infty}\!g(c,r)\,\mathrm{d}c < \infty$.
Let $g(r) = \int_{-\infty}^{+\infty}\!g(c,r)\,\mathrm{d}c$, $B_0=\{r \colon g(r) = 0\}$
and $B_1=\{r \colon g(r) \neq 0\}$.  
When $g(r) \neq 0$, we have $g(c | r) = \frac{g(c,r)}{g(r)}$.
The above integration can be decomposed into the following two
cases. In the case that $g(r) \neq 0$, we replace $g(c,r)$
with $g(r)g(c|r)$ and integrate out $c$,
\[
\iint_{A \cap B_1}\!r^{-m}g(c,r)\,\mathrm{d}c\mathrm{d}r=
\int_{\frac{\overline{x}-\underline{x}}{2}}^{\infty}\!r^{-m}g(r)
\left\{\int_{\overline{x}-r}^{\underline{x}+r}\!g(c|r)\,\mathrm{d}c\right\}\,\mathrm{d}r.
\]
In the case that $g(r)=0$, we have $g(c,r) = 0$ and
$
\iint_{A \cap B_0}\!r^{-m}g(c,r)\,\mathrm{d}c\mathrm{d}r=0.
$

Then, writing $z=(m-1)(\log r - \log\frac{\overline{x}-\underline{x}}{2})$, we have
\begin{multline*}
f_{[X]}(\underline{x},\overline{x}|m)=\frac{1}{2}m(\overline{x}-\underline{x})^{-1}\times\\
\int_0^\infty\!\mathrm{e}^{-z}g\left(\frac{\overline{x}-\underline{x}}{2}\mathrm{e}^{(m-1)^{-1}z}\right)
\left\{\int_{\overline{x}-\frac{\overline{x}-\underline{x}}{2}\mathrm{e}^{(m-1)^{-1}z}}
^{\underline{x}+\frac{\overline{x}-\underline{x}}{2}\mathrm{e}^{(m-1)^{-1}z}}\!
g\left(c|\frac{\overline{x}-\underline{x}}{2}\mathrm{e}^{(m-1)^{-1}z}\right)
\,\mathrm{d}c\right\}\,\mathrm{d}z.
\end{multline*}
As $\pi(\cdot)$ is bounded continuously, $g(c,r) = 2\pi(c-r,c+r)$ is bounded continuously.
Due to the mean value theorem, the above term can be simplified as
\[
f_{[X]}(\underline{x},\overline{x}|m) = \frac{1}{2}
\int_0^\infty\! m\left\{\mathrm{e}^{(m-1)^{-1}z} - 1\right\} 
\mathrm{e}^{-z}g(\xi,\frac{\overline{x}-\underline{x}}{2}\mathrm{e}^{(m-1)^{-1}z})\,\mathrm{d}z,
\]
where $\underline{x}+\frac{\overline{x}-\underline{x}}{2}\mathrm{e}^{(m-1)^{-1}z} \leq 
\xi \leq \overline{x}-\frac{\overline{x}-\underline{x}}{2}\mathrm{e}^{(m-1)^{-1}z}$. 
Let $M(\xi) = \sup_{z \geq 0}g(\xi,\frac{\overline{x}-\underline{x}}{2}\mathrm{e}^{(m-1)^{-1}z})$.
$M(\xi)$ is bounded as $g(c,r)$ is bounded.
When $m\geq3$, we have  
\[
f_{[X]}(\underline{x},\overline{x}|m) \leq \frac{M(\xi)}{2}
\int_0^\infty\!m\left\{\mathrm{e}^{(m-1)^{-1}z}-1\right\}\mathrm{e}^{-z}\,\mathrm{d}z = 
\frac{m}{2(m-2)}M(\xi) \leq \frac{3}{2}M(\xi).
\]
Therefore, $f_{[X]}(\underline{x},\overline{x}|m)$ is bounded when $m \to \infty$, and thus
\begin{align*}
\lim_{m\to\infty}f_{[X]}(\underline{x},\overline{x}|m) 
& = \frac{1}{2}\int_0^\infty\!\lim_{m\to\infty}m\left\{\mathrm{e}^{(m-1)^{-1}z}-1\right\}\mathrm{e}^{-z}
g\left(\xi,\frac{\overline{x}-\underline{x}}{2}\mathrm{e}^{(m-1)^{-1}z}\right)\,\mathrm{d}z\\
& =  \frac{1}{2}g\left(\frac{\underline{x}+\overline{x}}{2},\frac{\overline{x}-\underline{x}}{2}\right)
 = \pi(\underline{x},\overline{x}).
\end{align*}

\subsection{Proof of Theorem~5} %\ref{thm:g2d}}
\label{appthm7}

Let $f_{\mu,\tau}=f(\cdot|\mu,\tau)$, and denote $F_{\mu,\tau}=F(\cdot|\mu,\tau)$ 
and $Q_{\mu,\tau}=Q(\cdot;\mu,\tau)$ as its cumulative distribution function 
and quantile function, respectively. As $f_{\mu,\tau}$ is 
positive and continuous in the neighbourhoods of $Q_{\mu,\tau}(\underline{p})$ and 
$Q_{\mu,\tau}(\overline{p})$ with $\underline{p} > 0$ and $\overline{p} < 1$,
the joint density function of
\[
\begin{cases}
(m+1)^{\frac{1}{2}}f_{\mu,\tau}(Q_{\mu,\tau}(\underline{p}))
(\underline{X}-Q_{\mu,\tau}(\underline{p}))\\
(m+1)^{\frac{1}{2}}f_{\mu,\tau}(Q_{\mu,\tau}(\overline{p}))(\overline{X}-Q_{\mu,\tau}(\overline{p}))
\end{cases}
\]
converges pointwise to a bivariate normal density function, with zero mean and covariance matrix
\[
\Sigma = \begin{pmatrix}
\underline{p}(1-\underline{p}) & \underline{p}(1-\overline{p})\\
\underline{p}(1-\overline{p}) & \overline{p}(1-\overline{p})
\end{pmatrix}
\]
when $m\to\infty$ \cite{Reiss1989}. Thus when $m$ is large, 
the density function of the $\mathrm{i.i.d.}$ generative model 
\begin{multline*}
f_{[X]}^\star(\underline{x},\overline{x} | \theta,m,l,u)=
\frac{m!}{(l-1)!(u-l-1)!(m-u)!}[F(\underline{x}|\theta)]^{l-1}\\
\times[F(\overline{x}|\theta)-F(\underline{x}|\theta)]^{u-l-1}
[1-F(\overline{x}|\theta)]^{m-u}f(\underline{x}|\theta)f(\overline{x}|\theta),
\end{multline*}
is asymptotically equivalent to
\begin{equation*}
\frac{m+1}{2\pi|\Sigma|^\frac{1}{2}}
f_{\mu,\tau}(Q_{\mu,\tau}(\underline{p}))f_{\mu,\tau}(Q_{\mu,\tau}(\overline{p})) 
\exp\{-(m+1)T(\underline{x},\overline{x};\mu,\tau)\},
\end{equation*}
where
\begin{align*}
T(\underline{x},\overline{x};\mu,\tau) = 
& \frac{1}{2}(\underline{t}(\underline{x};\mu,\tau),\overline{t}(\overline{x};\mu,\tau)) 
\Sigma^{-1}(\underline{t}(\underline{x};\mu,\tau),\overline{t}(\overline{x};\mu,\tau))^\intercal,\\ 
\underline{t}(\underline{x};\mu,\tau) = 
& f_{\mu,\tau}(Q_{\mu,\tau}(\underline{p}))(\underline{x}-Q_{\mu,\tau}(\underline{p})),\\
\overline{t}(\overline{x};\mu,\tau) = 
& f_{\mu,\tau}(Q_{\mu,\tau}(\overline{p}))(\overline{x}-Q_{\mu,\tau}(\overline{p})).
\end{align*}
That is, the density function of the hierarchical generative model 
\[
f_{[X]}(\underline{x},\overline{x} | \alpha,m)=
\int\!f_{[X]}^\star(\underline{x},\overline{x} | \theta,m)
\pi(\theta|\alpha)\,\mathrm{d}\theta,
\]
is asymptotically equivalent to 
\begin{equation}
\frac{m+1}{2\pi|\Sigma|^\frac{1}{2}} \times H(\underline{x},\overline{x};\underline{p},\overline{p},m)
\label{eq:cdf-g-mix-order-approx}
\end{equation}
where
\begin{multline*}
H(\underline{x},\overline{x};\underline{p},\overline{p},m) = \\
\iint\!f_{\mu,\tau}(Q_{\mu,\tau}(\underline{p}))
f_{\mu,\tau}(Q_{\mu,\tau}(\overline{p}))\pi(\mu,\tau)
\exp\{-(m+1)T(\underline{x},\overline{x};\mu,\tau)\}\,\mathrm{d}\mu\mathrm{d}\tau.
\end{multline*}
Note that $\Sigma$ is positive definite, and so $T(\underline{x},\overline{x};\mu,\tau)\geq0$.
Also  $T(\underline{x},\overline{x};\underline{p},\overline{p},m)$ reaches
its minimum 0, when $Q_{\mu,\tau}(\underline{p}) = \underline{x}$ and 
$Q_{\mu,\tau}(\overline{p}) = \overline{x}$.
As $f_{\mu,\tau}$ is interval-identifiable, the system of equations, 
$Q_{\mu,\tau}(\underline{p}) = \underline{x}$ and 
$Q_{\mu,\tau}(\overline{p}) = \overline{x}$, 
has a unique solution, and thus $T(\underline{x},\overline{x};\underline{p},\overline{p},m)$ 
is unimodal.

As $\mu_\star = \mu(\underline{x},\overline{x};\underline{p},\overline{p})$
and $\tau_\star = \tau(\underline{x},\overline{x};\underline{p},\overline{p})$
are the solution of 
$Q_{\mu,\tau}(\underline{p}) = \underline{x}$ and 
$Q_{\mu,\tau}(\overline{p}) = \overline{x}$,
given conditions \emph{i)} and \emph{iii)} in the theorem, 
a Laplace approximation can be applied to  $H(\underline{x},\overline{x};\underline{p},\overline{p},m)$
at the point $(\mu_\star,\tau_\star)$, giving
\begin{multline}
H(\underline{x},\overline{x};\underline{p},\overline{p},m) \approx\\
2\pi(m+1)^{-1}|\nabla^{2}T(\underline{x},\overline{x};\mu_\star,\tau_\star)|^{-\frac{1}{2}}
f_{\mu_\star,\tau_\star}(\underline{x})f_{\mu_\star,\tau_\star}(\overline{x})\pi(\mu_\star,\tau_\star).
\label{eq:laplace-approx}
\end{multline}

We let $T = T(\underline{x},\overline{x};\mu,\tau)$, 
$\underline{t} = \underline{t}(\underline{x};\mu,\tau)$, 
$\overline{t} = \overline{t}(\overline{x};\mu,\tau)$
and $\Sigma^{-1} = \begin{pmatrix} a_{11} & a_{12}\\ a_{12} & a_{22} \end{pmatrix}$, 
so we have $T=\frac{1}{2}(a_{11}\underline{t}^2+2a_{12}\underline{t}\overline{t}+a_{22}\overline{t}^2)$.
The first order partial derivatives of $T$ are
\begin{eqnarray*}
\frac{\partial T}{\partial\mu} & = & a_{11}\underline{t}\frac{\partial\underline{t}}{\partial\mu} + 
a_{12}\overline{t}\frac{\partial\underline{t}}{\partial\mu} + 
a_{12}\underline{t}\frac{\partial\overline{t}}{\partial\mu} + 
a_{22}\overline{t}\frac{\partial\overline{t}}{\partial\mu},\\
\frac{\partial T}{\partial\tau} & = & a_{11}\underline{t}\frac{\partial \underline{t}}{\partial\tau} + 
a_{12}\overline{t}\frac{\partial\underline{t}}{\partial\tau} + 
a_{12}\underline{t}\frac{\partial\overline{t}}{\partial\tau} + 
a_{22}\overline{t}\frac{\partial\overline{t}}{\partial\tau}.
\end{eqnarray*}
Let $T^\star$, $\underline{t}^\star$ and $\overline{t}^\star$ denote the corresponding
functions and their derivatives taking values at $(\mu_\star,\tau_\star)$.
As $\underline{t}^{\star}=\overline{t}^{\star}=0$, the second order partial derivatives
at $(\mu_\star,\tau_\star)$ are
\begin{eqnarray*}
\frac{\partial^2T^\star}{\partial\mu^2} & = 
& a_{11}\left(\frac{\partial\underline{t}^\star}{\partial\mu}\right)^2 + 
2a_{12}\frac{\partial\overline{t}^\star}{\partial\mu}\frac{\partial\underline{t}^\star}{\partial\mu} + 
a_{22}\left(\frac{\partial\overline{t}^\star}{\partial\mu}\right)^2,\\
\frac{\partial^2T^\star}{\partial\tau^{2}} & = 
& a_{11}\left(\frac{\partial \underline{t}^\star}{\partial\tau}\right)^2 + 
2a_{12}\frac{\partial\overline{t}^\star}{\partial\tau}
\frac{\partial\underline{t}^\star}{\partial\tau} + 
a_{22}\left(\frac{\partial\overline{t}^\star}{\partial\tau}\right)^2,\\
\frac{\partial^{2}T^{\star}}{\partial\mu\partial r} & = 
& a_{11}\frac{\partial\underline{t}^\star}{\partial\mu}
\frac{\partial\underline{t}^\star}{\partial\tau} + 
a_{12}\frac{\partial\underline{t}^\star}{\partial\mu}\frac{\partial\overline{t}^\star}{\partial\tau} + 
a_{12}\frac{\partial\underline{t}^\star}{\partial\tau}\frac{\partial\overline{t}^\star}{\partial\mu} + 
a_{22}\frac{\partial\overline{t}^\star}{\partial\mu}\frac{\partial\overline{t}^\star}{\partial\tau}.
\end{eqnarray*}
Therefore, $\nabla^{2}T$ at $(\mu_\star,\tau_\star)$ is
\[
\nabla^{2}T^\star = \begin{pmatrix} 
\frac{\partial\underline{t}^\star}{\partial\mu} & \frac{\partial\underline{t}^\star}{\partial\tau}\\
\frac{\partial\overline{t}^\star}{\partial\mu} & \frac{\partial\overline{t}^\star}{\partial\tau}
\end{pmatrix}^{\intercal}
\Sigma^{-1}
\begin{pmatrix}
\frac{\partial\underline{t}^\star}{\partial\mu} & \frac{\partial\underline{t}^\star}{\partial\tau}\\
\frac{\partial\overline{t}^\star}{\partial\mu} & \frac{\partial\overline{t}^\star}{\partial\tau}
\end{pmatrix},
\]
and its determinant is 
$|\nabla^{2}g| = |\Sigma|^{-1}\left(
\frac{\partial\underline{t}^\star}{\partial\mu}\frac{\partial\overline{t}^\star}{\partial\tau} - 
\frac{\partial\underline{t}^\star}{\partial\tau}
\frac{\partial\overline{t}^\star}{\partial\mu}\right)^2$.

The derivatives of $\underline{t}$ and $\overline{t}$ at $(\mu^\star,\tau^\star)$ are
\begin{eqnarray*}
\frac{\partial\underline{t}^\star}{\partial\mu} & = 
& -f_{\mu^\star,\tau^\star}(\underline{x}) \times 
\frac\partial{\partial\mu}Q_{\mu^\star,\tau^\star}(\underline{p}),\\
\frac{\partial\underline{t}^\star}{\partial\tau} & = 
& -f_{\mu^\star,\tau^\star}(\underline{x}) \times 
\frac{\partial}{\partial\tau}Q_{\mu^\star,\tau^\star}(\underline{p}),\\
\frac{\partial\overline{t}^\star}{\partial\mu} & = 
& -f_{\mu^\star,\tau^\star}(\overline{x}) \times 
\frac{\partial}{\partial\mu}Q_{\mu^\star,\tau^\star}(\overline{p}),\\
\frac{\partial\overline{t}^\star}{\partial\tau} & = 
& -f_{\mu^\star,\tau^\star}(\overline{x}) \times 
\frac{\partial}{\partial\tau}Q_{\mu^\star,\tau^\star}(\overline{p}).
\end{eqnarray*}
and thus
\begin{equation}
|\nabla^{2}T^\star| = 
|\Sigma|^{-1}f_{\mu^\star,\tau^\star}(\underline{x})^2f_{\mu^\star,\tau^\star}(\overline{x})^2
\left|J(\mu^\star,\tau^\star;\underline{p},\overline{p})\right|^2,
\label{eq:laplace-det}
\end{equation}
where 
\[
J(\mu^\star,\tau^\star;\underline{p},\overline{p}) = 
\begin{pmatrix}
\frac{\partial}{\partial\mu}Q_{\mu^\star,\tau^\star}(\underline{p}) 
& \frac{\partial}{\partial\tau}Q_{\mu^\star,\tau^\star}(\underline{p})\\
\frac{\partial}{\partial\mu}Q_{\mu^\star,\tau^\star}(\overline{p}) 
& \frac{\partial}{\partial\tau}Q_{\mu^\star,\tau^\star}(\overline{p})
\end{pmatrix}.
\]

From (\ref{eq:laplace-approx}) and (\ref{eq:laplace-det}), 
we obtain that the density function of the hierarchical generative model 
(\ref{eq:cdf-g-mix-order-approx})
converges pointwise to 
$\pi(\mu^\star,\tau^\star)\left|J(\mu^\star,\tau^\star;\underline{p},\overline{p})\right|^{-1}$.

\subsection{\label{app:8+9}Proof of Theorems~6 and 7} %\ref{thm:pdf-derivative-hyper}~and~\ref{thm:pdf-hyper}}

Similar to the proof of Theorems 1 and 2. %~\ref{thm:pdf-derivative}~and~\ref{thm:pdf-intvl}.
Use Lemma\ \ref{lem:w2b-hyper} and Taylor expansions.

\subsection{\label{sub:like-bi}Likelihood function of two dimension i.i.d. generative model}

Let $[\boldsymbol{X}] = [X_1] \times [X_2]$
be the random rectangle generated from $m$ $\mathrm{i.i.d.}$ bivariate latent
data points from $f(x_1,x_2|\boldsymbol{\theta})$
with the data aggregation function taking the minimum and maximum values at each margin. 
Let $F(x_1,x_2|\boldsymbol{\theta})$
be the distribution function of $f(x_1,x_2|\boldsymbol{\theta})$.
The distribution function of $[X_1]\times[X_2]$ is
\begin{multline*}
F_{[\boldsymbol{X}]}(\underline{x}_1,\overline{x}_1,
\underline{x}_2,\overline{x}_2|\boldsymbol{\theta}) = \\
\left[F(\overline{x}_1,\overline{x}_2|\boldsymbol{\theta}) - 
F(\underline{x}_1,\overline{x}_2|\boldsymbol{\theta}) - 
F(\overline{x}_1,\underline{x}_2|\boldsymbol{\theta}) + 
F(\underline{x}_1,\underline{x}_2|\boldsymbol{\theta})\right]^m.
\end{multline*}
This is the probability that all $m$ latent data points fall within the
rectangle $[x_1]\times[x_2]$. From Theorem~7, %\ref{thm:pdf-derivative-hyper}, 
the likelihood function
is the fourth order mixed derivative as shown below
\begin{align*}
 & f_{[\boldsymbol{X}]}(\underline{x}_1,\overline{x}_1,
 \underline{x}_2,\overline{x}_2|\boldsymbol{\theta}) = m(m-1)(m-2)(m-3)\times\\
& \left\{F(\overline{x}_1,\overline{x}_2|\boldsymbol{\theta}) - 
F(\underline{x}_1,\overline{x}_2|\boldsymbol{\theta}) - 
F(\overline{x}_1,\underline{x}_2|\boldsymbol{\theta}) + 
F(\underline{x}_1,\underline{x}_2|\boldsymbol{\theta})\right\}^{m-4} \times\\ 
& \int_{\underline{x}_1}^{\overline{x}_1}\!f(y_1,\underline{x}_2|\boldsymbol{\theta})\,\mathrm{d}y_1
\int_{\underline{x}_1}^{\overline{x}_1}\!f(y_2,\overline{x}_2|\boldsymbol{\theta})\,\mathrm{d}y_2
\int_{\underline{x}_2}^{\overline{x}_2}\!f(\underline{x}_1,y_3|\boldsymbol{\theta})\,\mathrm{d}y_3
\int_{\underline{x}_2}^{\overline{x}_2}\!f(\overline{x}_1,y_4|\boldsymbol{\theta})\,\mathrm{d}y_4 + \\ 
& m(m-1)(m-2)\left\{F(\overline{x}_1,\overline{x}_2|\boldsymbol{\theta}) - 
F(\underline{x}_1,\overline{x}_2|\boldsymbol{\theta}) - 
F(\overline{x}_1,\underline{x}_2|\boldsymbol{\theta}) + 
F(\underline{x}_1,\underline{x}_2|\boldsymbol{\theta})\right\}^{m-3} \times \\
& \bigg\{f(\underline{x}_1,\underline{x}_2|\boldsymbol{\theta})
\int_{\underline{x}_1}^{\overline{x}_1}\!f(y_2,\overline{x}_2|\boldsymbol{\theta})\,\mathrm{d}y_2
\int_{\underline{x}_2}^{\overline{x}_2}\!f(\overline{x}_1,y_4|\boldsymbol{\theta})\,\mathrm{d}y_4 + \\
& f(\underline{x}_1,\overline{x}_2|\boldsymbol{\theta})
\int_{\underline{x}_1}^{\overline{x}_1}\!f(y_1,\underline{x}_2|\boldsymbol{\theta})\,\mathrm{d}y_1
\int_{\underline{x}_2}^{\overline{x}_2}\!f(\overline{x}_1,y_4|\boldsymbol{\theta})\,\mathrm{d}y_4 + \\
& f(\overline{x}_1,\underline{x}_2|\boldsymbol{\theta})
\int_{\underline{x}_1}^{\overline{x}_1}\!f(y_2,\overline{x}_2|\boldsymbol{\theta})\,\mathrm{d}y_2
\int_{\underline{x}_2}^{\overline{x}_2}\!f(\underline{x}_1,y_3|\boldsymbol{\theta})\,\mathrm{d}y_3 + \\
& f(\overline{x}_1,\overline{x}_2|\boldsymbol{\theta})
\int_{\underline{x}_1}^{\overline{x}_1}\!f(y_1,\underline{x}_2|\boldsymbol{\theta})\,\mathrm{d}y_1
\int_{\underline{x}_2}^{\overline{x}_2}\!
f(\underline{x}_1,y_3|\boldsymbol{\theta})\,\mathrm{d}y_3\bigg\} + \\
& m(m-1)\left\{F(\overline{x}_1,\overline{x}_2|\boldsymbol{\theta}) - 
F(\underline{x}_1,\overline{x}_2|\boldsymbol{\theta}) - 
F(\overline{x}_1,\underline{x}_2|\boldsymbol{\theta}) + 
F(\underline{x}_1,\underline{x}_2|\boldsymbol{\theta})\right\}^{m-2} \times \\
& \bigg\{f(\underline{x}_1,\underline{x}_2|\boldsymbol{\theta})
f(\overline{x}_1,\overline{x}_2|\boldsymbol{\theta}) + 
f(\underline{x}_1,\overline{x}_2|\boldsymbol{\theta})
f(\overline{x}_1,\underline{x}_2|\boldsymbol{\theta})\bigg\}.
\end{align*}
Although it is rather complex, in fact it has a similar intuitive interpretation
to (7). %(\ref{eq:lik-g-mm-iid}). 
The first term denotes the case that $m-4$
points fall within $[x_1]\times[x_2]$ while the
remaining four points are $(y_1,\underline{x}_2)$,
$(y_2,\overline{x}_2)$, $(\underline{x}_1,y_3)$
and $(\overline{x}_1,y_4)$, where $\underline{x}_1\leq y_1,\,y_2\leq\overline{x}_1$
and $\underline{x}_2\leq y_3,\,y_4\leq\overline{x}_2$. The
second term represents the case that $m-3$ points fall within
$[x_1]\times[x_2]$ while the remaining
three points determine the boundary of the rectangle. The last terms is
the case the boundary is formed by only two points. 

\subsection{\label{sub:robustness} Additional plots from simulation study}
\begin{landscape}
\begin{figure}
\center
$
\begin{array}{cccc}
\includegraphics[width=0.33\textwidth]{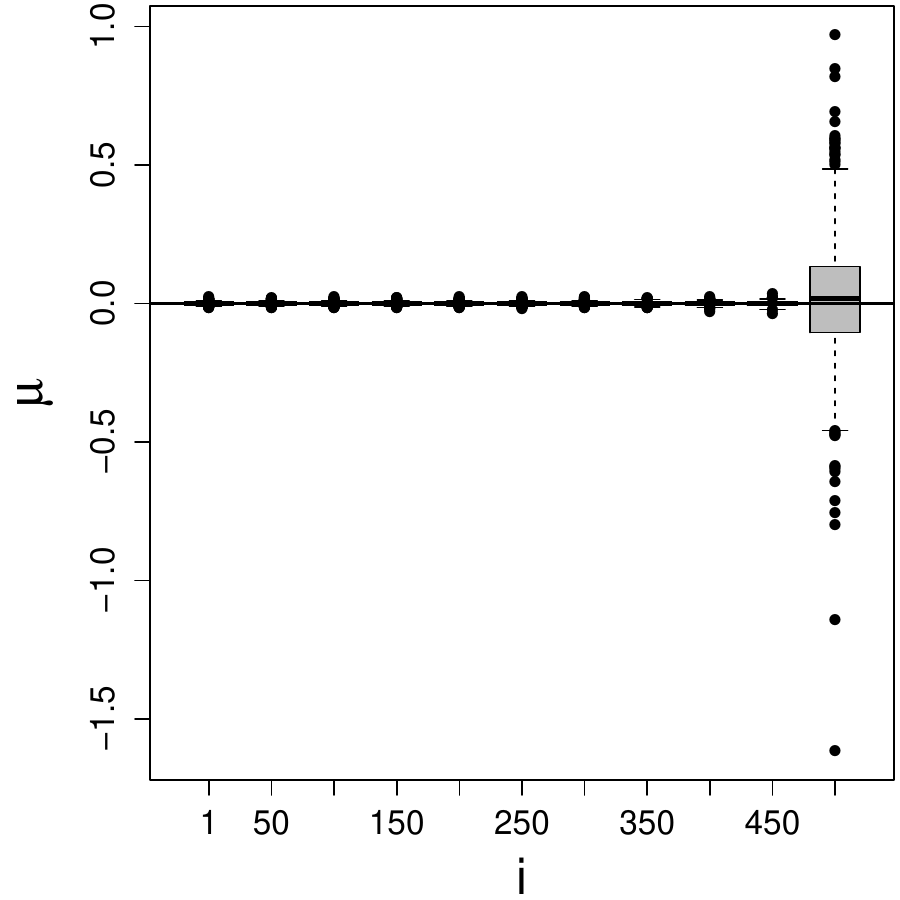} &
\includegraphics[width=0.33\textwidth]{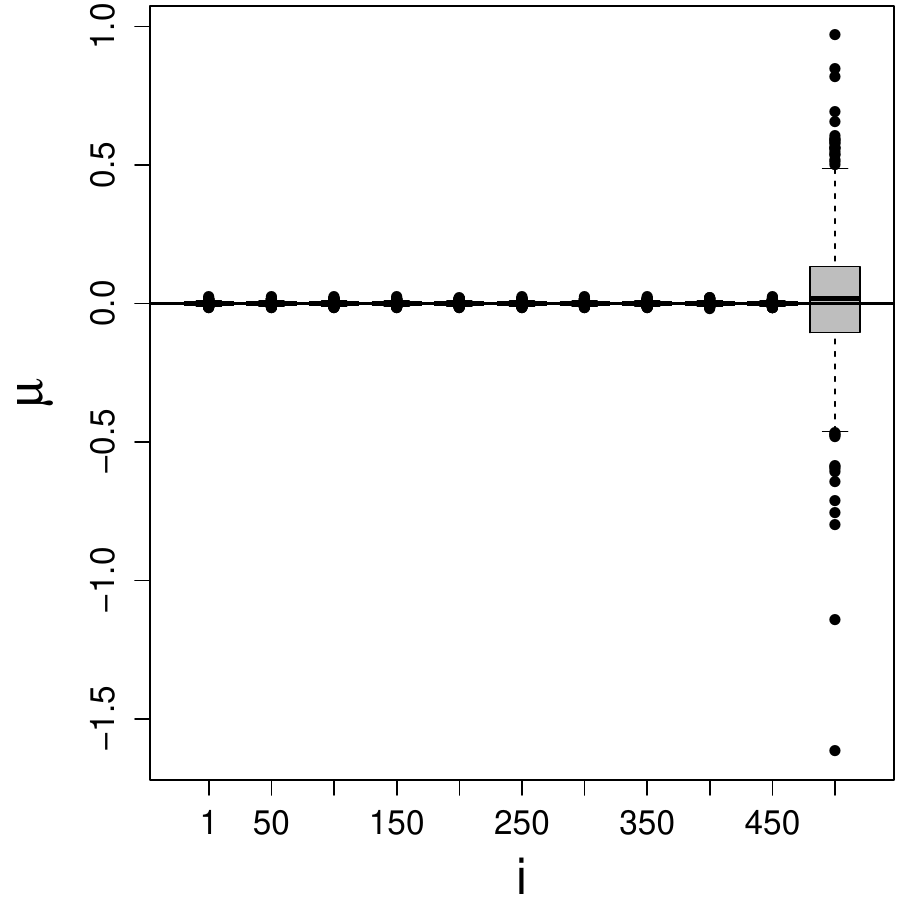} &
\includegraphics[width=0.33\textwidth]{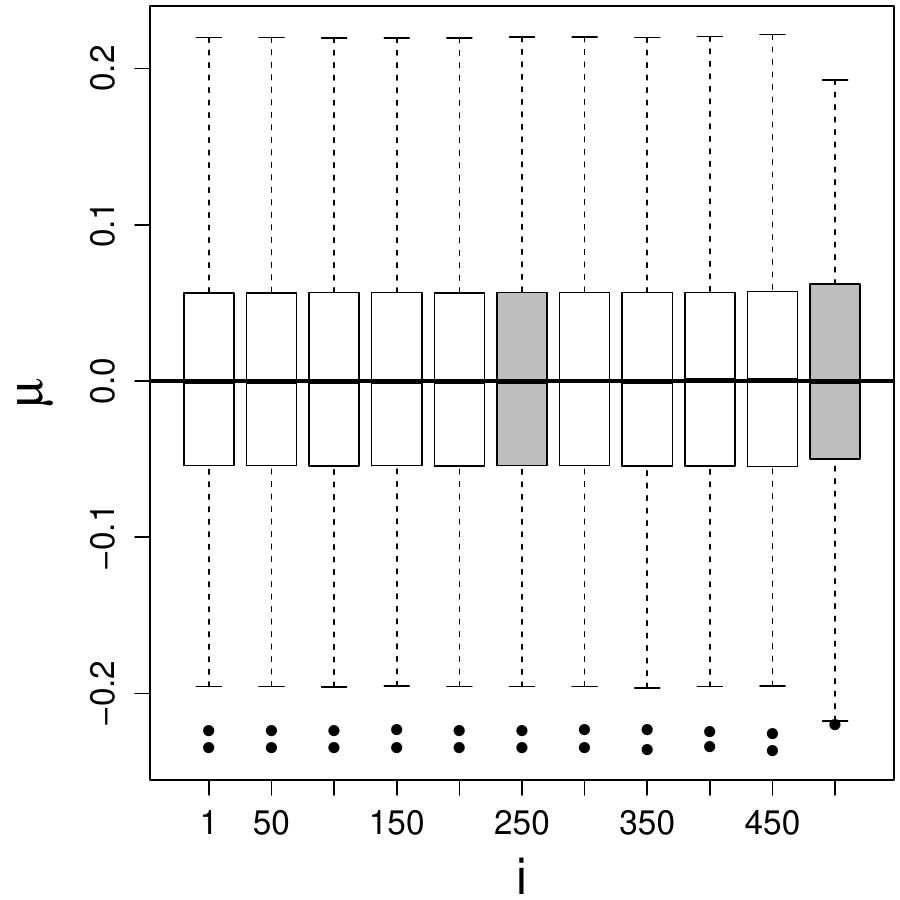} &
\includegraphics[width=0.33\textwidth]{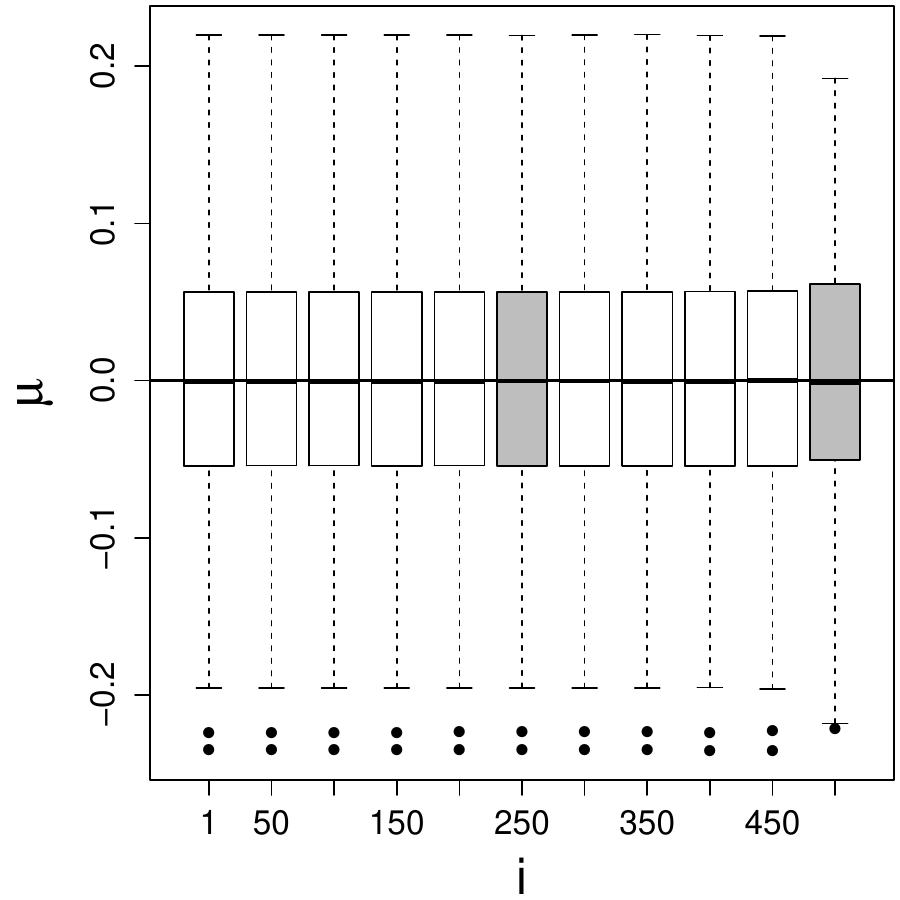} 
\\
\includegraphics[width=0.33\textwidth]{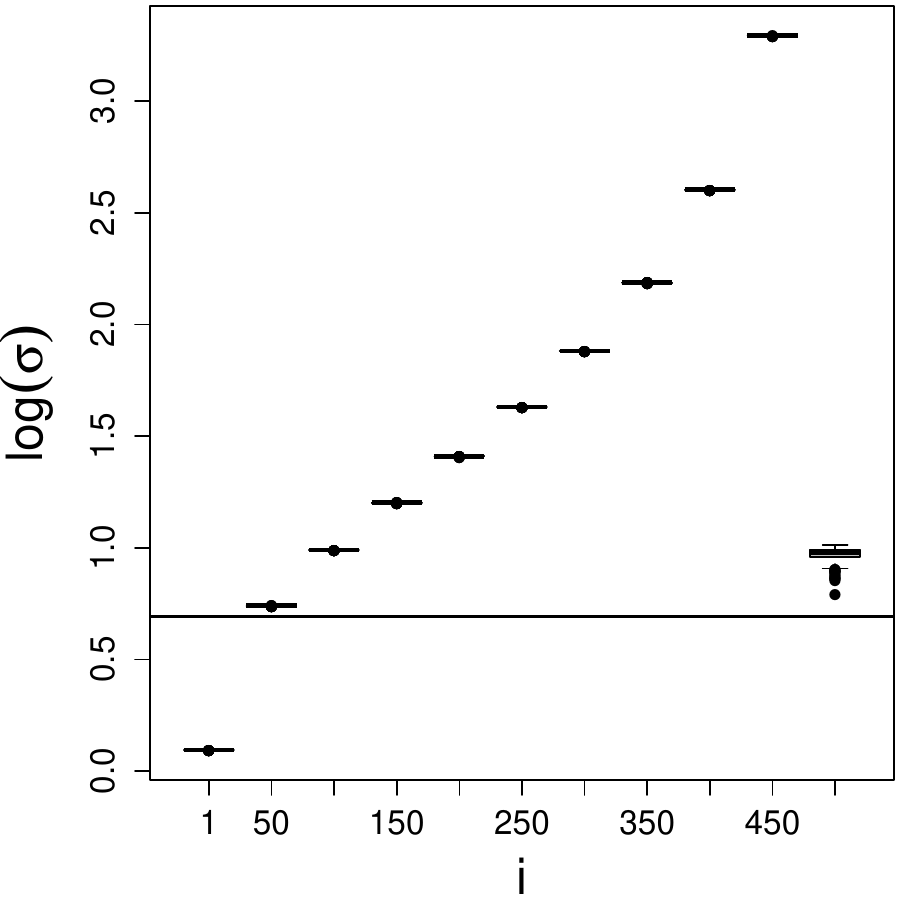} &
\includegraphics[width=0.33\textwidth]{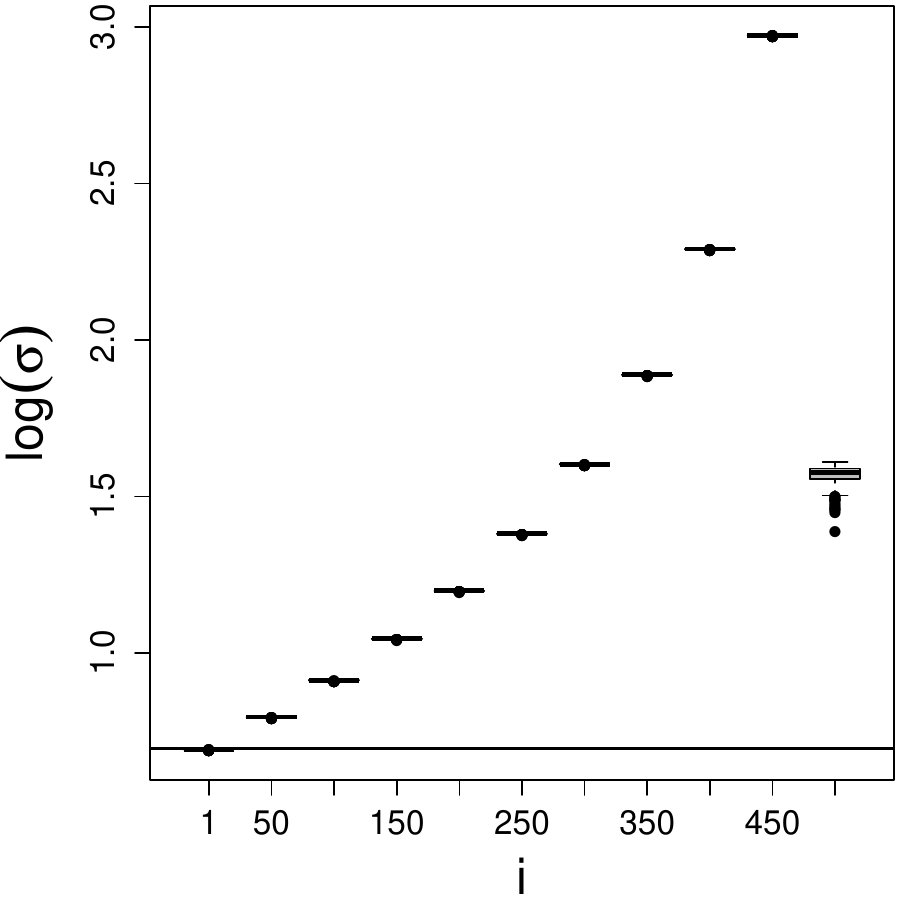} &
\includegraphics[width=0.33\textwidth]{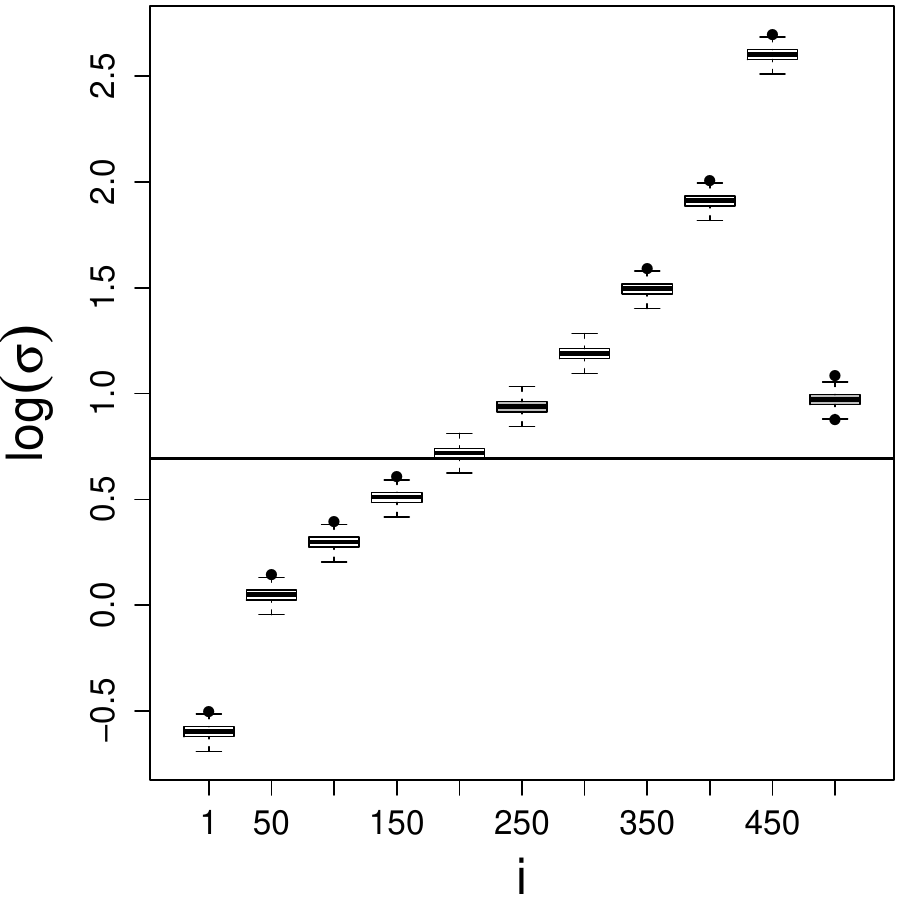} &
\includegraphics[width=0.33\textwidth]{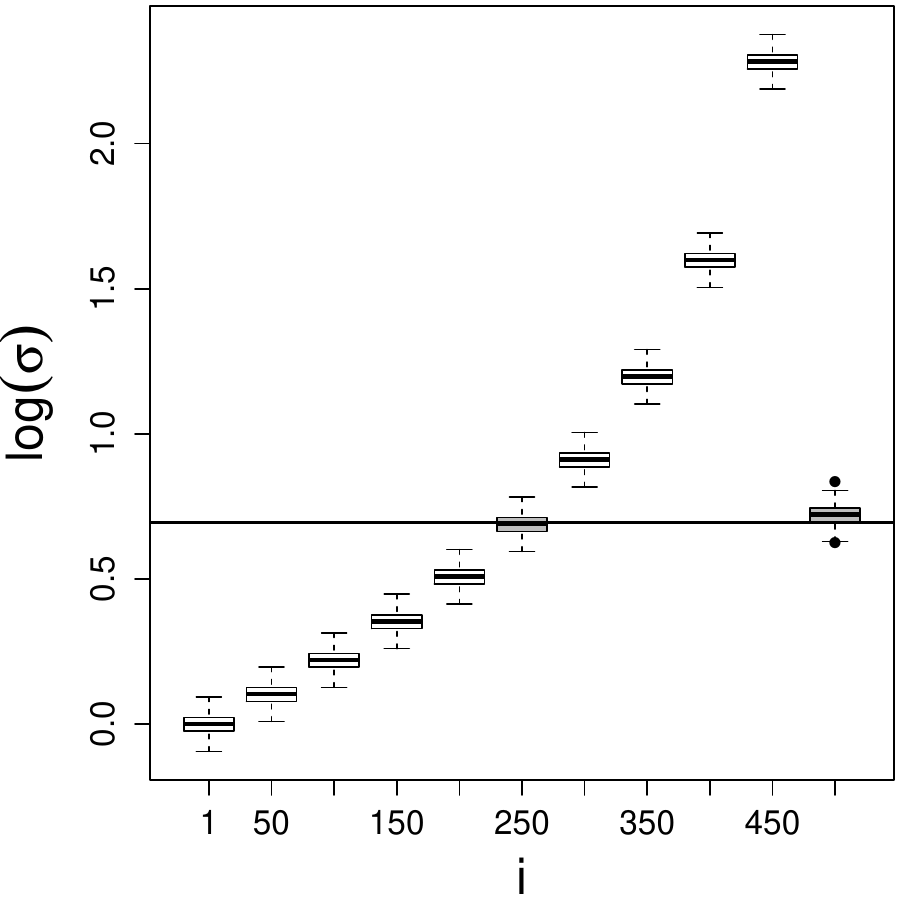} 
\\
\includegraphics[width=0.33\textwidth]{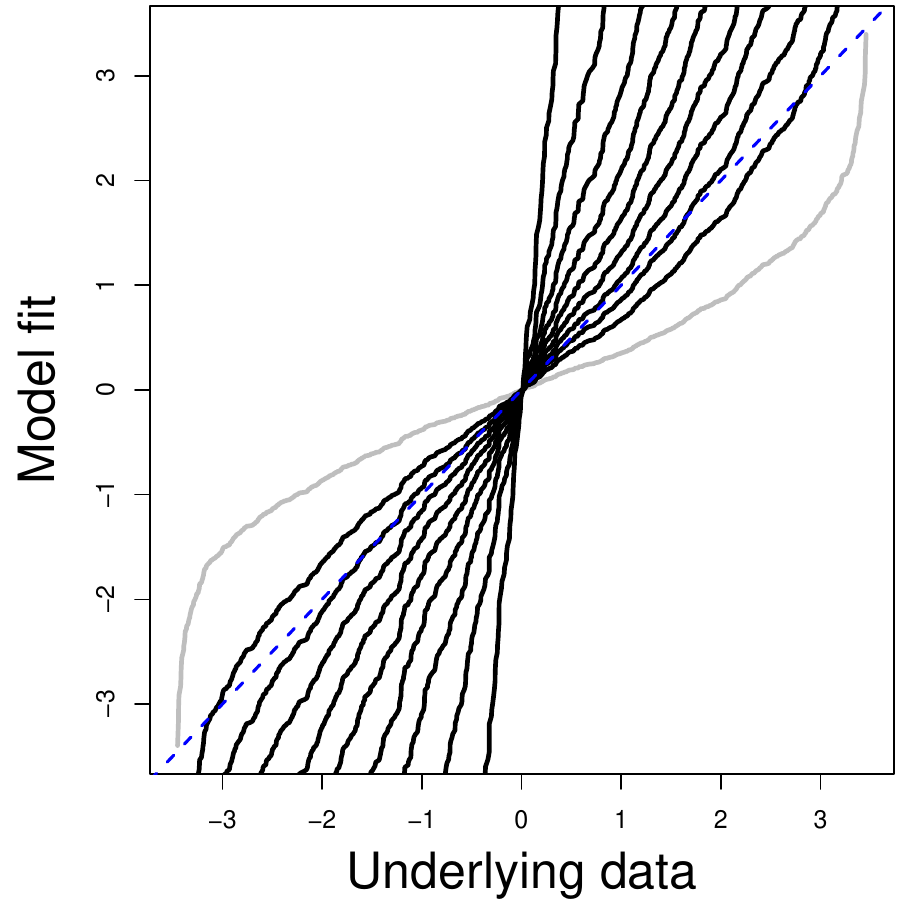} &
\includegraphics[width=0.33\textwidth]{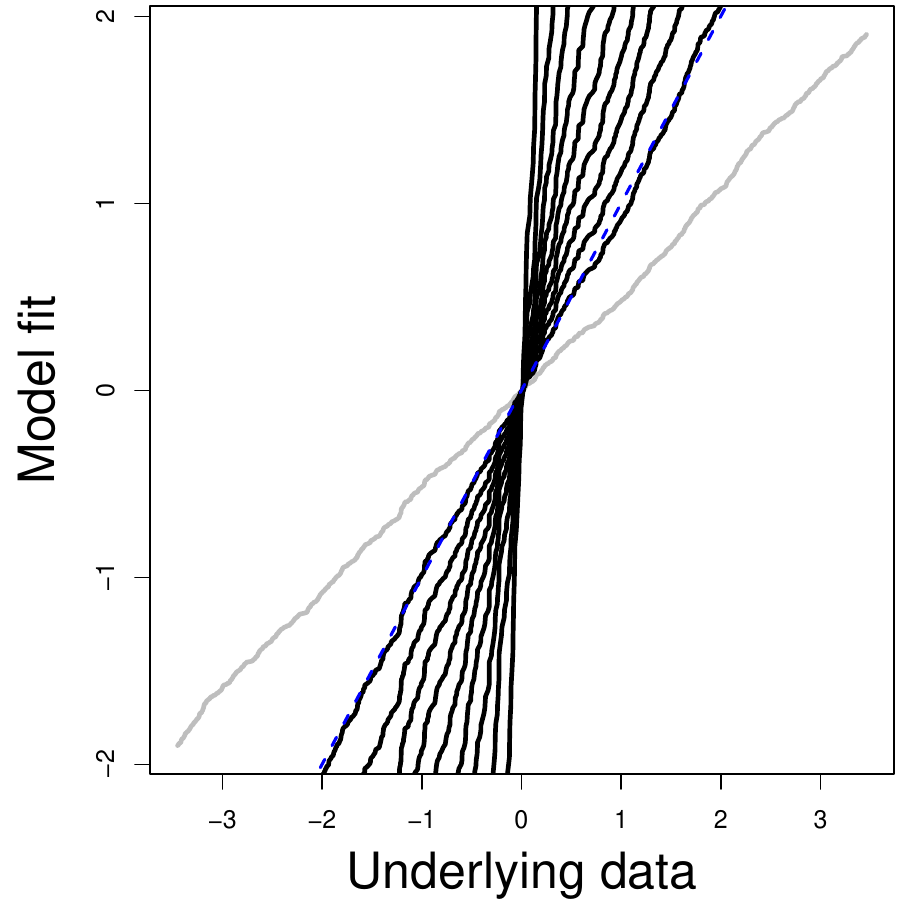} &
\includegraphics[width=0.33\textwidth]{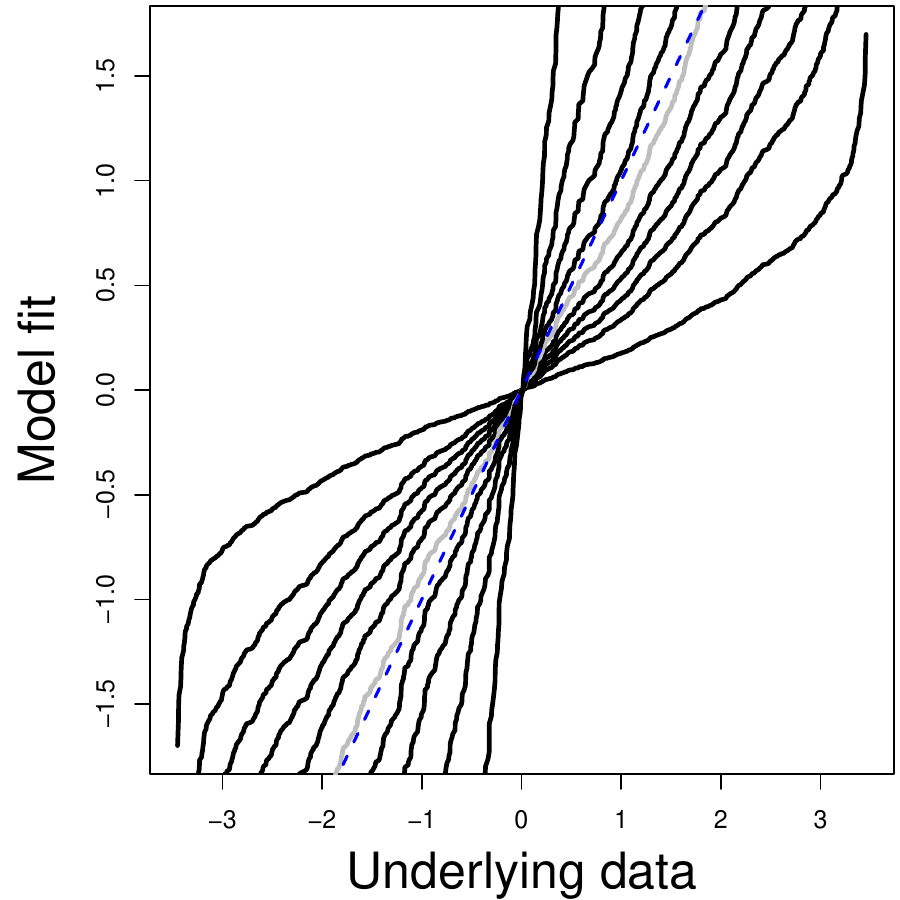} &
\includegraphics[width=0.33\textwidth]{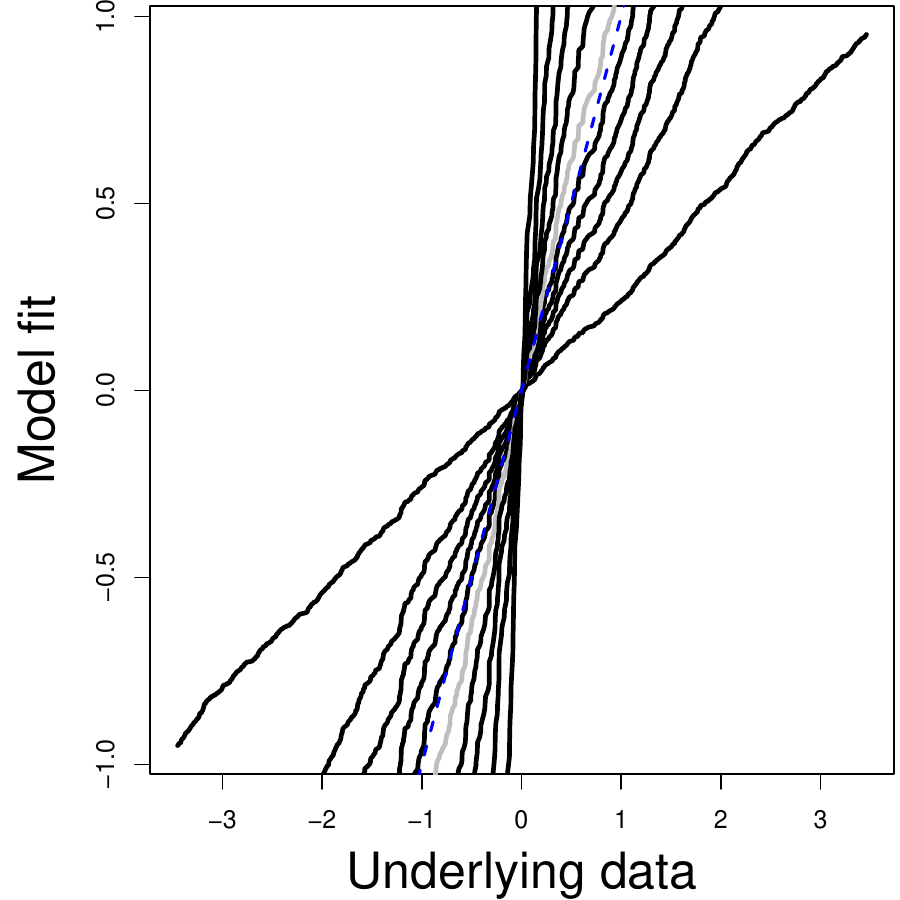} 
\end{array}
$
\caption{\small . 
{\bf As for Figure 6 in the Main Text, except that that the true data generating process is uniform.}
Boxplots of 500 replicate maximum likelihood estimates of $\mu$ and $\log\sigma$ under a uniform distribution with mean $\mu=0$ and standard deviation $\sigma=2$ as the true data generating process with $m=1000$, and assuming data aggregation function $\phi_i$, $i=1,50, 100, \ldots, 450$. The true aggregation functions are $\phi_1$ (left two columns) and $\phi_{250}$ (right two  columns). The models fitted are the normal (columns 1 \& 3) and uniform (columns 2 \& 4) distributions. 
In each panel, the rightmost boxplot  indicates the outcome using the dataset with  $5\%$ outliers.  
Bottom row shows quantile-quantile curves of the fitted model ($y$-axis) versus the empirical underlying data quantiles ($x$-axis). 
Grey curves indicate use of the correct $\varphi(\cdot)$ function. Dashed line denotes $y=x$.
}
\label{fig:robustness}
\end{figure}
\end{landscape}

%\section*{References}

\bibliography{intvllik}

\begin{thebibliography}{}

\bibitem[\protect\citeauthoryear{Aldous}{Aldous}{1985}]{Aldous:1985}
Aldous, D.~J. (1985).
\newblock Exchangeability and related topics.
\newblock In {\em {\'E}cole d'{\'E}t{\'e} de Probabilit{\'e}s de Saint-Flour
  XIII --- 1983}, pp.\  1--198. Berlin: Springer.

\bibitem[\protect\citeauthoryear{Andrieu and Roberts}{Andrieu and
  Roberts}{2009}]{Andrieu2009}
Andrieu, C. and G.~O. Roberts (2009).
\newblock The pseudo-marginal approach for efficient {M}onte {C}arlo
  computations.
\newblock {\em Ann. Stat.\/}~{\em 37\/}(2), 697--725.

\bibitem[\protect\citeauthoryear{Arroyo, Esp\'{\i}nola, and Mat\'{e}}{Arroyo
  et~al.}{2010}]{Arroyo2010}
Arroyo, J., R.~Esp\'{\i}nola, and C.~Mat\'{e} (2010).
\newblock {Different approaches to forecast interval time series: {A}
  comparison in finance}.
\newblock {\em Comput. Econ.\/}~{\em 37\/}(2), 169--191.

\bibitem[\protect\citeauthoryear{Beranger, Lin, and Sisson}{Beranger
  et~al.}{2018}]{beranger+ls18}
Beranger, B., H.~Lin, and S.~A. Sisson (2018).
\newblock New models for symbolic data analysis.
\newblock {\em https://arxiv.org/abs/1809.03659\/}.

\bibitem[\protect\citeauthoryear{Beresteanu, Molchanov, and
  Molinari}{Beresteanu et~al.}{2012}]{Beresteanu2012}
Beresteanu, A., I.~Molchanov, and F.~Molinari (2012).
\newblock Partial identification using random set theory.
\newblock {\em J. Econometrics\/}~{\em 166\/}(1), 17--32.

\bibitem[\protect\citeauthoryear{Beresteanu and Molinari}{Beresteanu and
  Molinari}{2008}]{Beresteanu2008}
Beresteanu, A. and F.~Molinari (2008).
\newblock {Asymptotic properties for a class of partially identified models}.
\newblock {\em Econometrica\/}~{\em 76\/}(4), 763--814.

\bibitem[\protect\citeauthoryear{Billard and Diday}{Billard and
  Diday}{2003}]{Billard2003}
Billard, L. and E.~Diday (2003).
\newblock From the statistics of data to the statistics of knowledge:
  {Symbolic} data analysis.
\newblock {\em J. Amer. Statist. Assoc.\/}~{\em 98\/}(462), 470--487.

\bibitem[\protect\citeauthoryear{Billard and Diday}{Billard and
  Diday}{2006}]{Billard2006}
Billard, L. and E.~Diday (2006).
\newblock {\em Symbolic data analysis: {Conceptual} statistics and data
  mining}.
\newblock Chichester: John Wiley \& Sons.

\bibitem[\protect\citeauthoryear{Brito and {Duarte Silva}}{Brito and {Duarte
  Silva}}{2012}]{Brito2012}
Brito, P. and A.~P. {Duarte Silva} (2012).
\newblock Modelling interval data with normal and skew-normal distributions.
\newblock {\em J. Appl. Stat.\/}~{\em 39\/}(1), 3--20.

\bibitem[\protect\citeauthoryear{Domingues, de~Souza, and Cysneiros}{Domingues
  et~al.}{2010}]{domingues+dc10}
Domingues, M. A.~O., R.~M. C.~R. de~Souza, and F.~J.~A. Cysneiros (2010).
\newblock A robust method for linear regression of symbolic interval data.
\newblock {\em Pattern Recogn. Lett.\/}~{\em 31}, 1991--1996.

\bibitem[\protect\citeauthoryear{Durrett}{Durrett}{2010}]{Durrett:2010}
Durrett, R. (2010).
\newblock {\em Probability: {T}heory and Examples}.
\newblock Cambridge: Cambridge University Press.

\bibitem[\protect\citeauthoryear{Evans and Swartz}{Evans and
  Swartz}{1995}]{Evans1995}
Evans, M. and T.~Swartz (1995).
\newblock Methods for approximating integrals in {S}tatistics with special
  emphasis on {B}ayesian integration problems.
\newblock {\em Statist. Sci.\/}~{\em 10\/}(3), 254--272.

\bibitem[\protect\citeauthoryear{Evans and Swartz}{Evans and
  Swartz}{2000}]{Evans2000}
Evans, M. and T.~Swartz (2000).
\newblock {\em Approximating integrals via {M}onte {C}arlo and deterministic
  methods}.
\newblock Oxford: Oxford University Press.

\bibitem[\protect\citeauthoryear{Fisher, O'Leary, Low-Choy, Mengersen,
  Knowlton, Brainard, and Caley}{Fisher et~al.}{2015}]{fisher+olmkbc15}
Fisher, R., R.~A. O'Leary, S.~Low-Choy, K.~Mengersen, N.~Knowlton, R.~E.
  Brainard, and M.~J. Caley (2015).
\newblock Species richness on coral reefs and the pursuit of convergent global
  estimates.
\newblock {\em Current Biol.\/}~{\em 25}, 500--505.

\bibitem[\protect\citeauthoryear{Geman and Geman}{Geman and
  Geman}{1984}]{Geman1984}
Geman, S. and D.~Geman (1984).
\newblock Stochastic relaxation, {G}ibbs distributions, and the {B}ayesian
  restoration of images.
\newblock {\em IEEE Trans. Pattern Anal. Mach. Intell.\/}~{\em 6\/}(6),
  721--741.

\bibitem[\protect\citeauthoryear{Geyer and Thompson}{Geyer and
  Thompson}{1992}]{Geyer1992}
Geyer, C.~J. and E.~A. Thompson (1992).
\newblock Constrained {M}onte {C}arlo maximum likelihood for dependent data.
\newblock {\em J. R. Stat. Soc. Ser. B. Stat. Methodol.\/}~{\em 54\/}(3),
  657--699.

\bibitem[\protect\citeauthoryear{Jeon, Ahn, and Park}{Jeon
  et~al.}{2015}]{jeon+ap15}
Jeon, Y., J.~Ahn, and C.~Park (2015).
\newblock A nonparametric kernel approach to inverval-valued data analysis.
\newblock {\em Technometrics\/}~{\em 57}, 566--575.

\bibitem[\protect\citeauthoryear{Le-Rademacher and Billard}{Le-Rademacher and
  Billard}{2011}]{Le-Rademacher2011}
Le-Rademacher, J. and L.~Billard (2011).
\newblock {Likelihood functions and some maximum likelihood estimators for
  symbolic data}.
\newblock {\em J. Statist. Plann. Inference\/}~{\em 141\/}(4), 1593--1602.

\bibitem[\protect\citeauthoryear{Lin, Caley, and Sisson}{Lin
  et~al.}{2017}]{lin+cs17}
Lin, H., M.~J. Caley, and S.~A. Sisson (2017).
\newblock Estimating global species richness using symbolic data meta analysis.
\newblock {\em https://arxiv.org/abs/1711.03202\/}.

\bibitem[\protect\citeauthoryear{Lyashenko}{Lyashenko}{1983}]{Lyashenko1983}
Lyashenko, N.~N. (1983).
\newblock {Statistics of random compacts in Euclidean space}.
\newblock {\em J. Soviet Math.\/}~{\em 21\/}(1), 76--92.

\bibitem[\protect\citeauthoryear{Matheron}{Matheron}{1975}]{Matheron1975}
Matheron, G. (1975).
\newblock {\em {Random sets and integral geometry}}.
\newblock New York: Wiley.

\bibitem[\protect\citeauthoryear{McLachlan and Jones}{McLachlan and
  Jones}{1988}]{mclachlan+j88}
McLachlan, G.~J. and P.~N. Jones (1988).
\newblock Fitting mixture models to grouped and truncated data via the {EM}
  algorithm.
\newblock {\em Biometrics\/}~{\em 44}, 571--578.

\bibitem[\protect\citeauthoryear{Molchanov}{Molchanov}{2005}]{Molchanov2005}
Molchanov, I. (2005).
\newblock {\em {Theory of random sets}}.
\newblock London: Springer.

\bibitem[\protect\citeauthoryear{Molchanov and Molinari}{Molchanov and
  Molinari}{2014}]{Molchanov2014}
Molchanov, I. and F.~Molinari (2014).
\newblock {Applications of random set theory in Econometrics}.
\newblock {\em Annu. Rev. Econ.\/}~{\em 6\/}(1), 229--251.

\bibitem[\protect\citeauthoryear{Moore}{Moore}{1966}]{Moore1966}
Moore, R.~E. (1966).
\newblock {\em Interval analysis}.
\newblock Englewood Cliffs: Prentice-Hall.

\bibitem[\protect\citeauthoryear{Neto and Carvalho}{Neto and
  Carvalho}{2010}]{limaneto+c10}
Neto, E. A.~L. and F.~A.~T. Carvalho (2010).
\newblock Constrained linear regression models for symbolic interval-valued
  variables.
\newblock {\em Comput. Statist. Data Anal.\/}~{\em 54}, 333--347.

\bibitem[\protect\citeauthoryear{Noirhomme-Fraiture and
  Brito}{Noirhomme-Fraiture and Brito}{2011}]{Noirhomme-Fraiture2011}
Noirhomme-Fraiture, M. and P.~Brito (2011).
\newblock Far beyond the classical data models: {S}ymbolic data analysis.
\newblock {\em Stat. Anal. Data Min.\/}~{\em 4\/}(2), 157--170.

\bibitem[\protect\citeauthoryear{Sisson, Fan, and Beaumont}{Sisson
  et~al.}{2018}]{sisson+fb18}
Sisson, S.~A., Y.~Fan, and M.~A. Beaumont (Eds.) (2018).
\newblock {\em Handbook of Approximate Bayesian Computation}.
\newblock Chapman and Hall/CRC Press.

\bibitem[\protect\citeauthoryear{Sun and Ralescu}{Sun and
  Ralescu}{2015}]{Sun2014}
Sun, Y. and D.~Ralescu (2015).
\newblock A normal hierarchical model and minimum contrast estimation for
  random intervals.
\newblock {\em Ann. Inst. Statist. Math.\/}~{\em 67\/}(2), 313--333.

\bibitem[\protect\citeauthoryear{Vardeman and Lee}{Vardeman and
  Lee}{2005}]{vardeman+l05}
Vardeman, S.~B. and C.-S. Lee (2005).
\newblock Likelihood-based statistical estimation from quantised data.
\newblock {\em IEEE Trans. Instrum. Meas.\/}~{\em 54}, 409--414.

\bibitem[\protect\citeauthoryear{Wei and Tanner}{Wei and
  Tanner}{1990}]{Wei1990}
Wei, G. C.~G. and M.~A. Tanner (1990).
\newblock A {M}onte {C}arlo implementation of the {EM} algorithm and the poor
  man's data augmentation algorithms.
\newblock {\em J. Amer. Statist. Assoc.\/}~{\em 85\/}(411), 699--704.

\bibitem[\protect\citeauthoryear{Xu}{Xu}{2010}]{xu10}
Xu, W. (2010).
\newblock {\em Symbolic data analysis: {Interval-valued} data regression}.
\newblock Ph.\ D. thesis, University of Georgia, USA.

\end{thebibliography}

\end{document}